\tikzstyle{gate}=[shape=rectangle, text height=1.5ex, text depth=0.25ex, yshift=0.5mm, fill=white, draw=black, minimum height=5mm, yshift=-0.5mm, minimum width=5mm, font={\small}, tikzit category=circuit]
\tikzstyle{Z dot}=[inner sep=0mm, minimum size=2mm, shape=circle, draw=black, fill={rgb,255: red,221; green,255; blue,221}, tikzit category=zx]
\tikzstyle{Z phase dot}=[minimum size=5mm, font={\footnotesize\boldmath}, shape=rectangle, rounded corners=2mm, inner sep=0.2mm, outer sep=-2mm, scale=0.8, tikzit shape=circle, draw=black, fill={rgb,255: red,221; green,255; blue,221}, tikzit draw=blue, tikzit category=zx]
\tikzstyle{X dot}=[Z dot, shape=circle, draw=black, fill={rgb,255: red,255; green,136; blue,136}, tikzit category=zx]
\tikzstyle{X phase dot}=[Z phase dot, tikzit shape=circle, tikzit draw=blue, fill={rgb,255: red,255; green,136; blue,136}, font={\footnotesize\boldmath}, tikzit category=zx]
\tikzstyle{hadamard}=[fill=yellow, draw=black, shape=rectangle, inner sep=0.6mm, minimum height=1.5mm, minimum width=1.5mm, tikzit category=zx]
\tikzstyle{vertex}=[inner sep=0mm, minimum size=1mm, shape=circle, draw=black, fill=black, tikzit category=misc]
\tikzstyle{variable photon}=[inner sep=0mm, minimum size=5mm, shape=circle, draw=blue, fill=white, tikzit category=misc]
\tikzstyle{hadamard edge}=[-, dashed, dash pattern=on 2pt off 0.5pt, thick, draw={rgb,255: red,68; green,136; blue,255}, fill=none]
\tikzstyle{box edge}=[-, dashed, dash pattern=on 2pt off 0.5pt, thick, draw={rgb,255: red,203; green,192; blue,225}]
\tikzstyle{Y fusion}=[-, dashed, dash pattern=on 2pt off 0.5pt, thick, draw={rgb,255: red,68; green,136; blue,255}]
\tikzstyle{X fusion}=[-, dash pattern=on 2pt off 0.5pt, line width=1.6pt, shorten <=-0.17mm, shorten >=-0.17mm, draw=black, fill=none, tikzit draw={rgb,255: red,21; green,93; blue,5}]
\tikzstyle{X box}=[X phase dot, shape=rectangle, tikzit shape=rectangle, tikzit draw=blue, fill={rgb,255: red,255; green,136; blue,136}, font={\footnotesize\boldmath}, tikzit category=zx]
\tikzstyle{diredge}=[->]
\tikzstyle{highlighted edge}=[-, blue, very thick]
\newtheorem{theorem}{Theorem}
\newtheorem{lemma}{Lemma}
\newtheorem{algorithm}{Algorithm}
\newtheorem{proposition}{Proposition}
\newtheorem{remark}{Remark}
\theoremstyle{definition}
\newtheorem{definition}{Definition}
\newcommand{\tx}{\textrm}
\newcommand{\M}{\mathcal}
\newcommand{\ceil}[1]{\left\lceil#1\right\rceil}
\newcommand{\floor}[1]{\left\lfloor#1\right\rfloor}
\newcommand{\Mod}{\ \mathrm{mod}\ }
\newcommand{\problem}[1]{{\normalfont\texttt{#1}}}
\newcommand{\Odd}{\textrm{Odd}}
\title{Finding trail covers: near-optimal decompositions of graph states as linear fusion networks}
\author[1]{William Cashman}
\author[2]{Giovanni de Felice}
\author[1]{Aleks Kissinger}
\affil[1]{University of Oxford, United Kingdom}
\affil[2]{Quantinuum, 17 Beaumont Street, Oxford, OX1 2NA, United Kingdom}
\begin{document}

\maketitle

\begin{abstract}
Quantum compilation requires the development of new algorithms that optimise the cost of implementing quantum computations on physical hardware.
Often this gives rise to problems which are asymptotically hard to solve classically, and for which heuristics and reductions to known problems are of great practical use. 
In this paper, we study three graph-theoretic problems which can be seen as generalisations of the Eulerian and Hamiltonian path problems. 
These arise in photonic implementations of measurement-based quantum computing, where graph states are constructed by fusing bounded-length linear resource states. 
Since the fusion operation succeeds with probability smaller than one, we wish to minimise the number of fusions required to build a particular graph state
and this corresponds to finding a minimal path or trail cover of the graph.
We show that these covering problems are NP-hard in most cases and give heuristic algorithms for finding trail covers in graphs including a reduction to the travelling salesman problem.
We propose new rewrite strategies for graph states that reduce the number of fusions required to build a given graph. 
Finally, we apply these algorithms to the compilation of photonic fusion networks and provide a series of benchmarks showing the performance of our algorithms on common error-correcting codes and circuits from the QASMBench set. 
\end{abstract}

\section{Introduction}\label{sec_introduction}

In the measurement-based or ``one-way'' model \cite{raussendorf_one-way_2001} of quantum computing (MBQC),
the computation consists of preparing an entangled graph state $G$ over multiple qubits, 
and performing a sequence of adaptive measurements on the nodes of this graph.
A particular feature of MBQC is that every qubit in the resource state is only measured once and interacts only with its nearest neighbours on the graph.
This makes it particularly suitable to photonic architectures since photons are destroyed upon measurement.
In practice, the large graph $G$ must be built by entangling smaller resource states.
However, as entanglement can only be generated probabilistically in linear optics \cite{stanisic_generating_2017},
the efficient construction of large graph states poses a serious challenge to photonic architectures.

Current approaches to photonic graph state generation are based on a particular type of entangling operation, 
known as fusion measurement \cite{Browne_2005}.
Many small resource states are produced and are entangled together by a series of multi-qubit fusion operations \cite{repub_fbqc}.
We consider access to a source of photons, producing a constant-size resource state at every time step. 
Photons from different sources or time-steps are then fused together to construct a given graph $G$.
Usually, the photon source only provides states with limited entanglement --- such as star graphs, lines or small polyhedra 
--- and the fusion pattern determines long-range correlations.
The information of which qubits from different resource states are to be fused together is called \emph{fusion network}.

We consider linear fusion networks, where the resource states have a one dimensional entanglement structure.
These include n-GHZ states, linear cluster states and variations of the two \cite{cite_ft_quantum_computing_with_nondeterminism, cite_photonic_architecture_with_ghz, de_jong_extracting_2024}.
They can be generated with high probability $p_R \approx 1$ from matter-based photon sources such as ions \cite{blinov_observation_2004} and quantum dots \cite{istrati_sequential_2020,quant-dots}, 
or from multiple SPDC sources and linear optics using active multiplexing to boost $p_R$ \cite{bartolucci_creation_2021}.
Fusion measurements are a more costly operation as they succeed with probability $p_S$ smaller than $p_R$.
As a consequence, we wish to minimise the number of fusions required to build a given graph state to maximise the probability of successfully executing the MBQC pattern.

We use two types of fusion operations which can be implemented on pairs of photonic qubits using linear optics only.
The first, known as Type II fusion \cite{Browne_2005} and used in \cite{og_fbqc}, has the effect of merging two nodes of a graph state into one, as depicted in Figure~\ref{fig:X-fusion}.
We call it $X$ fusion as it corresponds to adding a qubit in the graph, connected to the target nodes, and measuring it in the Pauli $X$ basis.
In the success case, this implements a $ZZ$ measurement on the target qubits.
The second fusion operation, called $Y$ fusion and depicted in Figure~\ref{fig:Y-fusion}, is used in \cite{lim_repeat-until-success_2005, gliniasty_spin-optical_2024} to implement linear optical CZ gates probabilistically.
In the success case, it has the effect of performing a CZ gate on the target qubits. 
Up to local Clifford operations, this corresponds to adding a node in the graph and measuring it in the Pauli $Y$ basis.
\begin{figure}[h]
    \centering
    \begin{tikzpicture}
	\begin{pgfonlayer}{nodelayer}
		\node [style=vertex] (0) at (-0.75, 0) {};
		\node [style=vertex] (1) at (0, 0) {};
		\node [style=vertex] (2) at (0.75, 0) {};
		\node [style=vertex] (3) at (-0.75, -1) {};
		\node [style=vertex] (4) at (0, -1) {};
		\node [style=vertex] (5) at (0.75, -1) {};
		\node [style=none] (6) at (1.5, -0.5) {:=};
		\node [style=vertex] (7) at (2.25, 0) {};
		\node [style=vertex] (8) at (3, 0) {};
		\node [style=vertex] (9) at (3.75, 0) {};
		\node [style=vertex] (10) at (2.25, -1) {};
		\node [style=vertex] (11) at (3, -1) {};
		\node [style=vertex] (12) at (3.75, -1) {};
		\node [style=vertex] (13) at (3.5, -0.5) {};
		\node [style=none] (14) at (3.75, -0.5) {$X$};
		\node [style=none] (15) at (4.5, -0.5) {=};
		\node [style=vertex] (16) at (5.25, 0) {};
		\node [style=vertex] (18) at (6.75, 0) {};
		\node [style=vertex] (19) at (5.25, -1) {};
		\node [style=vertex] (20) at (6, -0.5) {};
		\node [style=vertex] (21) at (6.75, -1) {};
	\end{pgfonlayer}
	\begin{pgfonlayer}{edgelayer}
		\draw (5) to (4);
		\draw (4) to (3);
		\draw (0) to (1);
		\draw (1) to (2);
		\draw [style=X fusion] (1) to (4);
		\draw (12) to (11);
		\draw (11) to (10);
		\draw (7) to (8);
		\draw (8) to (9);
		\draw (11) to (13);
		\draw (13) to (8);
		\draw (21) to (20);
		\draw (20) to (19);
		\draw (16) to (20);
		\draw (20) to (18);
	\end{pgfonlayer}
\end{tikzpicture}
    \caption{An X fusion (or Type-II) corresponds to adding a node in the graph measured in the $X$ basis.}
    \label{fig:X-fusion}
\end{figure}
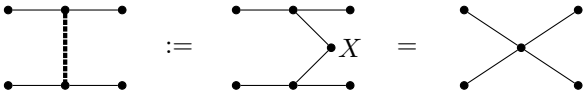
\begin{figure}[h]
    \centering
    \begin{tikzpicture}
	\begin{pgfonlayer}{nodelayer}
		\node [style=vertex] (0) at (0.75, 0) {};
		\node [style=vertex] (1) at (1.5, 0) {};
		\node [style=vertex] (2) at (2.25, 0) {};
		\node [style=vertex] (3) at (0.75, -1) {};
		\node [style=vertex] (4) at (1.5, -1) {};
		\node [style=vertex] (5) at (2.25, -1) {};
		\node [style=none] (6) at (3, -0.5) {:=};
		\node [style=vertex] (7) at (3.75, 0) {};
		\node [style=vertex] (8) at (4.5, 0) {};
		\node [style=vertex] (9) at (5.25, 0) {};
		\node [style=vertex] (10) at (3.75, -1) {};
		\node [style=vertex] (11) at (4.5, -1) {};
		\node [style=vertex] (12) at (5.25, -1) {};
		\node [style=vertex] (13) at (5, -0.5) {};
		\node [style=none] (14) at (5.25, -0.5) {$Y$};
		\node [style=none] (15) at (6, -0.5) {=};
		\node [style=vertex] (16) at (6.75, 0) {};
		\node [style=vertex] (17) at (8.25, 0) {};
		\node [style=vertex] (18) at (6.75, -1) {};
		\node [style=vertex] (20) at (8.25, -1) {};
		\node [style=vertex] (21) at (7.5, -1) {};
		\node [style=vertex] (22) at (7.5, 0) {};
		\node [style=none] (23) at (4.5, 0.375) {$\frac{\pi}{2}$};
		\node [style=none] (24) at (4.5, -1.375) {$\frac{\pi}{2}$};
	\end{pgfonlayer}
	\begin{pgfonlayer}{edgelayer}
		\draw (5) to (4);
		\draw (4) to (3);
		\draw (0) to (1);
		\draw (1) to (2);
		\draw (12) to (11);
		\draw (11) to (10);
		\draw (7) to (8);
		\draw (8) to (9);
		\draw (11) to (13);
		\draw (13) to (8);
		\draw (18) to (21);
		\draw (21) to (20);
		\draw (17) to (22);
		\draw (22) to (16);
		\draw (22) to (21);
		\draw [style=Y fusion] (4) to (1);
	\end{pgfonlayer}
\end{tikzpicture}
    \caption{A Y fusion (or CZ) corresponds to adding a node in the graph measured in the $Y$ basis, and rotating the target qubits with a $Z$ phase of $\frac{\pi}{2}$.}
    \label{fig:Y-fusion}
\end{figure}
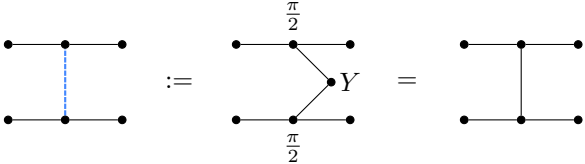

Given additional resource entanglement, both fusion operations admit protocols to increase the probability of success $p_S$ \cite{lim_repeat-until-success_2005}, 
although X fusion is sometimes preferred as its boosting protocol does not require active switching \cite{lee_nearly_2015, hilaire_enhanced_2024}.
Note that X and Y fusions cannot be related by local Clifford operations on the target qubits.
They are in fact canonical representatives of entangling stabilizer measurements that can be performed with linear optics.
Moreover, any decomposition of an MBQC graph (with flow structure) as a fusion network of $X$ and $Y$ fusions admits a deterministic measurement pattern to implement the target graph \cite{felice_fusion_2024}.

The above discussion motivates the following compilation problem: given a graph $G$ find a linear fusion network that implements the graph using the minimal number of $X$ or $Y$ fusions.
Translating this problem in the language of graph theory, we want to find a set of edge-disjoint trails on the graph, which covers every node of the graph, and minimises the number of intersections and un-covered edges. 
We call such a collection of trails a \textit{trail cover} and show in Section~\ref{sec:problem} that minimising fusions is equivalent to finding a trail cover of the graph with the fewest number of trails.

This graph-theoretic formulation of the problem allows us to relate it to other problems whose complexity is known in Section~\ref{sec:complexity}.
We consider different variations of the minimum trail cover problem. 
First by restricting the allowed fusion operations to only $X$ or only $Y$, we obtain generalisations of the Eulerian trail and Hamiltonian path problems, respectively.
We show that the case for $X$ fusions can be solved in polynomial time while the case for $Y$ fusions is NP-hard. 
We then use these results to show that the case where both $X$ and $Y$ fusions are allowed, corresponding to the minimum trail cover problem, is also NP-hard.

In Section~\ref{sec:approximation} we provide efficient algorithms for approximating the solutions of these NP-hard problems.
We give a heuristic strategy based on subdividing Eulerian trails for solving the minimum trail decomposition and trail cover problems.
We reduce the minimum trail cover problem to the Travelling Salesman Problem (TSP), allowing us to use a TSP solver for large instances of the problem.

The rewriting theory of graph states \cite{duncan_rewriting_2010, duncan_graph-theoretic_2019} allows us to consider equivalent graphs that have different topology but implement the same quantum state. 
We are particularly interested in rewrites, such as local complementation, that preserve \emph{flow structure}\cite{browne_generalized_2007,backens_there_2021,pauli_preserving_rewrites} on the graph, to ensure that the resulting MBQC computation is deterministic.
For the graph state generation problem, this means that it is sufficient to implement any graph $G'$ such that we can rewrite $G \to G'$.
In Section~\ref{sec:graphrewrites}, we thus propose new rewrite strategies to reduce the number of trails in minimum trail covers.

Finally, in Section~\ref{sec:benchmarks}, we evaluate the performance of the developed graph rewrites and trail-cover finding algorithms 
in a dataset of common error-correcting codes \cite{cite_qec_benchmarks} and circuits from the QASM benchmark set~\cite{cite_qasm_bench}.
We prove a theoretical lower bound to the number of fusions required to decompose a graph state into linear resource states with a bounded number of photons.
By allowing both X and Y fusions, our algorithms achieve results remarkably close to the lower bound, 
offering improvements even in small graphs such as the Shor-(2, 2) encoded $6$-ring studied in \cite{og_fbqc,wein_minimizing_2024}.
Our approach extends previous work on graph state generation \cite{cite_qec_benchmarks, cite_vienna_team} with resource states of arbitrary length, different fusion types, and new rewriting heuristics.
We moreover address the probabilistic aspects of fusion measurements by optimising the number of fusion attempts 
in repeat-until-success schemes with photon-bounded resource states \cite{cite_rus_original,lee_nearly_2015, gliniasty_spin-optical_2024},
obtaining resource estimates for near-deterministic implementations of MBQC graphs.

\section{Problem formulation}\label{sec:problem}
We define $XY$-fusion networks using the framework established in~\cite{felice_fusion_2024}.

\begin{definition}[Fusion Network~\cite{felice_fusion_2024}]\label{def_fusion_network}
An XY-fusion network $\M{F} = (G, I, O, X, Y, \lambda, \alpha)$ is specified by:
\begin{enumerate}[1.]
\item a labelled open graph $(G, I, O, \lambda, \alpha)$ where $G = (V, E)$, and
\item sets of unordered pairs $X, Y \subseteq \{\{a, b\}\ |\ a, b \in V\}$ corresponding to X fusions and Y fusions on the nodes respectively.
\end{enumerate}
\end{definition}

We define \textit{$X$-fusion networks} and \textit{$Y$-fusion networks} as $XY$-fusion networks comprised solely $X$ fusions and $Y$ fusions respectively.
A \textit{linear $XY$-fusion network} is one whose corresponding graph is a disjoint union of lines.

We say that the $XY$-fusion network $\M{F} = (G, I, O, X, Y, \lambda, \alpha)$ \textit{implements the labelled open graph} $(G^\prime, I, O, \lambda, \alpha)$ where $G^\prime$ is defined as
\[
G^\prime = \frac{(V, E + Y)}{u \sim v \tx{ if } \{u, v\} \in X}
\]
where $+$ denotes the disjoint union. Informally, $G^\prime$ is constructed by converting $Y$ fusions into edges and merging vertices that belong to $X$ fusions.

The graph below illustrates how a target graph can be implemented using an $XY$ fusion network, 
where dotted edges represent $X$ fusions and dashed blue edges represent $Y$ fusions.

\begin{figure}[H]
\centering
\begin{tikzpicture}
	\begin{pgfonlayer}{nodelayer}
		\node [style=vertex] (0) at (-1, 2) {};
		\node [style=vertex] (1) at (-1, 1) {};
		\node [style=vertex] (2) at (0, 1) {};
		\node [style=vertex] (3) at (0, 2) {};
		\node [style=vertex] (4) at (0, 3) {};
		\node [style=vertex] (5) at (1, 3) {};
		\node [style=vertex] (6) at (1, 2) {};
		\node [style=vertex] (7) at (3, 2.25) {};
		\node [style=vertex] (8) at (2.75, 0.75) {};
		\node [style=vertex] (9) at (4.5, 0.75) {};
		\node [style=vertex] (10) at (4, 2.25) {};
		\node [style=vertex] (11) at (4, 3.25) {};
		\node [style=vertex] (12) at (5.5, 3.25) {};
		\node [style=vertex] (13) at (5.5, 1.75) {};
		\node [style=vertex] (14) at (2.75, 1.75) {};
		\node [style=vertex] (15) at (3.75, 0.75) {};
		\node [style=vertex] (16) at (4.5, 1.75) {};
		\node [style=none] (17) at (2, 2) {$\to$};
	\end{pgfonlayer}
	\begin{pgfonlayer}{edgelayer}
		\draw (2) to (6);
		\draw (6) to (5);
		\draw (5) to (4);
		\draw (4) to (3);
		\draw (3) to (6);
		\draw (2) to (3);
		\draw (3) to (0);
		\draw (0) to (1);
		\draw (1) to (2);
		\draw (13) to (12);
		\draw (11) to (10);
		\draw (10) to (7);
		\draw (15) to (8);
		\draw (8) to (14);
		\draw (16) to (9);
		\draw (16) to (13);
		\draw [style=Y fusion] (9) to (13);
		\draw [style=Y fusion] (11) to (12);
		\draw [style=X fusion] (16) to (10);
		\draw [style=X fusion] (15) to (9);
		\draw [style=X fusion] (14) to (7);
	\end{pgfonlayer}
\end{tikzpicture}
\end{figure}

We refer to such a collection of trails as a \textit{trail cover} and show that every linear XY fusion network is in one-to-one correspondance with a trail cover.

\begin{definition}[Trail cover]\label{def_trail_cover}
A \textit{trail} in a graph is a sequence of vertices where adjacent vertices are adjacent in the graph and edges may not be repeated. The length of a trail $T$, denoted $|T|$, is the number of edges in the trail.
A \textit{trail cover} of a graph $G$ is a set of edge-disjoint trails that traverse each vertex of $G$ at least once.
\end{definition}

A linear $XY$ fusion network corresponds to a trail cover of $G$ where each trail represents a linear resource state. 
For every vertex traversed by multiple trails, we employ an $X$ fusion to merge the corresponding nodes. 
We use a $Y$ fusion for every edge in the graph that is absent from the trail cover.

In the special case of linear fusion networks using only $Y$ fusions, trails cannot intersect at vertices,
and consequently the trail cover constitutes a path cover of the graph.

\begin{definition}[Path cover]
A \textit{path} in a graph is a sequence of vertices such that adjacent vertices in the sequence are adjacent in the graph, and vertices are not repeated.
A \textit{path cover} of $G$ is a set of vertex-disjoint paths where every vertex in $G$ belongs to exactly one path in the cover.
\end{definition}

Linear $X$ fusion networks lack $Y$ fusions to add edges, so every edge in the target graph state must be implemented by a resource state. This constraint means the corresponding trail cover is a trail decomposition of the graph.

\begin{definition}[Trail Decomposition]\label{def_trail_decomposition}
A \textit{trail decomposition} of a graph is a set of edge-disjoint trails that together traverse every edge of the graph.
\end{definition}

Given a graph $G$ and a trail cover $\M{C}$, we denote the number of intersections between trails as $X(G, \M{C})$ (where $k$ trails traversing the same vertex contribute $k-1$ intersections) and the number of edges not traversed by any trail as $Y(G,\M{C})$.
Equivalently, $X(G, \M{C})$ and $Y(G, \M{C})$ represent the number of $X$ and $Y$ fusions required to implement $G$ using trail cover $\M{C}$, respectively.
We can now state the main theorem of this section.

\begin{theorem}\label{thm_num_fusions}
Given a graph $G = (V, E)$ and trail cover $\M{C}$ of $G$:
\[
X(G, \M{C}) + Y(G, \M{C}) = |E| - |V| + |\M{C}|.
\]
\end{theorem}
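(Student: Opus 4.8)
The plan is to count the $X$ fusions and $Y$ fusions separately and then add. The $Y$ count is immediate: the trails of $\M{C}$ are edge-disjoint, so the edges traversed by some trail number exactly $\sum_{T \in \M{C}} |T|$, and therefore $Y(G, \M{C}) = |E| - \sum_{T \in \M{C}} |T|$.

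For the $X$ count I would view each trail $T = (v_0, v_1, \dots, v_{|T|})$ as making $|T|+1$ \emph{vertex-visits}, one for each entry of the sequence, counted with multiplicity; thus a trail that revisits a vertex (in particular a closed trail) is counted once per visit, reflecting that the corresponding linear resource state carries a distinct node for each such visit. For $v \in V$ let $k_v$ be the total number of vertex-visits to $v$ over all trails of $\M{C}$. Since $\M{C}$ is a cover, $k_v \ge 1$ for every $v$, and merging $k_v$ copies of a vertex costs $k_v - 1$ fusions, so $X(G, \M{C}) = \sum_{v \in V}(k_v - 1)$. Double-counting the vertex-visits gives $\sum_{v \in V} k_v = \sum_{T \in \M{C}}(|T| + 1) = |\M{C}| + \sum_{T \in \M{C}} |T|$, hence $X(G, \M{C}) = |\M{C}| + \sum_{T \in \M{C}} |T| - |V|$.

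Adding the two expressions, the $\sum_{T \in \M{C}} |T|$ terms cancel, leaving $X(G, \M{C}) + Y(G, \M{C}) = |E| - |V| + |\M{C}|$, as required.

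The only delicate point is the bookkeeping for closed trails and trails that revisit a vertex: one must commit to counting each revisit as a separate vertex-visit --- equivalently, each linear resource state contributes $|T|+1$ nodes, every $X$ fusion reduces the node count by one, and so the number of $X$ fusions is the total node count $\sum_{T \in \M{C}}(|T|+1)$ minus the final vertex count $|V|$. Once this convention is pinned down, the double-counting identity does all the work and the rest is routine arithmetic.
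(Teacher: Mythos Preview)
Your proof is correct and follows essentially the same counting argument as the paper: both use that each trail contributes $|T|+1$ vertex-visits (with multiplicity) and $|T|$ edges, then invoke the definitions $\sum_T V(T) = |V| + X(G,\M{C})$ and $\sum_T E(T) = |E| - Y(G,\M{C})$ to conclude. Your explicit discussion of the revisit convention is a helpful clarification but does not constitute a different approach.
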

\begin{proof}
Let $T \in \M{C}$ be a trail and let $E(T)$ denote the number of edges in $T$ and $V(T)$ denote the number of vertices in $T$ (counting repeated vertices multiple times). Then $V(T) = E(T) + 1$, and summing over all trails yields
\begin{equation}\label{eq_sum_over_trails}
\sum_{T \in \M{C}} V(T) = \sum_{T \in \M{C}} (E(T) + 1) = \sum_{T \in \M{C}} E(T) + |\M{C}|.
\end{equation}
By the definitions of $Y(G, \M{C})$ and $X(G, \M{C})$, we have $\sum_{T \in \M{C}} E(T) = |E| - Y(G, \M{C})$ and $\sum_{T \in \M{C}} V(T) = |V| + X(G, \M{C})$.

Substituting into~\eqref{eq_sum_over_trails} gives
\[
X(G, \M{C}) + Y(G, \M{C}) = |E| - |V| + |\M{C}|.
\]
\end{proof}

This result naturally extends to path covers and trail decompositions, which are the special cases where $X(G, \M{C}) = 0$ and $Y(G, \M{C}) = 0$ respectively.
For a path cover $\M{C}$ of $G$, we have
\[
Y(G, \M{C}) = |E| - |V| + |\M{C}|.
\]
For a trail decomposition $\M{C}$ of $G$, we have
\[
X(G, \M{C}) = |E| - |V| + |\M{C}|.
\]

A direct consequence of Theorem~\ref{thm_num_fusions} is that minimizing the total number of $X$ and $Y$ fusions in a trail cover is equivalent to minimizing the number of trails.
This observation leads us to the following optimization problems.

\begin{definition}
\problem{MinTrailCover}

\noindent\textbf{\textit{Input:}} A graph $G$.\\
\textbf{\textit{Output:}} A trail cover of $G$ with the minimum number of trails.
\end{definition}

\begin{definition}
\problem{MinPathCover}

\noindent\textbf{\textit{Input:}} A graph $G$.\\
\textbf{\textit{Output:}} A path cover of $G$ with the minimum number of paths.
\end{definition}

\begin{definition}\label{def_mtd}
\problem{MinTrailDecomposition}

\noindent\textbf{\textit{Input:}} A graph $G$.\\
\textbf{\textit{Output:}} A trail decomposition of $G$ with the minimum number of trails.
\end{definition}

In practice, resource states have constant or bounded length due to physical constraints. 
Photonic approaches using SPDC sources generate constant-size resource states~\cite{cite_spdc_ghz}.
Matter-based approaches emit linear resources of bounded length, which depends on the coherence time of the atom~\cite{cite_deterministic_solid_state_emitter, cite_matter_based_linear_cluster_state_generation_2}.
These practical limitations motivate bounded variants of the above problems.
A trail containing at most $L$ edges is called an \textit{$L$-trail}, and an \textit{$L$-trail cover} is a trail cover consisting entirely of $L$-trails.
$L$-paths, $L$-path covers, and $L$-trail decompositions are defined analogously. This leads to the following optimization problems.

\begin{definition}
\problem{MinBoundedTrailCover}

\noindent\textbf{\textit{Input:}} A graph $G$ and integer $L$.\\
\textbf{\textit{Output:}} An $L$-trail cover of $G$ with the minimum number of trails.
\end{definition}

\begin{definition}
\problem{MinBoundedPathCover}

\noindent\textbf{\textit{Input:}} A graph $G$ and integer $L$.\\
\textbf{\textit{Output:}} An $L$-path cover of $G$ with the minimum number of paths.
\end{definition}

\begin{definition}
\problem{MinBoundedTrailDecomposition}

\noindent\textbf{\textit{Input:}} A graph $G$ and integer $L$.\\
\textbf{\textit{Output:}} An $L$-trail decomposition of $G$ with the minimum number of trails.
\end{definition}

We note that constraining the linear resource states by their photon count would more accurately capture physical hardware limitations.
In most photonic architectures, each fusion between two nodes requires one photon from each node, and each node also requires a photon for single-qubit measurement.
For any resource state implementing a trail $T$ in a trail cover, we can assign weight $w(T) = L + F$ where $L$ is the number of nodes in the trail and $F$ is the number of fusions involving these nodes. 
The corresponding problem, denoted \problem{MinPhotonBoundedTrailCover}, seeks a trail cover of $G$ with minimum trail count such that every trail's weight is bounded by a constant.
For atom-based implementations of linear resource states (such as quantum dots), this constant would be proportional to the atom's coherence time~\cite{cite_deterministic_catapillar_states}.
In Section~\ref{sec:complexity}, we show that the bounded minimum trail decomposition and trail covering problems have are in the same complexity class as the photon-bounded variants. 
Moreover, the approximation algorithms for bounded trail covers and decompositions developed in Section~\ref{sec:approximation} 
are easily generalised to the photon-bounded setting and were used to produce the results in Section~\ref{sec:benchmarks}.

\section{Complexity analysis}\label{sec:complexity}
\subsection{The minimum path cover problem is NP-hard}

The minimum path cover problem is NP-hard since the existence of a path cover containing a single path indicates that the graph contains a Hamiltonian path, and deciding whether a graph has a Hamiltonian path is NP-complete~\cite{cite_karp_21}. Since verifying whether a given path cover is minimal requires solving the optimization problem itself, \problem{MinPathCover} is NP-hard. Special graph classes for which the Hamiltonian path problem admits polynomial-time solutions are cataloged in~\cite{cite_catalog_hamiltonian} and may admit efficient algorithms for finding minimum path covers.

Moran et al.~\cite{cite_approx_mpc} developed a $\frac{2}{3}$-approximation algorithm for finding path covers on weighted graphs that maximize total weight. This is equivalent to the minimum path cover problem when all edges have unit weight.
This result also implies that \problem{MinPhotonBoundedPathCover} is NP-hard.
Kobayashi et al.~\cite{cite_mpc} generalized this to consider path covers where each path carries a weight based on its length. 
Setting all path weights to one regardless of length yields the bounded minimum path cover problem. 
The authors showed that the bounded minimum path cover problem is solvable in polynomial time for graphs with bounded tree width.
This gives an algorithm for solving the minimum $L$-path cover problem in time $O(2^{2W} W^{2W + 2} (L + 2)^{2W + 2} |V|)$ when $G$ has tree-width smaller than $W$.
We present no new results on path cover problems and instead provide graph rewrites in Section~\ref{sec:graphrewrites} to heuristically reduce the number of $Y$ fusions and the size of the minimum path covers.

\subsection{The minimum trail decomposition problem is in P}

Fortunately, an efficient algorithm exists for finding minimum trail decompositions by reducing the problem to finding Eulerian circuits.

\begin{lemma}[Euler~\cite{cite_eulers_theorem}]\label{lem_euler}
A connected graph has an Eulerian circuit if and only if every vertex has even degree.
\end{lemma}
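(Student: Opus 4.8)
The plan is to prove the two implications separately, treating the forward direction as a short observation and devoting the real work to the converse. For the forward direction, suppose $G$ has an Eulerian circuit $C = v_0\, e_1\, v_1\, e_2 \cdots e_m\, v_0$ that uses every edge exactly once. Fix a vertex $v$. Reading along $C$, each time we arrive at and leave $v$ we pair up two edges incident to $v$, and the closing of the circuit pairs the last edge with the first; since every edge at $v$ occurs exactly once in $C$, these pairings partition the edge-set at $v$ into pairs, so $\deg(v)$ is even. (We may assume $G$ has no isolated vertices, as the statement is about connected graphs; connectedness together with the circuit visiting every edge then covers every vertex.)

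For the converse, suppose $G$ is connected and every vertex has even degree. I would argue by induction on $|E|$, the base case $|E| = 0$ being a single vertex. For the inductive step, pick any vertex $v_0$ and greedily build a trail, never reusing an edge. Whenever the trail enters a vertex $w \neq v_0$, it has so far used an odd number of edges at $w$; since $\deg(w)$ is even, an unused edge remains and the trail continues. Hence the trail can only get stuck at $v_0$, yielding a closed trail $W$. Deleting the edges of $W$ leaves a graph $G'$ in which every vertex still has even degree, because an even number of edges was removed at each vertex. Each connected component of $G'$ (after discarding isolated vertices) satisfies the induction hypothesis and therefore has an Eulerian circuit; and since $G$ is connected, every such component meets $W$ at some vertex. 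Splicing each of these smaller Eulerian circuits into $W$ at a shared vertex produces an Eulerian circuit of $G$.

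The main obstacle is the splicing step in the converse: one must verify that every component of $G'$ containing an edge actually shares a vertex with $W$, and that inserting a closed sub-trail into $W$ at a common vertex again gives a closed trail traversing each edge exactly once. The first point follows from connectedness of $G$ — a component of $G'$ with no vertex on $W$ could not be joined to the rest of $G$ — and the second is routine bookkeeping, but it is where all the care is needed. Equivalently, the entire converse can be phrased as Hierholzer's algorithm, which is the same argument made constructive and which additionally supplies the polynomial-time procedure used elsewhere in this section.
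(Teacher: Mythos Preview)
Your proof is correct and follows the standard Hierholzer argument; the paper, however, does not prove this lemma at all --- it is stated as a classical result with a citation to Euler and used as a black box. So there is nothing to compare against: you have supplied a complete proof where the paper simply invokes the literature. It is worth noting that your constructive converse is exactly Hierholzer's algorithm, which the paper does cite separately (as \cite{cite_hierholzer_alg}) when it needs an $O(|E|)$ procedure for finding Eulerian circuits.
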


\begin{definition}
Let $G = (V, E)$ be a graph. Then $\Odd(G) \subseteq V$ is the set of all vertices of $G$ that have odd degree. We say such vertices are \textit{odd} and all other vertices are \textit{even}.
\end{definition}

\begin{theorem}[Theorem 2.3~\cite{cite_orig_proof_of_mtd}]\label{thm_min_trail_decomp}
Let $G$ be a connected graph. Then there exists a minimum trail decomposition of $G$ that has $\frac{1}{2}|\Odd(G)|$ trails if $|\Odd(G)| > 0$ and a single trail otherwise.
\end{theorem}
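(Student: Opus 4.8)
The plan is to prove matching lower and upper bounds on the number of trails, after first disposing of the Eulerian case. If $|\Odd(G)| = 0$, then by Lemma~\ref{lem_euler} the connected graph $G$ has an Eulerian circuit, which is a single trail traversing every edge; this is a trail decomposition with one trail, and (assuming $G$ has at least one edge) no decomposition can use zero trails, so the minimum is a single trail.

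Now suppose $|\Odd(G)| > 0$ and write $k = \tfrac{1}{2}|\Odd(G)|$, which is an integer by the handshaking lemma. \textbf{Lower bound.} I would show that every trail decomposition $\M{C}$ of $G$ satisfies $|\M{C}| \ge k$ by a parity argument. Fix a vertex $v$; each trail $T \in \M{C}$ uses an even number of edges incident to $v$ when $v$ is not an endpoint of $T$ (in particular when $T$ is a closed trail) and an odd number when $v$ is one of its two endpoints. Summing over $\M{C}$, which partitions the edges at $v$, shows that $\deg(v)$ has the same parity as the number of trails of $\M{C}$ having $v$ as an endpoint; hence if $v \in \Odd(G)$ then at least one trail of $\M{C}$ ends at $v$. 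Double-counting incidences between odd vertices and trail endpoints, each odd vertex contributes at least one such incidence, while each trail contributes at most two (its two endpoints), giving $2|\M{C}| \ge |\Odd(G)|$, i.e.\ $|\M{C}| \ge k$.

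\textbf{Upper bound (construction).} I would augment $G$ to an Eulerian graph. Pair the odd vertices arbitrarily into pairs $\{u_1,v_1\}, \dots, \{u_k,v_k\}$ and build $G^+$ from $G$ by adding, for each $i$, a new vertex $w_i$ together with the two edges $\{u_i, w_i\}$ and $\{w_i, v_i\}$; this keeps the graph simple and connected, and now every vertex has even degree (each $w_i$ has degree $2$, and each $u_i, v_i$ gains one). By Lemma~\ref{lem_euler}, $G^+$ has an Eulerian circuit $C$. Since each $w_i$ has degree $2$, its two incident edges appear consecutively in $C$, so $C$ contains $k$ vertex-disjoint length-two subpaths of the form $u_i, w_i, v_i$. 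Deleting all $k$ of these subpaths from the closed walk $C$ splits it into at most $k$ open walks, each using only edges of $G$; discarding any empty pieces, these are edge-disjoint trails that together traverse every edge of $G$ exactly once. Hence this is a trail decomposition of $G$ with at most $k$ trails.

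Combining the bounds, the minimum number of trails in a trail decomposition of $G$ is exactly $k = \tfrac{1}{2}|\Odd(G)|$, and the construction attains it, which proves the theorem. I expect the only delicate point to be the lower-bound parity argument: one must handle closed trails correctly (they contribute an even count at every vertex) and set up the incidence double-count so that a trail joining two distinct odd vertices is charged once to each. The construction side is then routine, modulo the standard observation that deleting $k$ pairwise-disjoint subpaths from a closed walk leaves at most $k$ pieces; the auxiliary vertices $w_i$ are only there to keep $G^+$ simple, and one could equivalently add the edges $\{u_i,v_i\}$ directly and argue in a multigraph.
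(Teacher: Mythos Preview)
Your proposal is correct and follows essentially the same approach as the paper: both prove the lower bound by observing that every odd vertex must be a trail endpoint, and both construct a decomposition by pairing odd vertices, augmenting to an Eulerian graph, taking an Eulerian circuit, and removing the added edges. The only cosmetic difference is that you insert auxiliary degree-two vertices $w_i$ to keep $G^+$ simple, whereas the paper adds the edges $\{u_i,v_i\}$ directly (tacitly working in a multigraph); you already note this equivalence yourself.
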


\begin{proof}
This bound is minimal since each odd vertex must serve as an endpoint for some trail.

Every graph contains an even number of odd vertices. Therefore we can construct a trail decomposition achieving this bound by connecting odd vertices randomly with an edge. The resulting graph has no odd vertices, and so we can find an Eulerian circuit by Lemma~\ref{lem_euler}. Removing the introduced edges from the circuit creates $\frac{1}{2}|\Odd(G)|$ trails if $|\Odd(G)| > 0$ and one trail otherwise. These trails cover every edge of the original graph and therefore constitute a trail decomposition.
\end{proof}

We conclude that the minimum trail decomposition problem can be solved in polynomial time since efficient algorithms exist for finding Eulerian circuits in time $O(|E|)$, such as Hierholzer's algorithm~\cite{cite_hierholzer_alg}.

\begin{theorem}\label{thm_mtd_in_p}
\problem{MinTrailDecomposition} is in $P$.
\end{theorem}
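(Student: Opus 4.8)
The plan is to combine the structural characterisation of minimum trail decompositions given in Theorem~\ref{thm_min_trail_decomp} with the known linear-time algorithm for Eulerian circuits, and then handle disconnected inputs separately so that the statement applies to arbitrary graphs. First I would note that Theorem~\ref{thm_min_trail_decomp} does more than bound the number of trails: its proof is constructive. So for a connected graph $G$, the algorithm computes $\Odd(G)$ by a single pass over the edges, pairs up the odd vertices arbitrarily and adds a matching of $\frac{1}{2}|\Odd(G)|$ artificial edges, runs Hierholzer's algorithm~\cite{cite_hierholzer_alg} on the resulting Eulerian graph in time $O(|E|)$, and finally deletes the artificial edges from the circuit, splitting it into the required $\frac{1}{2}|\Odd(G)|$ trails (or returning the single Eulerian circuit when $\Odd(G)=\emptyset$). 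Each step is linear in $|V|+|E|$, so the whole procedure runs in polynomial (indeed linear) time, and correctness is exactly what Theorem~\ref{thm_min_trail_decomp} asserts.

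Next I would address the reduction from the decision/optimisation framing to ``$P$''. A trail decomposition only exists for graphs with no isolated vertices, and the minimum number of trails is additive over connected components; so one would run the connected-component decomposition (linear time via BFS/DFS), discard or reject isolated vertices, apply the connected-case construction to each component, and take the union of the resulting trail sets. The minimality follows component-wise from Theorem~\ref{thm_min_trail_decomp}, since the lower bound ``each odd vertex is an endpoint of some trail'' is local to a component and trails cannot cross components. This shows the optimal value is computed exactly, hence \problem{MinTrailDecomposition} — in its natural decision version, ``does $G$ admit a trail decomposition with at most $k$ trails?'' — is decidable in polynomial time, placing it in $P$.

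The main thing to be careful about, rather than a genuine obstacle, is the bookkeeping when artificial edges are removed: removing a matching edge $\{u,v\}$ from a single closed Eulerian walk cuts it into one open walk per removed edge only if the removed edges are pairwise non-adjacent along the circuit in a controlled way — in fact deleting $m$ edges from a single cycle-like closed walk yields at most $m$ pieces, and since every artificial edge has both endpoints odd in $G$, each piece is a legitimate trail of $G$ whose endpoints are odd vertices. I would spell out that after deletion we may merge any pieces that share an endpoint to avoid overcounting, but the count $\frac{1}{2}|\Odd(G)|$ is already optimal by the lower bound, so no merging is needed. With that detail checked, the theorem follows immediately from the constructive content of Theorem~\ref{thm_min_trail_decomp} together with the $O(|E|)$ runtime of Hierholzer's algorithm.
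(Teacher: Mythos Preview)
Your proposal is correct and follows essentially the same approach as the paper: the paper's ``proof'' of Theorem~\ref{thm_mtd_in_p} is a one-line remark that the constructive procedure in Theorem~\ref{thm_min_trail_decomp} (pair odd vertices, add artificial edges, run Hierholzer's algorithm, delete the artificial edges) runs in $O(|E|)$ time. You have spelled out more of the bookkeeping (disconnected inputs, the decision framing, and why deleting the matching edges yields exactly $\frac{1}{2}|\Odd(G)|$ pieces), but the core argument is identical; one small quibble is that a trail decomposition does exist for graphs with isolated vertices (the empty trail set vacuously covers zero edges), so ``reject'' is too strong there --- ``discard'' is the right option.
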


Minimum trail decompositions possess a particularly useful structure that enables efficient testing of whether a trail belongs to some minimum trail decomposition. We use this characterization extensively to prove subsequent results and when developing heuristic algorithms for finding trail covers in Section~\ref{sec:approximation}.

Theorem~\ref{thm_min_trail_decomp} generalizes to disconnected graphs as follows.

\begin{lemma}\label{lem_mtd_num_cc}
Let $G$ be a possibly disconnected graph. The minimum trail decomposition of $G$ contains at least $\frac{1}{2}|\Odd(G)|$ trails, with equality if and only if every connected component of $G$ has non-zero odd vertices.
\end{lemma}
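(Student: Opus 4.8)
The plan is to reduce the disconnected case to the connected case handled by Theorem~\ref{thm_min_trail_decomp} by analyzing each connected component separately. First I would observe that any trail in $G$ lies entirely within a single connected component, since a trail is a walk along edges of $G$. Therefore a trail decomposition of $G$ is precisely the disjoint union of trail decompositions of its connected components $G_1, \dots, G_k$, and the number of trails is additive: if $\M{C}_i$ is the restriction of a trail decomposition $\M{C}$ of $G$ to $G_i$, then $|\M{C}| = \sum_{i=1}^k |\M{C}_i|$, and minimizing $|\M{C}|$ amounts to independently minimizing each $|\M{C}_i|$.

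Next I would apply Theorem~\ref{thm_min_trail_decomp} to each component. For a component $G_i$ with $|\Odd(G_i)| > 0$, the minimum number of trails needed is $\tfrac12 |\Odd(G_i)|$; for a component with no odd vertices, it is $1$. Summing over all components gives the minimum trail count of $G$ as $\sum_{i : \Odd(G_i) \neq \emptyset} \tfrac12 |\Odd(G_i)| \;+\; \#\{i : \Odd(G_i) = \emptyset\}$. Since the odd-degree vertices of $G$ are exactly the union of the odd-degree vertices of the components (degrees are computed within a component), $|\Odd(G)| = \sum_i |\Odd(G_i)|$, so the first sum equals $\tfrac12 |\Odd(G)|$. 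Hence the minimum trail count is $\tfrac12 |\Odd(G)| + \#\{i : \Odd(G_i) = \emptyset\}$, which is always at least $\tfrac12|\Odd(G)|$, with equality precisely when there are no components having zero odd vertices — that is, when every connected component has a non-zero number of odd vertices.

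I would also dispose of the edge case of isolated vertices and trivial components: a component that is a single vertex has no edges, contributes nothing to any trail decomposition, and has $\Odd(G_i) = \emptyset$, so it falls under the ``zero odd vertices'' count; this is consistent with the statement and simply needs to be flagged. The main point to be careful about is not a deep obstacle but a bookkeeping one: making sure the even-component correction term $\#\{i : \Odd(G_i) = \emptyset\}$ is correctly identified as the source of any excess over $\tfrac12|\Odd(G)|$, and confirming that it vanishes exactly under the stated condition. No new combinatorial idea beyond Theorem~\ref{thm_min_trail_decomp} and the additivity of trails across components is required.
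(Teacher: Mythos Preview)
Your approach is correct and is precisely the componentwise reduction the paper intends; the paper in fact states Lemma~\ref{lem_mtd_num_cc} without proof, presenting it as the immediate generalization of Theorem~\ref{thm_min_trail_decomp}, so there is no alternative argument to compare against.

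One small inconsistency in your handling of isolated vertices: a component consisting of a single vertex has no edges, so its minimum trail decomposition has size $0$, not $1$ (the empty set of trails vacuously covers every edge). Hence it should \emph{not} contribute to the correction term $\#\{i:\Odd(G_i)=\emptyset\}$ in your formula. But then such a component has zero odd vertices while contributing nothing extra, which makes the ``only if'' direction of the lemma, read literally, fail. This is an ambiguity in the lemma's statement rather than in your argument; the clean fix is to assume every component contains at least one edge (implicit throughout the paper), rather than to claim isolated vertices are ``consistent with the statement.''
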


For a trail $T$ in graph $G$, we write $G \backslash T$ to denote the subgraph of $G$ with the edges of $T$ removed.
We now state the necessary and sufficient conditions for a trail to belong to a minimum trail decomposition.

\begin{proposition}\label{prop_mtd_conditions}
Let $G$ be a connected graph with non-zero odd vertices. A trail $T$ in $G$ belongs to some minimum trail decomposition of $G$ if and only if $T$ begins and ends at distinct odd vertices and every connected component of $G \backslash T$ has non-zero odd vertices.
\end{proposition}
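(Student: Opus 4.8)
The plan is to prove both directions by leveraging Theorem~\ref{thm_min_trail_decomp} and Lemma~\ref{lem_mtd_num_cc}. The key observation is that a trail $T$ belongs to \emph{some} minimum trail decomposition of $G$ if and only if $G \backslash T$ admits a minimum trail decomposition of size $\tfrac{1}{2}|\Odd(G)| - 1$ and $T$ itself contributes correctly to the count; then I reduce this to the characterization of when $G \backslash T$ meets the bound of Lemma~\ref{lem_mtd_num_cc}.

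First, for the forward direction, suppose $T$ lies in a minimum trail decomposition $\M{C}$ of $G$, so $|\M{C}| = \tfrac{1}{2}|\Odd(G)|$. Then $\M{C} \backslash \{T\}$ is a trail decomposition of $G \backslash T$ with $\tfrac{1}{2}|\Odd(G)| - 1$ trails. I would track how removing the edges of $T$ changes the parity of degrees: a vertex changes parity iff $T$ passes through it an odd number of times (as an internal vertex, each visit uses two edges and preserves parity, while each endpoint of $T$ uses one edge). Hence $\Odd(G \backslash T)$ differs from $\Odd(G)$ only at the two endpoints of $T$, assuming they are distinct. If both endpoints of $T$ are odd in $G$, then $|\Odd(G \backslash T)| = |\Odd(G)| - 2$, so $\M{C}\backslash\{T\}$ has exactly $\tfrac{1}{2}|\Odd(G\backslash T)|$ trails and is therefore minimum; by Lemma~\ref{lem_mtd_num_cc}, every component of $G\backslash T$ has non-zero odd vertices. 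I must also rule out the cases where $T$ has an even endpoint or coincident endpoints: in either case $|\Odd(G\backslash T)| \geq |\Odd(G)|$, forcing the minimum trail decomposition of $G \backslash T$ to have at least $\tfrac{1}{2}|\Odd(G)|$ trails by Lemma~\ref{lem_mtd_num_cc}, contradicting that $\M{C}\backslash\{T\}$ has only $\tfrac{1}{2}|\Odd(G)| - 1$ trails. (A closed trail has one even endpoint, namely the start = end vertex, so this subsumes that case too, and also the case of a closed trail at an odd vertex.) This gives the ``only if'' direction.

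Conversely, suppose $T$ begins and ends at distinct odd vertices of $G$ and every component of $G \backslash T$ has non-zero odd vertices. By the parity analysis above, $|\Odd(G \backslash T)| = |\Odd(G)| - 2$, and by Lemma~\ref{lem_mtd_num_cc} (the equality case), $G \backslash T$ has a minimum trail decomposition $\M{C}'$ with exactly $\tfrac{1}{2}|\Odd(G \backslash T)| = \tfrac{1}{2}|\Odd(G)| - 1$ trails. Then $\M{C}' \cup \{T\}$ is a trail decomposition of $G$ with $\tfrac{1}{2}|\Odd(G)|$ trails, which is minimum by Theorem~\ref{thm_min_trail_decomp} (since $|\Odd(G)| > 0$). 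Hence $T$ belongs to a minimum trail decomposition. One subtlety to handle carefully: $G \backslash T$ may be disconnected even though $G$ is connected, which is exactly why Lemma~\ref{lem_mtd_num_cc} rather than Theorem~\ref{thm_min_trail_decomp} is the right tool, and why the hypothesis ``every component of $G\backslash T$ has non-zero odd vertices'' is necessary.

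The main obstacle I anticipate is bookkeeping the parity argument cleanly — in particular being careful that when $T$ is a closed trail its unique endpoint vertex has its degree decreased by an even amount (two) if we count the start and end as the same visit, versus the subtle off-by-one if we think of it as two endpoints; stating precisely ``$T$ begins and ends at distinct odd vertices'' already excludes closed trails and trails with a shared endpoint, so the cleanest route is to show directly that any such excluded $T$ cannot appear in a minimum decomposition via the counting contradiction above. The other point requiring care is verifying that $G\backslash T$ has no \emph{isolated} vertices that would be missed — but a trail decomposition need only cover edges, and any vertex of $G$ not incident to an edge of $G\backslash T$ was necessarily covered by $T$, so this does not cause a problem for the decomposition of $G\backslash T$ itself (it only matters that the trails cover all \emph{edges}).
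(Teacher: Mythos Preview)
Your proposal is correct and follows essentially the same route as the paper: both directions hinge on the count $|\M{C}\backslash\{T\}| = \tfrac{1}{2}|\Odd(G)| - 1$ together with the parity shift $|\Odd(G\backslash T)| = |\Odd(G)| - 2$ precisely when $T$ joins two distinct odd vertices, and then invoke Lemma~\ref{lem_mtd_num_cc} for the component condition. Your explicit case analysis ruling out even or coincident endpoints is a bit more careful than the paper's one-line assertion (``removing $T$ can only decrease the number of odd vertices by two if $T$ connects distinct odd vertices''), but the underlying argument is identical.
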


\begin{proof}
Suppose $T$ belongs to a minimum trail decomposition $\M{T}$ of $G$. Then $\M{T} \backslash \{T\}$ must be a minimum trail decomposition for $G \backslash T$. Therefore by Theorem~\ref{thm_min_trail_decomp}:
\[
|\Odd(G \backslash T)| = 2|\M{T} \backslash \{T\}| = 2(|\M{T}| - 1) = |\Odd(G)| - 2.
\]
Removing $T$ from $G$ can only decrease the number of odd vertices by two if $T$ connects distinct odd vertices. Lemma~\ref{lem_mtd_num_cc} ensures that every connected component of $G \backslash T$ has non-zero odd vertices.

Conversely, assume trail $T$ ends at distinct odd vertices. Then $|\Odd(G \backslash T)| = |\Odd(G)| - 2$ by Lemma~\ref{lem_mtd_num_cc}. If every connected component of $G \backslash T$ contains at least one odd vertex, then Lemma~\ref{lem_mtd_num_cc} guarantees the existence of a minimum trail decomposition $\M{T}^\prime$ of $G \backslash T$ of size $\frac{1}{2}|\Odd(G \backslash T)| = \frac{1}{2}|\Odd(G)| - 1$. Then $\M{T} = \M{T}^\prime \cup \{T\}$ is a trail decomposition of $G$ of size $\frac{1}{2}|\Odd(G)|$, which by Theorem~\ref{thm_min_trail_decomp} is minimal.
Therefore $T$ belongs to a minimum trail decomposition.
\end{proof}

\subsection{The bounded minimum trail decomposition problem is NP-hard}

We now consider the \problem{MinBoundedTrailDecomposition} problem, which seeks a minimum trail decomposition with trails of length at most $L \ge 1$.

One might expect there to always exist an $L$-trail decomposition of size $|E|/L$ or $\frac{1}{2}|\Odd(G)|$, but this is not always the case, as shown in the example below.
\begin{figure}[H]
\centering
\begin{tikzpicture}
	\begin{pgfonlayer}{nodelayer}
		\node [style=vertex] (0) at (-1, 0) {};
		\node [style=vertex] (1) at (-0.25, 0) {};
		\node [style=vertex] (2) at (0.5, 0) {};
		\node [style=vertex] (3) at (1.25, 0) {};
		\node [style=vertex] (4) at (2, 0) {};
		\node [style=vertex] (5) at (2, 0.75) {};
		\node [style=vertex] (6) at (2, -0.75) {};
	\end{pgfonlayer}
	\begin{pgfonlayer}{edgelayer}
		\draw (6) to (4);
		\draw (4) to (5);
		\draw (4) to (3);
		\draw (3) to (2);
		\draw (2) to (1);
		\draw (1) to (0);
	\end{pgfonlayer}
\end{tikzpicture}
\caption{An example of a graph that whose minimum $L$-trail decomposition contains more than $|E|/L$ or $\frac{1}{2}|\Odd(G)|$ trails. For $L = 3$, the minimum 3-trail decomposition requires 3 trails but $|E|/3 = 2$ and $\frac{1}{2}|\Odd(G)| = 2$.}
\end{figure}
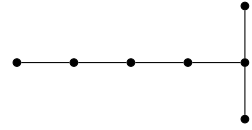

The minimum $1$-trail decomposition is simply the edge set $E$. There also exists an efficient algorithm for computing a minimum $2$-trail decomposition by converting it to a matching problem.

\begin{proposition}\label{prop_2trail}
Let $G = (V, E)$ be a graph. Then the minimum 2-trail decomposition of $G$ contains $\ceil{\frac{1}{2}|E|}$ trails and can be found in polynomial time.
\end{proposition}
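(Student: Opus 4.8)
The plan is to show two things: that every graph admits a $2$-trail decomposition of size $\ceil{\frac{1}{2}|E|}$, and that no smaller one exists, and then to observe that the construction can be carried out in polynomial time via a matching algorithm. The lower bound is immediate: a $2$-trail contains at most $2$ edges, so any $2$-trail decomposition of a graph with $|E|$ edges must have at least $\ceil{\frac{1}{2}|E|}$ trails. The content is therefore in the upper bound, i.e.\ in showing that we can always pair up almost all edges into trails of length $2$.

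For the upper bound, I would reduce to a matching problem on an auxiliary graph. Form the \emph{line graph} $L(G)$ whose vertices are the edges of $G$, with two vertices of $L(G)$ adjacent when the corresponding edges of $G$ share an endpoint. A pair of edges of $G$ sharing a common vertex is exactly a trail of length $2$, so a matching $M$ in $L(G)$ corresponds to a collection of edge-disjoint $2$-edge trails covering $2|M|$ of the edges of $G$, and the remaining $|E| - 2|M|$ edges are each taken as a $1$-edge trail. This yields a $2$-trail decomposition of size $|M| + (|E| - 2|M|) = |E| - |M|$, which is minimized (and equals $\ceil{\frac{1}{2}|E|}$) precisely when $M$ is a maximum matching of $L(G)$ of size $\floor{\frac{1}{2}|E|}$, i.e.\ a near-perfect or perfect matching.

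The key step is therefore to prove that $L(G)$ always has a matching of size $\floor{\frac{1}{2}|E|}$ --- equivalently, that at most one edge of $G$ is left unmatched. I would argue this componentwise: it suffices to handle a connected graph $H$ with $|E(H)|$ edges and show $L(H)$ has a matching missing at most one vertex, since summing over components and pairing up the leftover single edges across components (each leftover edge shares no vertex constraint with another component's leftover, but they can simply remain as $1$-trails, and an even number of them can even be left as-is) gives the global bound $\ceil{\frac{1}{2}|E|}$; in fact one only needs that the \emph{total} number of unmatched edges is at most one after pairing across components, which follows because at most one component can contribute an odd leftover without loss — more carefully, if $k$ components each leave one edge unmatched we still get $|E| - \floor{\frac12|E_i|}$ summed, and a short parity argument reconciles this with $\ceil{\frac12|E|}$. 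For the connected case, the cleanest route is to invoke a classical fact: the line graph of a connected graph with at least two edges has a perfect matching when $|E(H)|$ is even and a near-perfect matching when $|E(H)|$ is odd. One proof of this: take an Eulerian-type traversal — add a dummy edge between two odd-degree vertices (or pick any closed/open walk structure) so that $H$ plus dummy edges is Eulerian, walk the Euler trail $e_1 e_2 \dots e_m$, and match consecutive pairs $\{e_1,e_2\}, \{e_3,e_4\}, \dots$; consecutive edges in an Euler trail share a vertex, so these are valid edges of $L(H)$, and they are disjoint in $L(H)$, leaving at most one edge of $H$ unmatched. One must take a little care that dummy edges inserted to Eulerize do not end up inside a matched pair in a way that invalidates it — this is the main technical obstacle — but it can be handled by Eulerizing via \emph{duplicating} existing edges rather than adding new ones, or by choosing the Euler trail so that dummy edges fall at positions that are discarded, after which the surviving pairs are genuine $2$-trails in $H$.

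Finally, polynomial-time computability: constructing $L(G)$ takes time polynomial in $|V|+|E|$, and a maximum matching in $L(G)$ can be found in polynomial time by Edmonds' blossom algorithm; extracting the $2$-trails and leftover $1$-trails is linear. Hence the minimum $2$-trail decomposition, of size exactly $\ceil{\frac{1}{2}|E|}$, is computed in polynomial time. The only place requiring genuine argument, rather than bookkeeping, is the guarantee that the maximum matching of $L(G)$ has size $\floor{\frac{1}{2}|E|}$; I expect the Euler-trail pairing argument above to be the crux, with the disconnected case reduced to it by a routine parity/aggregation step.
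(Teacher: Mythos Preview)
Your reduction to maximum matching in the line graph $L(G)$ is exactly the paper's approach; the divergence is in how you argue that $L(G)$ has a matching of size $\floor{|E|/2}$. The paper cites the classical theorem that every connected claw-free graph on an even number of vertices has a perfect matching: since line graphs are claw-free and $L(G)$ is connected whenever $G$ is, the even-$|E|$ case follows immediately, and for odd $|E|$ one deletes a single edge, applies the even case, and reinstates that edge as a $1$-trail. Your Euler-trail pairing is a reasonable alternative in spirit, but the dummy-edge obstacle you flag is real and neither of your suggested fixes resolves it without substantial further work. With $\frac{1}{2}|\Odd(G)|$ dummy edges inserted, controlling which parity positions they occupy in the circuit is nontrivial; equivalently, pairing consecutive edges within each trail of the unbounded minimum trail decomposition (Theorem~\ref{thm_min_trail_decomp}) yields only $\sum_i \floor{|T_i|/2}$ matched pairs, which falls short of $\floor{|E|/2}$ whenever two or more of the $|T_i|$ are odd. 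The claw-free matching theorem sidesteps all of this in one line.

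Your disconnected-case discussion is also incorrect. The ``short parity argument'' you invoke does not exist: $\sum_i \ceil{|E(G_i)|/2}$ can strictly exceed $\ceil{|E|/2}$ --- for two isolated edges the minimum $2$-trail decomposition has size $2$, not $\ceil{2/2} = 1$. The proposition as stated therefore fails for such graphs, and connectivity is being tacitly assumed (as the paper's proof also does when it asserts $L(G)$ is connected).
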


\begin{proof}
A trail decomposition of size $\ceil{\frac{1}{2}|E|}$ consists entirely of 2-trails plus a single 1-trail when $|E|$ is odd, and is therefore a lower bound.
Observe that 2-trails in $G$ correspond to matchings in the line graph $L(G)$ of $G$.
Therefore a minimum 2-trail decomposition maximizes the number of 2-trails, which corresponds to a maximum matching on $L(G)$.
Since line graphs are connected and claw-free, $L(G)$ has a perfect matching when the line graph has an even number of vertices~\cite{cite_claw_free_matching}, which occurs when $G$ has an even number of edges.

When $G$ has an even number of edges, the line graph admits $\frac{1}{2}|E|$ matches, yielding a 2-trail decomposition with $\frac{1}{2}|E|$ trails. 

When $G$ has an odd number of edges, we can remove a non-bridge edge to obtain a graph with an even number of edges and find a perfect matching of size $\frac{1}{2}(|E| - 1)$ in the line graph.
Mapping each match to a 2-trail in the original graph and implementing the removed edge with a 1-trail gives a 2-trail decomposition of size $\frac{1}{2}(|E| + 1) = \ceil{\frac{1}{2}|E|}$.
Thus a minimum 2-trail decomposition of $G$ has size $\ceil{\frac{1}{2}|E|}$.

Maximal cardinality matchings can be found in polynomial time~\cite{cite_poly_matching} and therefore minimum 2-trail decompositions can also be found in polynomial time.
\end{proof}

The situation becomes more complex when $L \ge 3$ and is in fact NP-hard. To show this, we first prove that the corresponding decision problem is NP-complete.

\begin{theorem}\label{thm_trail_decomp_np_complete}
Given a graph $G = (V, E)$, determining whether there exists an $L$-trail decomposition of $G$ of size $K \ge 1$ is NP-complete.
\end{theorem}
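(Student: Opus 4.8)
The plan is to establish NP-hardness by reduction from a known NP-complete problem, and membership in NP separately. Membership in NP is routine: an $L$-trail decomposition of size $K$ is a polynomial-size certificate, and checking that the trails are edge-disjoint, each of length at most $L$, jointly cover all edges, and number exactly $K$ takes polynomial time. So the work is in the hardness reduction. Fixing $L = 3$ suffices (the problem for general $L$ then follows by, e.g., attaching long pendant paths or by a padding argument), so I would aim to show that deciding the existence of a $3$-trail decomposition of size $K$ is NP-hard.

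The natural source problem is \textsc{3-Dimensional Matching} or \textsc{Exact Cover by 3-Sets}, or alternatively a restricted version of an edge-partition problem. Indeed, a $3$-trail decomposition of a graph with $|E| = 3m$ edges into exactly $m$ trails is precisely a partition of the edge set into $m$ pieces each of which is a trail on three edges — i.e. either a path $P_4$ or a triangle $K_3$ or a ``path with a repeated vertex'' (which in a simple graph means a $P_4$; triangles are the only closed option). Partitioning edges of a graph into paths of length exactly $3$ is a known NP-complete problem (this is in the spirit of Dyer--Frieze and of the $P_k$-partition results of Monnot--Toulouse), and restricting to graphs with no triangles makes a $3$-trail decomposition of minimum size into exactly $|E|/3$ trails coincide with a $P_4$-partition. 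I would therefore reduce from \textsc{Partition into Paths of Length 3} ($P_4$-partition) on, say, bipartite or triangle-free instances: given such an instance $H$, set $G = H$, $L = 3$, and $K = |E(H)|/3$; then $G$ has a $3$-trail decomposition of size $K$ iff $E(H)$ partitions into $P_4$'s. Correctness in the forward direction is immediate; in the reverse direction one uses that in a triangle-free graph every closed trail has length $\ge 4$, so a $3$-trail of length $3$ is forced to be a simple path $P_4$, and that $K = |E|/3$ forces every trail to have length exactly $3$. If the cleanest available reference states $P_4$-partition NP-completeness only for general graphs, I would instead do a two-step reduction, first from \textsc{Exact Cover by 3-Sets}, building a gadget graph in which each $3$-set becomes a small triangle-free subgraph whose only size-$K$ $3$-trail decompositions correspond to selecting that set, with element-vertices of degree forcing each to be covered exactly once.

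The key steps, in order: (1) state the problem is in NP with the obvious certificate; (2) fix $L=3$ and reduce from a concrete NP-complete partition/exact-cover problem, specifying the constructed graph $G$, the parameter $K$, and noting $|E(G)|$ is a multiple of $3$; (3) prove the forward direction (a valid exact cover / $P_4$-partition yields a $3$-trail decomposition of size $K$); (4) prove the reverse direction, the crux of which is the combinatorial claim that any $3$-trail decomposition of $G$ of size $K = |E|/3$ must consist of length-$3$ simple paths of a constrained shape, from which one reads off a solution to the source instance; (5) observe the reduction is polynomial-time and deduce NP-completeness for $L=3$, then extend to all $L \ge 3$ by a padding construction (attaching, to a distinguished vertex, $L-3$ extra pendant edges arranged so that the only way to decompose them within the trail budget is to absorb them into existing trails, raising the effective length bound without changing the combinatorics — or more simply, subdividing and adding disjoint paths of length $L$ that must each form their own trail).

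The main obstacle I expect is the reverse direction of the reduction: ensuring that the gadget graph has \emph{no spurious} minimum $3$-trail decompositions — i.e. that every size-$K$ decomposition is ``canonical'' and encodes a genuine solution of the source problem. This is where the triangle-free (or girth $\ge 4$) condition and careful degree bookkeeping at the ``element'' vertices do the real work: one must rule out trails that straddle two different set-gadgets or that use an element vertex in a non-prescribed way, and confirm the counting $K = |E|/3$ genuinely forces length exactly $3$ on every trail. A secondary, more mechanical obstacle is making the extension from $L = 3$ to arbitrary $L \ge 3$ fully rigorous, since naive padding can accidentally create new low-cost decompositions; I would handle this by appending, disjointly, a fixed number of fresh paths each of length exactly $L$ together with one carefully chosen attachment, so the budget $K$ increases by a controlled amount and the new edges are decomposable only in the intended way.
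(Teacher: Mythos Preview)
Your route is genuinely different from the paper's. The paper reduces from \textsc{Bin Packing}: given items of weight $w_j>1$, capacity $C$, and $K$ bins, it builds a graph with a central vertex $v$, $2K$ pendant leaves attached to $v$, and one cycle of length $w_j$ through $v$ per item; it then shows that bin-packing solutions correspond exactly to $(C+2)$-trail decompositions of size $K$, since every trail must run from one leaf through some whole cycles back to another leaf. In particular $L=C+2$ varies with the instance. Your plan instead fixes $L=3$ and reduces from an edge-partition problem ($P_4$-partition / X3C), which, if it goes through, is a \emph{stronger} statement (hardness for a fixed constant $L$) and makes the extension to ``$L$ part of the input'' immediate --- no padding is needed, since $L=3$ is already a special case of the general problem. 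What the paper's approach buys in exchange is that it is fully self-contained (no external edge-partition hardness needed), and the bin-packing link is reused downstream to get the $3/2$-inapproximability remark and the photon-bounded variant.

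The one real gap in your outline is the source problem. You need $P_4$-edge-partition to be NP-complete on \emph{triangle-free} graphs, because that is precisely what forces every length-$3$ trail in a size-$|E|/3$ decomposition to be a simple $P_4$ rather than a triangle. General $H$-decomposition hardness (Holyer, Dor--Tarsi) does not automatically give you the triangle-free restriction, and you flag this yourself; but the fallback X3C gadget construction is only a sketch, and experience shows that ruling out ``spurious'' minimum decompositions across gadget boundaries is exactly where such reductions get delicate. Before committing to this route you should either locate a clean reference for $P_4$-partition on bipartite graphs, or actually build and verify the X3C gadget; otherwise the paper's bin-packing reduction is both simpler to write out and easier to check.
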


\begin{proof}
Given a solution, we can verify whether it is a valid $L$-trail decomposition with $K$ trails in polynomial time, so the problem is in NP. The rest of the proof follows by constructing a polynomial reduction from the bin packing problem which is known to be NP-complete.

The bin packing problem can be stated as follows: given a set of $n$ items with integer weights $(w_i)_{i=1}^n$ where $w_i > 1$ for all $i$, and positive integers $C$ and $K$. Determine whether there exists a partition of the items into $K$ disjoint sets $p_1, \ldots, p_K$ such that the total weight of all items in each partition is at most $C$, that is, $\sum_{j \in p_k} j \le C$ for all $1 \le k \le K$.

We begin by constructing a graph $G$ with $2K$ vertices $\{u_i\}_{i=1}^{2K}$, all connected to another vertex $v$. For each item $j$, we construct a cycle of $w_j$ edges starting and ending at $v$. Since all weights are greater than one, no self-loops exist and the graph is simple.

\begin{figure}[H]
\centering
\begin{tikzpicture}
	\begin{pgfonlayer}{nodelayer}
		\node [style=vertex] (0) at (0, 0) {};
		\node [style=vertex] (1) at (-0.75, -1) {};
		\node [style=vertex] (2) at (0.75, -1) {};
		\node [style=none] (3) at (0.025, -0.75) {$\cdots$};
		\node [style=vertex] (5) at (-1.5, 0.25) {};
		\node [style=vertex] (6) at (-0.75, 1.25) {};
		\node [style=none] (7) at (-1.65, 0.575) {};
		\node [style=none] (8) at (-1.25, 1.25) {};
		\node [style=none] (9) at (-1.5, 1) {\rotatebox{45}{$\cdots$}};
		\node [style=none] (10) at (-2, 1.25) {$w_1$};
		\node [style=none] (11) at (0.025, 1) {$\cdots$};
		\node [style=vertex] (12) at (0.75, 1.25) {};
		\node [style=vertex] (13) at (1.5, 0.25) {};
		\node [style=none] (14) at (1.25, 1.25) {};
		\node [style=none] (15) at (1.65, 0.575) {};
		\node [style=none] (16) at (1.5, 1) {\rotatebox{135}{$\cdots$}};
		\node [style=none] (17) at (2, 1.25) {$w_n$};
		\node [style=none] (18) at (0, -0.325) {v};
		\node [style=none] (19) at (0.75, -1.5) {$u_{2K}$};
		\node [style=none] (20) at (-0.75, -1.5) {$u_1$};
	\end{pgfonlayer}
	\begin{pgfonlayer}{edgelayer}
		\draw (0) to (1);
		\draw (0) to (2);
		\draw [bend left=15] (6) to (0);
		\draw [bend right=15] (5) to (0);
		\draw [bend right=15] (7.center) to (5);
		\draw [bend left=15] (8.center) to (6);
		\draw [bend left=15] (13) to (0);
		\draw [bend right=15] (12) to (0);
		\draw [bend left=15] (15.center) to (13);
		\draw [bend left=345] (14.center) to (12);
	\end{pgfonlayer}
\end{tikzpicture}
\end{figure}

We claim that solutions to the original bin packing problem are in one-to-one correspondence with $(C+2)$-trail decompositions of size $K$ for $G$.

Suppose we have a $(C + 2)$-trail decomposition $\M{T}$ of $G$ of size $K$. Observe that each trail in $\M{T}$ must start and end at one of the vertices in $\{u_i\}_{i=1}^{2K}$. Therefore every trail either fully traverses a particular cycle or doesn't traverse any edge in the cycle. Since the trail has length at most $C+2$, subtracting the first and last edge of the trail between $v$ and its endpoints in $\{u_i\}_{i=1}^{2K}$, the sum of the edges of the cycles it traverses must not exceed $C$.

Each trail therefore corresponds to a partition of the items, namely the items associated with the cycles it traverses that solves the original bin-packing problem. Conversely, it is straightforward to see that any partition of the items can be converted into a $(C+2)$-trail decomposition of the graph by mapping each partition to a trail that begins and ends at one of the vertices in $\{u_i\}_{i=1}^{2K}$, and traverses every cycle associated with the items in the partition.

Therefore since the bin packing problem is NP-complete, the $L$-trail decomposition decision problem is also NP-complete.
\end{proof}

Given that the decision problem is NP-complete, the corresponding minimization problem is therefore NP-hard.

\begin{theorem}\label{thm_min_bounded_td_nphard}
\problem{MinBoundedTrailDecomposition} is NP-hard.
\end{theorem}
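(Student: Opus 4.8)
The plan is to obtain NP-hardness of \problem{MinBoundedTrailDecomposition} directly from the NP-completeness of the decision problem established in Theorem~\ref{thm_trail_decomp_np_complete}, by exhibiting a polynomial-time Turing reduction: an algorithm that solves the optimization problem can be used as a subroutine to decide, for any $K \ge 1$, whether $G$ admits an $L$-trail decomposition of size $K$.

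First I would observe that \problem{MinBoundedTrailDecomposition} is always feasible: for any $L \ge 1$, the edge set $E$ of $G$, regarded as a collection of $1$-trails, is an $L$-trail decomposition, so on input $(G, L)$ the problem has a well-defined output, and I write $K^\ast(G, L)$ for the size of a minimum such decomposition. Next, given an instance $(G, L, K)$ of the decision problem from Theorem~\ref{thm_trail_decomp_np_complete}, I would run the optimization algorithm on $(G, L)$ to compute $K^\ast(G, L)$ and answer ``yes'' exactly when $K^\ast(G, L) \le K \le |E|$. Correctness of this test uses the fact that every $L$-trail decomposition of size $m$ with $m \le |E|$ can be refined to one of size $m'$ for each $m \le m' \le |E|$, by repeatedly splitting a trail of length at least two at an internal vertex (the two pieces are again edge-disjoint $L$-trails covering the same edges); hence an $L$-trail decomposition of size exactly $K$ exists if and only if $K^\ast(G, L) \le K \le |E|$. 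Since the left-hand decision problem is NP-complete and the reduction is clearly polynomial-time, \problem{MinBoundedTrailDecomposition} is NP-hard.

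There is no genuine obstacle here; the only points that require care are purely bookkeeping: that NP-hardness of an optimization problem is witnessed by a Turing (Cook) reduction rather than a many-one reduction, that feasibility of the optimization problem is checked explicitly, and that the equivalence between ``size exactly $K$'' and ``minimum size at most $K$'' is justified via trail splitting. The same argument, run verbatim with the weighted notion of trail length in place of edge count, also yields NP-hardness of the photon-bounded variant of the problem.
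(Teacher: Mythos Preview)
Your proposal is correct and follows the same approach as the paper, which simply states that since the decision problem of Theorem~\ref{thm_trail_decomp_np_complete} is NP-complete, the corresponding minimization problem is NP-hard. You have merely made the implicit Turing reduction explicit and carefully handled the ``size exactly $K$'' versus ``size at most $K$'' distinction via trail splitting, which the paper glosses over; your closing remark about the photon-bounded variant, however, is too quick, as the paper needs a separate scaling argument (Lemmas~\ref{lem_num_fusions}--\ref{lem_photons_in_trail}) to deal with the ambiguity in fusion-photon allocation.
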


Approximating the bin packing problem with ratio smaller than $\frac{3}{2}$ is NP-hard~\cite{cite_bin_pack_approx}. This constraint applies equally to the minimum $L$-decomposition problem, though we show in Section~\ref{sec:approximation} that the accuracy of our approximation algorithm depends on the number of odd vertices in the graph and provide a heuristic algorithm returning an $L$-trail decomposition containing on average $\frac{1}{4}|\Odd(G)|$ more trails than the minimum bounded trail decomposition.

\problem{MinPhotonBoundedTrailDecomposition} is more complex since many possible $X$ fusion arrangements exist for merging nodes into single vertices, affecting photon counts in each resource state. The trail decomposition alone doesn't contain sufficient information to determine solution validity. We can address this issue by scaling the graph $G$ from Theorem~\ref{thm_trail_decomp_np_complete} by a factor large enough to make any $X$ fusion arrangement irrelevant, then applying the same proof.

\begin{lemma}\label{lem_num_fusions}
The constructed graph $G$ in the proof of Theorem~\ref{thm_trail_decomp_np_complete} contains $n + K - 1$ fusions.
\end{lemma}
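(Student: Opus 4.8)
The plan is to reduce this to a straightforward edge-and-vertex count fed into Theorem~\ref{thm_num_fusions}. Recall that the graph $G$ built in the proof of Theorem~\ref{thm_trail_decomp_np_complete} consists of the hub vertex $v$, the $2K$ pendant vertices $u_1,\dots,u_{2K}$ each joined to $v$ by a single edge, and, for each of the $n$ items, a cycle through $v$ consisting of $w_j$ edges. Since a cycle on $w_j$ edges has exactly $w_j$ vertices, one of which is $v$, each item contributes $w_j-1$ fresh vertices. Hence $|V| = 1 + 2K + \sum_{j=1}^n (w_j-1)$ and $|E| = 2K + \sum_{j=1}^n w_j$, so that $|E| - |V| = n-1$.

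Next I would invoke Theorem~\ref{thm_num_fusions}. A trail decomposition of $G$ of size $K$ is in particular a trail cover $\M{C}$ of $G$ with $|\M{C}| = K$, and, since it traverses every edge of $G$, we have $Y(G,\M{C}) = 0$. The theorem then gives the total number of fusions as $X(G,\M{C}) + Y(G,\M{C}) = |E| - |V| + |\M{C}| = (n-1) + K = n + K - 1$, all of which are $X$ fusions. One may also note in passing that every vertex $u_i$ has degree one and every other vertex has even degree, so $|\Odd(G)| = 2K$ and $K$ is the minimum possible number of trails in any trail decomposition of $G$; thus the $(C+2)$-trail decompositions considered in Theorem~\ref{thm_trail_decomp_np_complete} are always of this minimal size.

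The only point requiring care is that this fusion count is identical for \emph{every} trail decomposition of $G$ of size $K$, regardless of how its trails are routed, because Theorem~\ref{thm_num_fusions} depends only on $|E|$, $|V|$ and $|\M{C}|$. This invariance is exactly what makes the lemma useful downstream: the number of $X$ fusions is pinned at $n+K-1$, so only their distribution across the resource states can vary, and it is that distribution which a suitable rescaling of $G$ will render irrelevant for the photon-bounded variant. I also note the mild edge case $w_j = 2$, which produces a double edge through $v$; this does not affect the counts above, but if a simple graph is wanted one can assume $w_j \ge 3$ in the bin packing instance without loss of generality, since bin packing remains NP-complete under this restriction. I expect no substantial obstacle here — the argument is essentially bookkeeping on top of Theorem~\ref{thm_num_fusions}.
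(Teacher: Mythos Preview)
Your proposal is correct and follows essentially the same route as the paper: both arguments count $|E| = 2K + \sum_j w_j$ and $|V| = 1 + 2K + \sum_j (w_j-1)$, observe that the relevant trail decomposition has $K$ trails, and plug into the formula $F = |E| - |V| + |\M{C}|$ from Theorem~\ref{thm_num_fusions}. Your additional remarks on the invariance of the count across all size-$K$ decompositions and on the $w_j = 2$ edge case are sound and go slightly beyond what the paper records, but the core computation is identical.
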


\begin{proof}
Loop $i$ has $w_j$ edges, and so including the edges from the $2K$ nodes in $U$, there are $2K + \sum_{j=1}^n w_j$ edges in the graph. There are $\sum_{j=1}^n (w_j - 1)$ vertices in cycles excluding the common point $v$. If we count $v$ and the $2K$ points in $U$, we have $2K + 1 - n + \sum_{j=1}^n w_j$ vertices. Therefore any minimum trail decomposition has $K$ trails, so we have
\begin{align*}
F &= |E| - |V| + K \\
  &= 2K + \sum_{j=1}^n w_j - (2K + 1 - n + \sum_{j=1}^n w_j) + K \\
  &= n + K - 1
\end{align*}
\end{proof}

\begin{lemma}\label{lem_photons_in_trail}
For any subset $\{s_i\}_{i=1}^p$ of the items, the trail in the graph corresponding to the subset has $P$ photons where
\[
2K + \sum_{j=1}^n w_j \le P \le 4K + 2n - 2 - p + \sum_{j=1}^n w_j.
\]
\end{lemma}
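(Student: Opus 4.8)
The plan is to write the photon count as $P = L + F$, where $L$ contributes one photon per node of the trail (for its single-qubit measurement) and $F$ contributes one photon per fusion endpoint sitting on a node of the trail, and to estimate the two contributions from the structure of $G$. First I would make the trail explicit: in the graph $G$ of Theorem~\ref{thm_trail_decomp_np_complete} a trail of a minimum trail decomposition runs between two of the degree-one vertices, and each time it reaches the hub $v$ it must walk once all the way around whichever cycle it enters before it can return to $v$. Hence the trail attached to the subset $\{s_i\}_{i=1}^{p}$ consists of the two pendant edges together with the $\sum_{i=1}^{p} w_{s_i}$ cycle edges, it passes through $v$ exactly $p+1$ times, and counting vertices with multiplicity it has $L = 3 + \sum_{i=1}^{p} w_{s_i}$ nodes; bounding $\sum_{i=1}^{p} w_{s_i}$ between $0$ and $\sum_{j=1}^{n} w_j$ accounts for the $\sum_{j=1}^{n} w_j$ term in both inequalities.

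Next I would bound $F$. By Lemma~\ref{lem_num_fusions} the decomposition uses only $X$-fusions, and since $v$ is the only vertex of $G$ a trail can revisit, all $n+K-1$ fusions merge copies of $v$; these fusions form a spanning tree $\mathcal{T}$ on the $n+K$ copies of $v$, and $F$ is the sum of the $\mathcal{T}$-degrees of the $p+1$ copies of $v$ on our trail. Since every node of a tree has degree at least one, $F \ge p+1$, attained when each of the trail's copies of $v$ is fused directly to a copy belonging to a different trail; at the other extreme $F$ is maximised by the most star-like arrangement, in which the remaining $n+K-p-1$ copies of $v$ are hung as leaves off the trail's copies so that essentially every fusion is charged to the trail. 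Combining these two estimates with the value of $L$ yields the stated bounds, the $-p$ in the upper bound reflecting that a larger subset leaves fewer copies of $v$ outside the trail on which to offload fusions.

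The crux of the argument is this fusion-photon estimate: unlike the trail length, $F$ is not a function of the trail alone but of the global $X$-fusion pattern, so one has to quantify over all ways of successively merging the $n+K$ copies of $v$ and identify the extremal trees; once those are pinned down the remainder is elementary degree counting, the only other delicate point being the routine book-keeping of pendant edges, internal cycle vertices, and the node-versus-edge count of a trail. The reason the lemma is worth stating in this form is that the resulting gap between the two bounds is linear in $n+K$ and independent of the weights, which is exactly what lets the subsequent scaling of $G$ make the choice of $X$-fusion arrangement irrelevant.
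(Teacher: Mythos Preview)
Your decomposition $P = L + F$, with $L$ the node count of the trail and $F$ the sum over the trail's $p+1$ copies of $v$ of their degrees in the $X$-fusion tree on all $n+K$ copies, is exactly the paper's approach: the paper writes the same quantity as ``measurement photons plus fusion photons $F_p$'' and simply asserts $p \le F_p \le 2(n+K-1)$ without naming the extremal trees, so your discussion of the leaf and star arrangements is if anything more explicit than the paper's, not different from it.

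The gap is your treatment of the term $\sum_{j=1}^{n} w_j$. You read it literally as the total weight of all items and propose to ``bound $\sum_{i=1}^{p} w_{s_i}$ between $0$ and $\sum_{j=1}^{n} w_j$'', but this cannot produce the lower inequality: substituting $\sum_i w_{s_i} \ge 0$ into $L = 3+\sum_i w_{s_i}$ eliminates the weight term rather than introducing the claimed $+\sum_{j=1}^{n} w_j$. In fact the sum in the lemma is a notational slip for the subset sum $\sum_{i=1}^{p} w_{s_i}$ itself (this is forced by the identity $\sum(w_j-1) = (\sum w_j) - p$ in the paper's own proof, and by the way the bound is invoked immediately afterwards with $\sum w_{s_i} = C$), so no bounding of the subset sum is required at all: your exact $L = 3 + \sum_i w_{s_i}$ together with $p+1 \le F \le 2(n+K-1)$ already gives bounds of the right shape. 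Your constants do not literally reproduce the paper's --- in particular the ``$2K$'' in the lower bound is itself suspect, since a single trail meets only two vertices of $U$ --- but for the downstream scaling argument any bounds whose gap is linear in $n+K$ and independent of the weights suffice, and yours are.
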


\begin{proof}
We have $2K$ measurement photons in $U$, optionally one measurement photon in $v$, $\sum_{j=1}^n (w_j-1) = (\sum_{j=1}^n w_j) - p$ measurement photons in the cycles, and between $p$ and $2F = 2(n + K - 1)$ fusion photons by Lemma~\ref{lem_num_fusions}.

We can write this as
\[
P = 2K + F_p - p + \sum_{j=1}^n w_j
\]
where $F_p$ is the number of fusion photons in the cycle and $p \le F_p \le 2(n + K - 1)$. Substituting the lower and upper bounds for $F_p$ yields the result.
\end{proof}

\begin{theorem}
\problem{MinPhotonBoundedTrailDecomposition} is NP-hard.
\end{theorem}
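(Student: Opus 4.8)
The plan is to re-run the bin-packing reduction behind Theorem~\ref{thm_trail_decomp_np_complete} on a rescaled instance, so that the only quantity that the combinatorial trail decomposition fails to determine --- namely how the repeated visits of the hub vertex are reassembled by $X$ fusions, and hence the exact fusion-photon count --- varies over a range that is too small to affect feasibility. Concretely, I would start from a strongly NP-hard bin-packing instance $(w_i)_{i=1}^n$, $C$, $K$ (unary weights, as is already implicit in Theorem~\ref{thm_trail_decomp_np_complete}), pick a scaling factor $M = M(n,K)$ to be fixed later, and build exactly the graph $G$ of that theorem for the instance $(Mw_i)_i$, $MC$, $K$: a hub $v$, pendants $u_1,\dots,u_{2K}$, and one cycle of length $Mw_j$ through $v$ per item $j$. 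Scaling only lengthens cycles, so $\Odd(G) = \{u_1,\dots,u_{2K}\}$ is unchanged, and by Theorem~\ref{thm_min_trail_decomp} every size-$K$ trail decomposition still consists of trails whose endpoints exhaust $U$ and which each traverse a whole cycle or none of it; such a trail corresponds to a sub-collection $S$ of items and has length $2 + M\sum_{j\in S} w_j$.

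Next I would fix the photon budget. By Lemma~\ref{lem_num_fusions} the graph has $n+K-1$ fusions regardless of $M$, and by (the accounting in) Lemma~\ref{lem_photons_in_trail} the photon weight of any trail in a size-$K$ decomposition equals $M\cdot(\text{total item weight it carries}) + \delta$, where the overhead $\delta$ --- the pendant and cycle measurement photons, the optional $v$ measurement photon, and the fusion photons incident to that trail --- ranges over an interval $[\delta_{\min},\delta_{\max}]$ whose width is $O(n+K)$ and which does \emph{not} depend on $M$. I would then choose $M := \delta_{\max}-\delta_{\min}+1 = O(n+K)$ and set the photon bound to $B := MC + \delta_{\max}$. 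With this choice a trail carrying items of total weight $\le C$ can always be realised within $B$ photons (route the $X$ fusions at $v$ favourably, exactly as in the proof of Lemma~\ref{lem_photons_in_trail}), while a trail carrying total weight $\ge C+1$ needs at least $M(C+1)+\delta_{\min} = MC + M + \delta_{\min} > MC + \delta_{\max} = B$ photons. Hence $G$ admits a $B$-photon-bounded trail decomposition of size $\le K$ if and only if the rescaled instance --- equivalently the original instance --- is feasible.

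Finally I would check polynomiality: $M$ and $B$ are polynomial in $n$ and $K$, and $|E(G)| = 2K + M\sum_j w_j$ is polynomial in the (unary) input size, so this is a polynomial-time reduction; NP-hardness of \problem{MinPhotonBoundedTrailDecomposition} then follows exactly as Theorem~\ref{thm_min_bounded_td_nphard} follows from Theorem~\ref{thm_trail_decomp_np_complete}. The one place that needs genuine care, and which I expect to be the main obstacle, is the two-sided matching between packings and decompositions: I must verify both that a feasible packing admits \emph{some} $X$-fusion arrangement keeping \emph{every} trail simultaneously at or below $\delta_{\max}$, and that \emph{no} $X$-fusion arrangement can pull an over-full trail below $B$ --- the latter being precisely what forces $M$ to strictly exceed the overhead spread $\delta_{\max}-\delta_{\min}$. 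Everything else is a transcription of the proof of Theorem~\ref{thm_trail_decomp_np_complete}, with cycle lengths multiplied by $M$.
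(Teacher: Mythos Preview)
Your proposal is correct and follows essentially the same route as the paper: both scale the cycle lengths in the bin-packing gadget of Theorem~\ref{thm_trail_decomp_np_complete} by a factor large enough (your $M$, the paper's $S$) that the $O(n+K)$ slack in fusion-photon placement at the hub (quantified via Lemmas~\ref{lem_num_fusions} and~\ref{lem_photons_in_trail}) cannot bridge the gap between a bin at load $\le C$ and one at load $\ge C+1$. Your version is if anything a little more careful --- you explicitly flag the need for strong NP-hardness of bin packing and the simultaneous-feasibility issue for the $X$-fusion arrangement across all trails --- but the underlying argument is the same.
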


\begin{proof}
Our proof follows by showing that there exists a constant positive integer $S$ such that by scaling the cycles of the graph in the previous proof to have $S$ times more edges we will obtain a graph where the solutions to \problem{MinPhotonBoundedTrailDecomposition} are in one-to-one correspondance with solutions to the original bin packing problem.

First note that by Lemma~\ref{lem_photons_in_trail} we have that any subset of the items $\{i_j\}_{j=1}^n$ where $\sum_{j=1}^n w_{i_j} \le C$ for some constant $C$, then the corresponding trail in the graph has at most $(SC - p + 4K + 2n - 2)$ photons and so any solution to the bin packing problem is also a solution to the $(SC - p + 4K + 2n - 2)$-photon bounded trail decomposition problem.

Now suppose the subset of items sums to $C + 1$ and is hence not in a valid solution to the bin packing problem. Then the minimum number of photons the corresponding trail can have is $2K + S(C + 1)$ by Lemma~\ref{lem_photons_in_trail}. If we were to make this trail not part of a valid solution to the same trail cover problem, then
\[
2K + S(C + 1) > SC - p + 4K + 2n - 2
\]
Rearranging gives
\[
S > 2K + 2n - 2 - p
\]
Thus for sufficiently large $S$, solutions to the bin packing problem are in one-to-one correspondance with the solutions to the $(SC - p + 4K + 2n - 2)$-photon bounded trail decomposition problem. Thus \problem{MinPhotonBoundedTrailDecomposition} is NP-hard.
\end{proof}

\subsection{The minimum trail cover problem is NP-hard}

We might expect the minimum trail cover problem to be at least as hard as the minimum path cover problem. We confirm this intuition by showing that solutions to \problem{MinTrailCover} can produce solutions to \problem{MinPathCover} on cubic graphs and is therefore NP-hard.

\begin{theorem}\label{thm_mtc_nphard}
\problem{MinTrailCover} is NP-hard.
\end{theorem}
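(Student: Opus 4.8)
The plan is to reduce \problem{MinPathCover} restricted to cubic graphs to \problem{MinTrailCover}; the former is NP-hard because deciding whether a cubic graph admits a path cover of size $1$ is exactly the Hamiltonian path problem on cubic graphs, which is NP-complete. Since a cubic graph is already a legal \problem{MinTrailCover} instance, the reduction is the identity map, so it is enough to show that for every cubic graph $G$ the minimum trail cover and the minimum path cover have the same size; an algorithm for \problem{MinTrailCover} then computes the minimum path cover, and in particular detects Hamiltonicity. One direction is free: every path cover is a trail cover, so the minimum trail cover is never larger than the minimum path cover. For the NP-hardness conclusion alone it suffices to establish this equality in the case of value $1$, i.e.\ that a cubic graph has a spanning trail if and only if it has a Hamiltonian path.

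For the reverse inequality, let $\M{C}$ be a minimum trail cover of $G$ and let $H \subseteq E$ be the set of edges used by the trails of $\M{C}$. These trails are an edge-disjoint decomposition of $H$ that touches every vertex of $G$, so by the disconnected form of Theorem~\ref{thm_min_trail_decomp} (Lemma~\ref{lem_mtd_num_cc}), $|\M{C}|$ is at least $\frac{1}{2}|\Odd(H)|$ plus the number of Eulerian connected components of $H$ (ignoring the easily handled case of isolated vertices of $H$). The crux is then the following structural lemma, which I would reduce the theorem to: \emph{every connected graph $K$ with maximum degree at most $3$ and with $2j$ vertices of odd degree contains a spanning linear forest (a vertex-disjoint union of paths covering $V(K)$) with at most $\max(1,j)$ components.} Applying this to each component of $H$ and taking the union produces a spanning linear forest of $G$ whose number of paths is at most $\frac{1}{2}|\Odd(H)|$ plus the number of Eulerian components of $H$, hence at most $|\M{C}|$; its paths form the required path cover, giving $\mathrm{minpathcover}(G) \le |\M{C}| = \mathrm{mintrailcover}(G)$.

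It remains to prove the structural lemma, and I would do this by induction on $|E(K)|$. The base case $j \le 1$ is a finite classification: a connected graph of maximum degree at most $3$ in which every vertex has positive degree and at most two vertices have odd degree is a cycle, a path, a ``tadpole'' (a cycle with a pendant path attached), a theta graph (two vertices joined by three internally disjoint paths), or a ``handcuffs'' graph (two cycles joined by a path), and each of these five shapes has a spanning path, which one writes down directly. This base case already gives the Hamiltonian-path equivalence: if a single trail covers all of $G$, its edge set is a connected spanning subgraph of the cubic graph $G$ with every vertex of positive degree and at most two odd-degree vertices, hence one of the five shapes, hence contains a Hamiltonian path. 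For the inductive step, with $j \ge 2$: if $K$ has a leaf $w$, delete it, apply the induction hypothesis to $K - w$, and re-attach $w$ --- the only vertex whose parity changes is the neighbour of $w$, and it has degree at most $1$ in the recursive linear forest, so the edge back to $w$ extends a path or starts a new one without ever exceeding $\max(1,j)$ components; if $K$ has minimum degree at least $2$, it contains a cycle, so delete a non-bridge edge on that cycle, apply the induction hypothesis, and re-insert the edge.

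The main obstacle is this last step: deleting a non-bridge edge between two degree-$3$ vertices does lower $|\Odd(K)|$ by $2$, but it can create new leaves and new components, and re-inserting the edge must not create a vertex of degree $3$ or close a cycle --- the case where the two endpoints of the re-inserted edge already lie on the same path of the recursive solution must be handled by a local rerouting argument, and the bookkeeping to confirm the bound $\max(1,j)$ is preserved in every case is the delicate part. For the statement of the theorem, however, this can be avoided entirely: only the $j \le 1$ base case enters the Hamiltonian-path reduction, and that case is the short explicit check above, so \problem{MinTrailCover} is NP-hard via the identity reduction from cubic \problem{MinPathCover}.
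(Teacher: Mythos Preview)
Your argument is correct and, like the paper's, reduces from Hamiltonian path on cubic graphs via the identity reduction; but the mechanism you use to pass from a spanning trail back to a Hamiltonian path is genuinely different. You take the edge set $H$ of the covering trail, observe it is a spanning connected subcubic graph with at most two odd-degree vertices, and then classify such graphs (cycle, path, tadpole, theta, handcuffs), checking each admits a spanning path. The paper instead works directly on the trail cover: in a cubic graph, any vertex occurring more than once among the trails of a minimum cover must be an endpoint of one of those occurrences (three edges cannot support two pass-throughs), so one can repeatedly delete that terminal edge without losing coverage; the process terminates in a cover in which every vertex is visited exactly once, i.e.\ a path cover of the same size. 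This retraction argument is shorter, avoids any case analysis, and in fact yields the stronger statement that the minimum trail cover and minimum path cover of a cubic graph coincide for every value, not just value~$1$ --- the very statement your inductive lemma was aiming at before you retreated to the base case. Conversely, your classification argument is self-contained and does not rely on manipulating the trail cover in place, which makes the ``spanning trail $\Leftrightarrow$ Hamiltonian path'' equivalence for subcubic graphs explicit as a standalone fact.
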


\begin{proof}
Let $G$ be a cubic graph and suppose $\M{C}$ is a minimum trail cover for $G$. Then each vertex is traversed by at most two trails since if it was traversed by three, then all three must eand at the vertex and so we may join two trails together to obtain a smaller trail cover. For any vertex traversed by two trails in $\M{C}$, one of the trails must end at the vertex since the degree of the vertex is three. We can then removing the last edge of this trail producing a trail cover of the same size. By performing these retractions wherever possible, we obtain a trail cover where each vertex is traversed by exactly one trail. Thus each trail is a path and the trail cover is now a minimum path cover of $G$.

Since finding a Hamiltonian path in a cubic graph is NP-hard~\cite{cite_cubic_hamiltonian_nphard}, finding a minimum path cover is also NP-hard and therefore \problem{MinTrailCover} is NP-hard.
\end{proof}

It follows naturally that \problem{MinBoundedTrailCover} is also NP-hard.

This also implies that the corresponding graph problem for minimizing fusion networks with photon-bounded linear resource states is NP-hard.

\begin{theorem}
    \problem{MinPhotonBoundedTrailCover} is NP-hard.
\end{theorem}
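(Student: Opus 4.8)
The plan is to observe that \problem{MinPhotonBoundedTrailCover} contains \problem{MinTrailCover}, already shown NP-hard in Theorem~\ref{thm_mtc_nphard}, as the special case in which the photon budget is chosen so large that it never constrains any trail. Given a graph $G = (V,E)$, viewed as an instance of \problem{MinTrailCover} (which Theorem~\ref{thm_mtc_nphard} obtains from Hamiltonian path on cubic graphs), I would map it in polynomial time to the instance $(G, B)$ of \problem{MinPhotonBoundedTrailCover} with photon bound $B := 2|E| + 1$; any polynomially-computable value at least this large would serve equally well.

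The first step is to check that $B$ is vacuous on every minimum trail cover. For any trail $T$, the corresponding linear resource state has $|T| + 1 \le |E| + 1$ nodes, counting a repeated vertex once per visit, and the number of fusions touching those nodes is at most the total fusion count of the cover, which by Theorem~\ref{thm_num_fusions} equals $|E| - |V| + |\M{C}|$. Since the cover of $G$ by $|V|$ single-vertex (length-$0$) trails is already a valid trail cover, a minimum trail cover has at most $|V|$ trails, so the total fusion count — and hence $F$ for any individual trail — is at most $|E|$. Therefore $w(T) = L + F \le (|E| + 1) + |E| = B$ for every trail of any minimum trail cover of $G$, so the photon constraint is automatically satisfied by the unconstrained optimum.

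From this the reduction is answer-preserving: the optimum of \problem{MinPhotonBoundedTrailCover} on $(G,B)$ is no smaller than the optimum of \problem{MinTrailCover} on $G$, being a constrained version of the same minimisation, and no larger, since the unconstrained optimal cover is feasible under $B$; hence the two coincide. NP-hardness of \problem{MinPhotonBoundedTrailCover} then follows from Theorem~\ref{thm_mtc_nphard}, and the identical argument with \problem{MinBoundedTrailCover} in place of \problem{MinTrailCover} also yields the length-bounded variant. The only point that needs care — rather than a genuine obstacle — is the one already flagged for \problem{MinPhotonBoundedTrailDecomposition}: the weight $w(T)$ is not a function of the trail cover alone, since it depends on how the $X$ fusions at multiply-covered vertices are distributed among the incident trails. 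This causes no difficulty here, because the argument uses only an \emph{upper} bound on $F$, and the total fusion count $|E| - |V| + |\M{C}|$ bounds the number of fusions involving any single trail regardless of how they are arranged.
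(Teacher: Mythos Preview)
Your approach is correct and follows the same high-level strategy as the paper: reduce from \problem{MinTrailCover} (Theorem~\ref{thm_mtc_nphard}) by choosing the photon bound large enough to be non-binding on any minimum cover. The execution differs: the paper stays inside the cubic-graph setting inherited from Theorem~\ref{thm_mtc_nphard} and uses the degree-$3$ structure to compute a tight per-path photon count (at most $2L+4$ photons for a path on $L+1$ nodes, since each interior node carries one fusion and each endpoint carries two), whereas you work over arbitrary graphs and bound the weight crudely by $2|E|+1$. Your route is shorter and avoids the cubic-graph bookkeeping; the paper's route yields an explicit correspondence between photon bounds and length bounds on cubic graphs, which is structurally more informative but unnecessary for bare NP-hardness.

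One small point to tighten: under the paper's photon accounting each fusion contributes one photon to \emph{each} of its two incident nodes, so the fusion-photon contribution $F$ of a single trail can be as large as \emph{twice} the total fusion count when a fusion lies entirely within that trail (e.g., a $Y$ fusion between two nodes of the same resource state, or an $X$ fusion at a self-intersection of the trail). Your estimate $F \le |E| - |V| + |\M{C}| \le |E|$ should therefore read $F \le 2(|E| - |V| + |\M{C}|) \le 2|E|$, giving $B = 3|E|+1$ rather than $2|E|+1$. As you already note, any polynomially computable bound suffices, so this does not affect the validity of the reduction.
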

\begin{proof}
    Suppose we have a trail $T$ in a cubic graph where each trail corresponds to a linear resource state of length $L$. Then by the proof of Theorem~\ref{thm_mtc_nphard}, we see that we take $T$ to be a path. Then each node in the resource state has one photon for a measurement (or for output) and one photon for every fusion that occurs at the node. Since $T$ is in a cubic graph, each intermediate node has one fusion and each endpoint has two fusions. Thus the resource state consists of at most $2(L - 1) + 2*3 = 2L + 4$ photons. Therefore a solution to the bounded photon fusion network problem of size $2L + 4$ is a solution to the minimum $L$-trail cover on cubic graphs which we know to be NP-hard from the proof of Theorem~\ref{thm_mtc_nphard}. Observe that if we had $2L + 5$ photons, the size of our resource states would not change since any additional intermediate node required two photons. Therefore we can conclude this problem is also NP-hard.
\end{proof}

\section{Efficient approximation algorithms}\label{sec:approximation}
\subsection{Approximating minimum $L$-trail decompositions}

Since the minimum $L$-trail decomposition problem is NP-hard (Theorem~\ref{thm_min_bounded_td_nphard}), we will now focus on developing efficient approximation algorithms. A natural approach is to find a minimum trail decomposition of unbounded length and subdivide the trails into $L$-trails. We prove tight bounds on this algorithm's accuracy and show that its accuracy depends linearly on the number of odd vertices.

The following number theory result is proved in the appendix.

\begin{restatable}{lemma}{lemNumberTheoryBounds}\label{lem_number_theory_bounds}
Let $(t_i)_{i=1}^N$ and $L$ be positive integers. Then the following inequality holds and is tight.
\[
\sum_{i = 1}^N \ceil{\frac{t_i}{L}} - \ceil{\sum_{i = 1}^N \frac{t_i}{L}} \le N - \ceil{\frac{N}{L}}.
\]
\end{restatable}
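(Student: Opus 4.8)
The plan is to strip the integer parts off the $t_i$ and reduce the whole inequality to monotonicity of the single function $f(x) = x - \ceil{x/L}$ on the non-negative integers.

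First I would write $t_i = L q_i + r_i$ with $q_i \ge 0$ and $0 \le r_i \le L-1$, and let $m = \#\{i : r_i \neq 0\}$. Since $\ceil{t_i/L} = q_i + \ceil{r_i/L}$, and $\ceil{r_i/L}$ equals $1$ when $r_i \neq 0$ and $0$ otherwise, summing gives $\sum_i \ceil{t_i/L} = \sum_i q_i + m$. On the other side $\sum_i t_i = L\sum_i q_i + \sum_i r_i$, so $\ceil{\sum_i t_i/L} = \sum_i q_i + \ceil{(\sum_i r_i)/L}$, and the left-hand side of the claim collapses to
\[
m - \ceil{(\textstyle\sum_{i} r_i)/L}.
\]

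The key observation is then that each nonzero remainder contributes at least $1$, so $\sum_i r_i \ge m$ and hence $\ceil{(\sum_i r_i)/L} \ge \ceil{m/L}$; this bounds the left-hand side by $f(m) = m - \ceil{m/L}$. It remains to check $f(m) \le f(N)$, and since $0 \le m \le N$ this follows once we know $f$ is non-decreasing on the non-negative integers, which is immediate from $\ceil{(x+1)/L} - \ceil{x/L} \in \{0,1\}$, so that $f(x+1) - f(x) \ge 0$. For tightness I would exhibit $t_1 = \dots = t_N = 1$: when $L \ge 2$ this forces all $r_i = 1$, hence $m = N$, and every inequality above is an equality, giving left-hand side $N - \ceil{N/L}$; when $L = 1$ both sides vanish.

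I do not expect a genuine obstacle here. The only thing one has to get right is the normalization — realizing that the bound should be phrased in terms of the count $m$ of nonzero remainders rather than $\sum_i r_i$ directly — after which the argument is just the monotonicity of $f$ together with $m \le N$.
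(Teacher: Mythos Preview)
Your argument is correct and follows essentially the same route as the paper: both first strip off the integer parts to reduce the left-hand side to an expression in the remainders $r_i = t_i \bmod L$, and then show the all-ones configuration $r_i = 1$ is extremal. The only cosmetic difference is that the paper reaches the maximum via an exchange argument (replacing each $r_i$ by $1$ one at a time and checking the expression does not decrease), whereas you package the same idea as the single bound $\sum_i r_i \ge m$ together with monotonicity of $f(x) = x - \ceil{x/L}$; your version is slightly slicker but proves exactly the same thing by the same reduction.
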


We now present our approximation algorithm and establish tight bounds on its accuracy.

\begin{restatable}{proposition}{propLTrailApprox}\label{prop_ltrail_approx}
We can find an $L$-trail decomposition of a graph $G$ in polynomial time that contains at most $\floor{\frac{1}{2}|\Odd(G)|(1 - \frac{1}{L})}$ more trails than the minimum.
\end{restatable}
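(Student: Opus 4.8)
The plan is to run the obvious two-stage algorithm: first compute a minimum (unbounded-length) trail decomposition of $G$ via Theorem~\ref{thm_min_trail_decomp}, then chop each of its trails into consecutive pieces of length at most $L$. Concretely, suppose first that $G$ is connected with $\Odd(G) \ne \emptyset$, and let $\M{T} = \{T_1, \dots, T_N\}$ be a minimum trail decomposition, so $N = \frac12|\Odd(G)|$ and, writing $t_i = |T_i|$, we have $\sum_{i=1}^N t_i = |E|$ because the trails partition the edge set. Cutting $T_i$ into $\ceil{t_i/L}$ sub-trails of length at most $L$ yields an $L$-trail decomposition $\M{T}'$ with $|\M{T}'| = \sum_{i=1}^N \ceil{t_i/L}$. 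Both stages run in polynomial time (Hierholzer's algorithm, used in Theorem~\ref{thm_mtd_in_p}, followed by a single linear pass over the trails).

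Next I would lower-bound the optimum: since every $L$-trail covers at most $L$ edges, any $L$-trail decomposition of $G$ has at least $\ceil{|E|/L}$ trails. Hence the number of excess trails produced is at most
\[
\sum_{i=1}^N \ceil{\frac{t_i}{L}} - \ceil{\frac{|E|}{L}} = \sum_{i=1}^N \ceil{\frac{t_i}{L}} - \ceil{\sum_{i=1}^N \frac{t_i}{L}} \le N - \ceil{\frac{N}{L}},
\]
where the last inequality is exactly Lemma~\ref{lem_number_theory_bounds} applied to $(t_i)_{i=1}^N$ and $L$. Since $|\Odd(G)|$ is always even, $N = \frac12|\Odd(G)|$ is an integer, so the identity $a - \ceil{b} = \floor{a - b}$ valid for $a \in \Z$ gives $N - \ceil{N/L} = \floor{N(1 - 1/L)} = \floor{\frac12|\Odd(G)|(1 - \tfrac1L)}$, which is the claimed bound.

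Then I would dispose of the remaining cases. If $G$ is connected and Eulerian, the minimum trail decomposition is a single trail; subdividing it gives exactly $\ceil{|E|/L}$ trails, which already attains the lower bound, and the claimed bound is $\floor{0} = 0$, so there is nothing to prove. For a disconnected $G$ with components $G_1, \dots, G_c$, I would apply the connected argument component-wise: each $G_j$ (whether Eulerian or not) produces at most $\floor{\frac12|\Odd(G_j)|(1-\tfrac1L)}$ trails beyond $\ceil{|E_j|/L}$, and since no trail can span two components the optimum for $G$ is at least $\sum_j \ceil{|E_j|/L}$. Summing the per-component estimates and using $\sum_j |\Odd(G_j)| = |\Odd(G)|$ together with the superadditivity of the ceiling, $\sum_j \ceil{\frac{|\Odd(G_j)|}{2L}} \ge \ceil{\frac{|\Odd(G)|}{2L}}$, collapses the total back to $\floor{\frac12|\Odd(G)|(1-\tfrac1L)}$.

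The routine parts are the polynomial-time claim and the ceiling arithmetic; the real engine is Lemma~\ref{lem_number_theory_bounds}, whose proof (deferred to the appendix) carries the genuine number-theoretic content — it is what certifies that distributing the cuts over $N$ trails of total length $|E|$ cannot cost more than $N - \ceil{N/L}$ extra pieces beyond the trivial $\ceil{|E|/L}$. One subtlety I would flag explicitly is that the argument must compare against the weak lower bound $\ceil{|E|/L}$ rather than $\max(\ceil{|E|/L}, \frac12|\Odd(G)|)$: the stronger bound would only sharpen the estimate, and it is precisely the $\ceil{|E|/L}$ term, fed into Lemma~\ref{lem_number_theory_bounds}, that produces the stated expression (and, via a matching construction, shows it is tight).
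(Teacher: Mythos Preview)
Your proposal is correct and follows essentially the same approach as the paper: compute a minimum trail decomposition, subdivide into $L$-trails, lower-bound the optimum by $\ceil{|E|/L}$, and invoke Lemma~\ref{lem_number_theory_bounds} to bound the excess by $N - \ceil{N/L} = \floor{N(1-1/L)}$. You add a careful treatment of the disconnected case (which the paper leaves implicit) and spell out the identity $a - \ceil{b} = \floor{a-b}$ for integer $a$; one minor slip is that the property $\sum_j \ceil{x_j} \ge \ceil{\sum_j x_j}$ is \emph{sub}additivity of the ceiling, not superadditivity, but the inequality and its use are correct.
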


\begin{restatable}{proposition}{propPhotonTrailApprox}\label{prop_photon_trail_approx}
We can find an fusion network that implements an open graph with graph $G$ in polynomial time that contains at most $\frac{1}{2}|\Odd(G)|(1 - \frac{3}{L-2}) + 1$ more resource states than the minimum.
\end{restatable}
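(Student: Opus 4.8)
The plan is to run essentially the same subdivision strategy as in Proposition~\ref{prop_ltrail_approx}, but now measuring cost in \emph{photons} (counting one photon per node for measurement plus one photon per incident fusion) rather than in trail count, and then translate the photon-budget bound back into a bound on the number of resource states. First I would start from a minimum \emph{unbounded} trail decomposition $\M{T}$ of $G$, which by Theorem~\ref{thm_min_trail_decomp} has $\frac{1}{2}|\Odd(G)|$ trails (assuming $G$ connected with odd vertices; the other cases are easier and handled separately). A trail $T$ of this decomposition that visits $n_T$ vertices and has $e_T = n_T - 1$ edges contributes to the target graph a resource state whose photon count is $n_T$ (measurement photons) plus the number of fusions incident to its nodes. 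The key accounting step is to relate ``photons per resource state'' to ``edges per sub-trail'': when we cut a long trail into pieces, each internal cut creates one extra $X$ fusion (hence two extra fusion photons, one on each side) and one extra measurement photon (the duplicated vertex), so a piece of $\ell$ edges costs at most $\ell + 1 + 2 = \ell + 3$ photons coming from its own edges and duplicated endpoint, plus a bounded number of fusion photons inherited from the original endpoints of $T$. Thus bounding each resource state by a constant number of photons $P$ is, up to the additive constants $4K$-type terms appearing already in Lemma~\ref{lem_photons_in_trail}, the same as bounding each sub-trail by roughly $L - 2$ edges (which is where the $\frac{1}{L-2}$ and the $\frac{3}{L-2}$ in the statement come from: the ``$3$'' is exactly the three extra photons per cut, and subdividing a trail of $t$ edges into pieces of $\le L-2$ edges yields $\ceil{t/(L-2)}$ pieces).

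Next I would invoke the number-theoretic Lemma~\ref{lem_number_theory_bounds} with $L$ replaced by $L-2$: summing $\ceil{t_i/(L-2)}$ over the trails of $\M{T}$ and comparing with the ceiling of the sum, the total overhead introduced by subdivision is at most $N - \ceil{N/(L-2)}$ where $N = \frac{1}{2}|\Odd(G)|$ is the number of trails. This gives an overhead bounded by $\frac{1}{2}|\Odd(G)|\bigl(1 - \frac{1}{L-2}\bigr)$ sub-trails over the photon-optimal count. I then need to compare this photon-optimal count against the true minimum number of resource states in a photon-bounded fusion network; since the photon-optimal decomposition is a valid competitor it is at least as small, but because the true minimum fusion network could also exploit $Y$ fusions to delete edges I would argue (as in the discussion around Theorem~\ref{thm_num_fusions}) that photons and fusions are monotone in the right direction, so the minimum over all photon-bounded fusion networks is bounded below by what the trail-decomposition argument yields, up to the single additive ``$+1$'' that absorbs the odd-$|E|$ / disconnected-component corrections and the rounding when $|\Odd(G)|$ forces one trail whose sole fusion photons push it just over budget. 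Combining the two inequalities gives the claimed bound $\frac{1}{2}|\Odd(G)|\bigl(1 - \frac{3}{L-2}\bigr) + 1$; note the coefficient is $3$ rather than $1$ because each photon-budget of $P$ corresponds to an edge-budget of only about $(P - \text{const})/1$ but each cut simultaneously burns $3$ photons, effectively shrinking the usable length by a factor tied to $3/(L-2)$.

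Finally I would check that the whole construction is polynomial time: a minimum unbounded trail decomposition is found via Hierholzer's algorithm in $O(|E|)$ after the odd-vertex pairing of Theorem~\ref{thm_min_trail_decomp}, and greedily cutting each trail into pieces of the maximal admissible photon weight is linear in $|E|$ as well, with the photon count of each candidate piece computable incrementally. The main obstacle I anticipate is \emph{not} the subdivision combinatorics — that is a direct reuse of Lemma~\ref{lem_number_theory_bounds} — but rather pinning down the exact constant: one must be careful about how many fusion photons an \emph{endpoint} of a sub-trail carries (an original odd endpoint of $T$ contributes its residual degree in fusions, an internal cut contributes exactly one $X$ fusion, and a node shared by a $Y$ fusion contributes yet another), and about the worst case where a resource state of $L - 2$ edges already has a full complement of endpoint fusion photons, forcing us to cut one edge earlier. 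Getting this bookkeeping to land precisely on $1 - \frac{3}{L-2}$ with the single additive $+1$, rather than something slightly weaker, is the delicate part, and I would expect the honest version of the argument to track the photon count of a sub-trail as $\ell + 3 + (\text{endpoint fusion photons inherited from }T)$ and then apply Lemma~\ref{lem_number_theory_bounds} to the adjusted lengths $t_i + (\text{endpoint corrections})$.
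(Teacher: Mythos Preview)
Your high-level plan---start from a minimum unbounded trail decomposition, subdivide each trail into photon-bounded pieces, and bound the overhead via Lemma~\ref{lem_number_theory_bounds} with $L$ replaced by $L-2$---matches the paper's. Where you go astray is the origin of the~$3$. Your application of the lemma gives (correctly) an overhead of at most $K - \ceil{K/(L-2)} \le K\bigl(1 - \tfrac{1}{L-2}\bigr)$, and you then try to upgrade the $1$ to a $3$ via a ``three extra photons per cut'' argument. That accounting is off (each cut adds two fusion photons, one on each side, not three), and in any case this is not how the paper recovers the stronger constant.

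The paper works entirely in photon units. If trail $T_i$ corresponds to a resource state with $P_i$ photons, then subdividing it into pieces of at most $L$ photons yields exactly $\ceil{(P_i - 2)/(L-2)}$ pieces; this clean formula is Lemma~\ref{lem_subdivision_eq}, established separately, and replaces your heuristic ``$\ell + 3$ photons per piece'' reasoning. For the lower bound on the minimum, the paper uses $\ceil{\bigl((\sum_i P_i) - 2\bigr)/(L-2)}$. The key step you are missing is that $\sum_i(P_i - 2) = (\sum_i P_i) - 2K$, so comparing $\sum_i \ceil{(P_i-2)/(L-2)}$ against this lower bound introduces an extra additive term $(2K-2)/(L-2)$ \emph{on top of} what Lemma~\ref{lem_number_theory_bounds} provides. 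Combining $\ceil{K/(L-2)}$ with $\ceil{(2K-2)/(L-2)}$ yields (up to a $+1$) the term $\ceil{(3K-2)/(L-2)}$, and \emph{that} is where the $3$ comes from: not from per-cut photon overhead, but from the mismatch between $K$ separate ``$-2$'' offsets in the per-trail subdivision formula and the single ``$-2$'' in the global lower bound. Your worry about $Y$ fusions in the competing minimum is a red herring here; the paper's lower bound is a pure photon-count argument independent of fusion type.
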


This bound is tight, as demonstrated in the example below illustrating our subdivision algorithm.

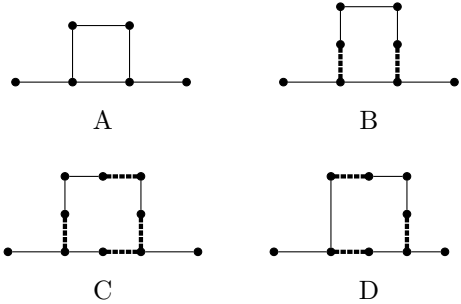
\begin{figure}[h]
\centering
\begin{tikzpicture}
	\begin{pgfonlayer}{nodelayer}
		\node [style=vertex] (0) at (0.85, 2) {};
		\node [style=vertex] (1) at (1.6, 2) {};
		\node [style=vertex] (2) at (2.35, 2) {};
		\node [style=vertex] (3) at (3.1, 2) {};
		\node [style=vertex] (4) at (1.6, 2.75) {};
		\node [style=vertex] (5) at (2.35, 2.75) {};
		\node [style=vertex] (6) at (4.375, 2) {};
		\node [style=vertex] (7) at (5.125, 2) {};
		\node [style=vertex] (8) at (5.875, 2) {};
		\node [style=vertex] (9) at (6.625, 2) {};
		\node [style=vertex] (10) at (5.125, 3) {};
		\node [style=vertex] (11) at (5.875, 3) {};
		\node [style=vertex] (12) at (5.125, 2.5) {};
		\node [style=vertex] (13) at (5.875, 2.5) {};
		\node [style=vertex] (14) at (0.75, -0.25) {};
		\node [style=vertex] (15) at (1.5, -0.25) {};
		\node [style=vertex] (16) at (2, -0.25) {};
		\node [style=vertex] (17) at (3.25, -0.25) {};
		\node [style=vertex] (18) at (1.5, 0.75) {};
		\node [style=vertex] (19) at (2.5, 0.75) {};
		\node [style=vertex] (20) at (1.5, 0.25) {};
		\node [style=vertex] (21) at (2.5, 0.25) {};
		\node [style=vertex] (22) at (2.5, -0.25) {};
		\node [style=vertex] (23) at (4.25, -0.25) {};
		\node [style=vertex] (24) at (5, -0.25) {};
		\node [style=vertex] (25) at (6, -0.25) {};
		\node [style=vertex] (26) at (6.5, -0.25) {};
		\node [style=vertex] (27) at (5, 0.75) {};
		\node [style=vertex] (28) at (6, 0.75) {};
		\node [style=vertex] (29) at (5.5, 0.75) {};
		\node [style=vertex] (30) at (6, 0.25) {};
		\node [style=vertex] (31) at (5.5, -0.25) {};
		\node [style=none] (32) at (5.5, -0.75) {D};
		\node [style=none] (33) at (2, -0.75) {C};
		\node [style=none] (34) at (2, 1.5) {A};
		\node [style=none] (35) at (5.5, 1.5) {B};
		\node [style=vertex] (37) at (2, 0.75) {};
	\end{pgfonlayer}
	\begin{pgfonlayer}{edgelayer}
		\draw (0) to (1);
		\draw (1) to (2);
		\draw (2) to (3);
		\draw (5) to (4);
		\draw (4) to (1);
		\draw (5) to (2);
		\draw (6) to (7);
		\draw (7) to (8);
		\draw (8) to (9);
		\draw (11) to (10);
		\draw (13) to (11);
		\draw (12) to (10);
		\draw (14) to (15);
		\draw (15) to (16);
		\draw (21) to (19);
		\draw (20) to (18);
		\draw (22) to (17);
		\draw (23) to (24);
		\draw (25) to (26);
		\draw (27) to (24);
		\draw (29) to (28);
		\draw (30) to (28);
		\draw (31) to (25);
		\draw (37) to (18);
		\draw [style=X fusion] (12) to (7);
		\draw [style=X fusion] (13) to (8);
		\draw [style=X fusion] (20) to (15);
		\draw [style=X fusion] (21) to (22);
		\draw [style=X fusion] (22) to (16);
		\draw [style=X fusion] (37) to (19);
		\draw [style=X fusion] (27) to (29);
		\draw [style=X fusion] (30) to (25);
		\draw [style=X fusion] (31) to (24);
	\end{pgfonlayer}
\end{tikzpicture}
\caption{Example reaching the bound in Proposition~\ref{prop_ltrail_approx}. To find a minimum $2$-trail decomposition for graph (A), we first find a minimum trail decomposition (B) and subdivide it into four $2$-trails (C). However, the minimum 2-trail decomposition has size 3 (D). This achieves the maximum error bound from Proposition~\ref{prop_ltrail_approx}: $\frac{1}{2}|\Odd(G)|(1 - \frac{1}{L}) = \frac{1}{2} \times 4 (1 - \frac{1}{2}) = 1$.}
\end{figure}

This result shows that reducing the number of odd vertices improves accuracy of the approximation. Indeed, when fewer than 3 odd vertices exist, we obtain a minimum trail decomposition.
On average, this algorithm performs twice as well as the worst case.

\begin{restatable}{proposition}{propAverageLTrailApprox}\label{prop_average_lsec:problemtrail_approx}
The result of Proposition~\ref{prop_ltrail_approx} on average contains at most $\frac{1}{4}|\Odd(G)|$ more trails than the minimum.
\end{restatable}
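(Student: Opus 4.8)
The plan is to analyze the error of the subdivision algorithm from Proposition~\ref{prop_ltrail_approx} under a uniform random choice of how the $\frac{1}{2}|\Odd(G)|$ odd vertices are paired up before an Eulerian circuit is extracted. The worst-case bound $\floor{\frac{1}{2}|\Odd(G)|(1-\frac{1}{L})}$ comes from Lemma~\ref{lem_number_theory_bounds}: if the minimum trail decomposition produces trails of lengths $t_1,\dots,t_N$ with $N = \frac{1}{2}|\Odd(G)|$, then subdividing each trail independently costs $\sum_i \ceil{t_i/L}$ trails, whereas a smarter decomposition could in principle achieve as few as $\ceil{\sum_i t_i / L}$, and the gap is bounded by $N - \ceil{N/L}$. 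The key observation is that this gap is only large when many of the $t_i$ are \emph{not} congruent to $0 \bmod L$; more precisely, the $i$-th trail contributes an excess of roughly $1 - \{t_i/L\}$-type quantity, and over a random pairing the trail lengths — hence their residues mod $L$ — behave symmetrically enough that the expected excess per trail is at most $\tfrac12$.

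First I would fix the multiset of trail lengths arising from \emph{some} canonical Eulerian traversal and observe that permuting which odd vertex is glued to which only permutes / reshuffles how the total edge budget $|E|$ is partitioned among the $N$ trails; crucially the \emph{sum} $\sum_i t_i = |E|$ is invariant. Then I would bound the algorithm's output by $\sum_{i=1}^N \ceil{t_i/L}$ and the optimum below by $\max\{N, \ceil{|E|/L}\}$, so the excess is at most $\sum_{i=1}^N \ceil{t_i/L} - N = \sum_{i=1}^N \left(\ceil{t_i/L} - 1\right)$, and each summand is $0$ exactly when $t_i \equiv 0 \pmod L$ (and at most $t_i/L$ in general, but for the average bound the relevant statement is about how often the ``$+1$'' is incurred). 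The heart of the argument is then a symmetry/averaging claim: under the uniform random pairing of odd vertices, for each trail the probability (or amortised fraction) that its length fails to be a multiple of $L$ — weighted appropriately — is at most $\tfrac12$, giving $\mathbb{E}[\text{excess}] \le \tfrac12 N = \tfrac14 |\Odd(G)|$.

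The main obstacle is making that symmetry claim precise: the trail lengths produced by different pairings are not independent, and the adversarial instance reaching the worst case (the figure with $|\Odd(G)| = 4$ and $L = 2$) shows the bound is genuinely attained for some pairings, so one cannot argue pointwise. I expect the cleanest route is to pair up the pairings themselves — exhibit an involution on the set of perfect matchings of the odd-vertex set (e.g. swapping the partners of two chosen odd vertices, or reversing a sub-trail) under which the total excess of a matching and its image sum to at most $N$, so that averaging over the involution orbits yields the $\tfrac12 N$ bound. Alternatively, since the excess $\sum_i (\ceil{t_i/L}-1)$ is controlled by Lemma~\ref{lem_number_theory_bounds} as at most $N - \ceil{N/L} \le N - 1$ in the \emph{worst} case, and equals $0$ when the traversal can be chosen so every $t_i$ is a multiple of $L$, I would argue that a uniformly random pairing lands ``halfway'' between these: half the residue patterns mod $L$ are as good as possible, in the sense that swapping a single gluing edge toggles parity-type obstructions. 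I would carry out the details in the appendix, alongside Lemma~\ref{lem_number_theory_bounds}, since the computation is number-theoretic bookkeeping rather than graph theory. The photon-bounded analogue (Proposition~\ref{prop_photon_trail_approx}) should then follow by the same averaging argument applied to the rescaled length parameter $L-2$.
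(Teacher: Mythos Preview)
Your proposal diverges from the paper in two ways, one a genuine gap and one a difference in ambition.

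The gap: in your second paragraph you bound the optimum below by $N$ (the number of trails in the minimum trail decomposition) and conclude the excess is at most $\sum_i(\lceil t_i/L\rceil - 1)$. You then claim each summand is $0$ exactly when $t_i\equiv 0\pmod L$. This is false: $\lceil t_i/L\rceil - 1 = 0$ iff $t_i\le L$, not iff $L\mid t_i$. When trails are long (say $t_i = 2L$) each summand is $1$ even though the residue is $0$, and in general $\sum_i(\lceil t_i/L\rceil - 1)$ can be of order $|E|/L$, far larger than $N/2$. The quantity you actually need to average is the residue-based expression from Lemma~\ref{lem_number_theory}, namely $\sum_i \lceil (t_i\bmod L)/L\rceil - \lceil \sum_i (t_i\bmod L)/L\rceil$; only after passing to residues does each term become a $0/1$ indicator of $t_i\not\equiv 0\pmod L$, and only then does the averaging heuristic make sense. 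Your $\lceil|E|/L\rceil$ lower bound is the right one, but you discard it immediately in favour of the weaker $N$.

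The difference in ambition: the paper's own proof is far less rigorous than what you attempt. It does not average over random pairings of odd vertices at all; it simply asserts that ``on average'' the nonzero residues $t_i\bmod L$ equal $L/2$, plugs that into the residue expression above to get $(K-N)-(K-N)/2 = (K-N)/2\le K/2$, and substitutes $K=\tfrac12|\Odd(G)|$. No involution, no symmetry argument, no probability space is specified. Your random-pairing framework and the involution idea are attempts to make this heuristic precise, which the paper does not do; but as you correctly note, the obstacle of correlated trail lengths across pairings is real, and neither your proposal nor the paper resolves it. If you fix the residue error above, your sketch is strictly more careful than the paper's argument, though still incomplete.
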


This subdivision algorithm easily adapts to linear resource states with bounded numbers of photons. Instead of subdividing trails based on edge count, we subdivide based on the number of photons. Each node in the resource state requires one photon per participating fusion plus one photon for measurement (or output). 

We now establish the connection between bounded minimum trail decompositions and their unbounded counterparts.

\begin{restatable}{proposition}{propConvertToMTD}\label{prop_convert_to_mtd}
Any trail decomposition can be transformed into a minimum trail decomposition by applying two rules:
\begin{itemize}
\item if two distinct trails end at the same vertex, join them together, and
\item if a closed trail traverses the same vertex as another trail, join them together
\end{itemize}
\end{restatable}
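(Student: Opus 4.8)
The plan is to show that the two rewrite rules are (i) size-non-increasing, in fact each application strictly decreases the number of trails, and (ii) that the rules can only fail to apply when the trail decomposition is already minimum. Since the number of trails is a non-negative integer, termination is immediate once (i) is established, so the bulk of the work is a careful analysis of the terminal state.

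For (i): if two distinct trails $T_1, T_2$ both end at a vertex $w$, concatenating them at $w$ yields a single walk; since $T_1$ and $T_2$ are edge-disjoint (they come from a trail decomposition) the result is again a trail, and the new decomposition has one fewer trail while covering exactly the same edge set. Similarly, if a closed trail $T_1$ (a circuit) passes through a vertex $w$ that also lies on some other trail $T_2$, we can splice $T_1$ into $T_2$ at $w$: traverse $T_2$ up to $w$, go around all of $T_1$, then finish $T_2$. Again edge-disjointness guarantees the spliced object is a trail, and we have merged two trails into one. In both cases $|E|$ is preserved and $|\mathcal{T}|$ drops by exactly one.

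For (ii): suppose $\mathcal{T}$ is a trail decomposition of $G$ to which neither rule applies; I want to conclude $|\mathcal{T}| = \tfrac12|\Odd(G)|$ when every component has an odd vertex, and $|\mathcal{T}|$ equals the number of components otherwise, which is minimum by Theorem~\ref{thm_min_trail_decomp} (applied componentwise, cf.\ Lemma~\ref{lem_mtd_num_cc}). Work inside one connected component $H$ of $G$; the trails of $\mathcal{T}$ lying in $H$ partition $E(H)$. Because the first rule does not apply, at each vertex $v$ at most one trail-endpoint sits at $v$; hence the endpoints of the open (non-closed) trails are all distinct vertices, and a counting argument on degrees shows an open trail contributes an odd vertex exactly at each of its two endpoints while every other vertex it passes through gets an even contribution — so the set of odd vertices of $H$ is exactly the set of endpoints of open trails, giving (number of open trails in $H$) $= \tfrac12|\Odd(H)|$. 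It remains to rule out closed trails: if $H$ contains an open trail, any closed trail must be vertex-disjoint from every other trail (else rule two applies) and also cannot share a vertex with an open trail, contradicting connectedness of $H$ unless $H$ is a single circuit; if $H$ has no open trail then $|\Odd(H)|=0$ and the same connectedness argument forces $\mathcal{T}$ to contain exactly one (closed) trail on $H$. Summing over components gives exactly the bound of Theorem~\ref{thm_min_trail_decomp}/Lemma~\ref{lem_mtd_num_cc}.

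The main obstacle is the terminal-state analysis in (ii): one has to be careful that ``no rule applies'' is strong enough to force the endpoint-to-odd-vertex correspondence and to eliminate stray closed trails, and the connectedness argument linking a closed trail to the rest of its component needs to be stated precisely (a closed trail that shares no vertex with any other trail is an entire connected component). The size-decreasing claims in (i) are routine once edge-disjointness is invoked.
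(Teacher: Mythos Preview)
Your proposal is correct and follows essentially the same approach as the paper: analyse the terminal configuration where neither rule applies, show (via rule~1) that open-trail endpoints are pairwise distinct and coincide exactly with the odd vertices, and (via rule~2 and connectedness) eliminate closed trails except in Eulerian components, then invoke Theorem~\ref{thm_min_trail_decomp}/Lemma~\ref{lem_mtd_num_cc}. Your write-up is in fact more explicit than the paper's on termination and on the degree-parity counting, though the phrase ``unless $H$ is a single circuit'' in your case where $H$ already contains an open trail is a minor wording slip.
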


These two rules are illustrated below, where dashed lines represent nodes that may be traversed by multiple trails.
\begin{figure}[H]
\centering
\begin{tikzpicture}
	\begin{pgfonlayer}{nodelayer}
		\node [style=vertex] (3) at (-3, -1.25) {};
		\node [style=vertex] (4) at (-3, -0.75) {};
		\node [style=vertex] (7) at (-2.5, -0.25) {};
		\node [style=vertex] (11) at (-3.5, -0.25) {};
		\node [style=vertex] (16) at (-1.75, -1.25) {};
		\node [style=vertex] (19) at (-1.75, -2.25) {};
		\node [style=vertex] (20) at (-2, -2.5) {};
		\node [style=vertex] (23) at (-3, -2.5) {};
		\node [style=vertex] (30) at (1.5, -0.75) {};
		\node [style=vertex] (33) at (0, -0.75) {};
		\node [style=vertex] (36) at (1.75, -1.25) {};
		\node [style=vertex] (40) at (1.75, -2.5) {};
		\node [style=vertex] (43) at (0.5, -2.5) {};
		\node [style=vertex] (44) at (1, -1.25) {};
		\node [style=vertex] (45) at (0.5, -1.25) {};
		\node [style=none] (46) at (-0.75, -1.25) {$\rightsquigarrow$};
		\node [style=vertex] (49) at (-2.5, 0.5) {};
		\node [style=vertex] (50) at (-3, 0.5) {};
		\node [style=vertex] (53) at (-3.75, 0.5) {};
		\node [style=vertex] (56) at (-1.75, 0.5) {};
		\node [style=vertex] (60) at (0.75, 0.5) {};
		\node [style=vertex] (63) at (0, 0.5) {};
		\node [style=vertex] (66) at (1.5, 0.5) {};
		\node [style=none] (67) at (-0.75, 0.5) {$\rightsquigarrow$};
	\end{pgfonlayer}
	\begin{pgfonlayer}{edgelayer}
		\draw (23) to (3);
		\draw (3) to (16);
		\draw (19) to (16);
		\draw (23) to (20);
		\draw (7) to (4);
		\draw (11) to (4);
		\draw (43) to (40);
		\draw (30) to (44);
		\draw (44) to (36);
		\draw (40) to (36);
		\draw (43) to (45);
		\draw (45) to (33);
		\draw (53) to (50);
		\draw (49) to (56);
		\draw (63) to (60);
		\draw (60) to (66);
		\draw [style=X fusion] (45) to (44);
		\draw [style=X fusion] (4) to (3);
		\draw [style=X fusion] (19) to (20);
		\draw [style=X fusion] (49) to (50);
	\end{pgfonlayer}
\end{tikzpicture}
\end{figure}

\begin{remark}
The reduction in Proposition~\ref{prop_convert_to_mtd} can prove Theorem~\ref{thm_min_trail_decomp} without invoking Euler's Theorem.
\end{remark}

We now present a key result enabling the development of better heuristics for finding minimum $L$-trail decompositions.

\begin{restatable}{proposition}{propLTrailsAreSubdivisions}\label{prop_ltrail_are_subdivision}
For any graph $G$, there exists a minimum $L$-trail decomposition of $G$ that is a subdivision of some minimum trail decomposition of $G$.
\end{restatable}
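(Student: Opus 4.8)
The plan is to start from an arbitrary minimum $L$-trail decomposition and merge its trails back into a minimum (unbounded) trail decomposition, exploiting the fact that the reduction of Proposition~\ref{prop_convert_to_mtd} only ever \emph{concatenates} trails and never cuts them. Concretely, I would let $\M{D}$ be a minimum $L$-trail decomposition of $G$ and set $d := |\M{D}|$. Applying the two join rules of Proposition~\ref{prop_convert_to_mtd} repeatedly turns $\M{D}$ into a minimum trail decomposition $\M{T}$. Since each rule replaces two trails by a single trail whose edge set is exactly the disjoint union of their edge sets, a straightforward induction shows that for every $U \in \M{T}$ the edge set of $U$ is the disjoint union of the edge sets of some sub-collection of $\M{D}$; moreover these sub-collections partition $\M{D}$, so writing $k_U$ for the number of $\M{D}$-trails going into $U$ we get $\sum_{U \in \M{T}} k_U = d$.

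Next I would bound the cost of re-subdividing $\M{T}$. The $k_U$ trails of $\M{D}$ contributing to a given $U$ are edge-disjoint and each has at most $L$ edges, so $|U| \le k_U L$, and hence $\ceil{|U|/L} \le k_U$ because $k_U$ is an integer. Cutting the edge sequence of each $U$ into consecutive blocks of at most $L$ edges produces $\ceil{|U|/L}$ subtrails; carrying this out for every $U \in \M{T}$ yields an $L$-trail decomposition $\M{D}'$ that is, by construction, a subdivision of $\M{T}$, and that satisfies $|\M{D}'| = \sum_{U \in \M{T}} \ceil{|U|/L} \le \sum_{U \in \M{T}} k_U = d$.

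Finally, since $d$ is by definition the minimum number of trails in any $L$-trail decomposition of $G$, the inequality $|\M{D}'| \le d$ forces $|\M{D}'| = d$. Hence $\M{D}'$ is itself a minimum $L$-trail decomposition, and it is a subdivision of the minimum trail decomposition $\M{T}$, which is exactly the claim.

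The one step that needs to be argued carefully — and which I expect to be the crux — is the assertion that the transformation of Proposition~\ref{prop_convert_to_mtd} is \emph{purely merging}, i.e. that each trail of the resulting minimum trail decomposition has an edge set that is a union of whole trails of $\M{D}$, with the sub-collections partitioning $\M{D}$. This follows immediately from the form of the two rules (both are joins), but it is what makes the counting work. It is worth emphasising why one cannot simply fix a minimum trail decomposition and subdivide it naively: that can be suboptimal, as Proposition~\ref{prop_ltrail_approx} and its accompanying figure demonstrate, and the whole point of routing through an already-optimal $\M{D}$ is to sidestep that gap. A minor point I would also record is that a minimum $L$-trail decomposition exists at all — the edge set of $G$, viewed as a set of $1$-trails, is one — so $d$ is well defined.
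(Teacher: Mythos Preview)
Your argument is correct and follows the same line as the paper's proof: start from a minimum $L$-trail decomposition, apply the two merging rules of Proposition~\ref{prop_convert_to_mtd} to obtain a minimum trail decomposition, and then subdivide each resulting trail greedily into blocks of at most $L$ edges. Your write-up is in fact more careful than the paper's, which asserts that ``subdividing each trail of $\M{T}'$ will produce a minimum $L$-trail decomposition'' without spelling out the counting step $\sum_{U}\ceil{|U|/L}\le\sum_U k_U=d$ that forces the subdivision to attain the minimum; you make this explicit, and you also correctly flag that the crucial property is that the rules of Proposition~\ref{prop_convert_to_mtd} are pure joins (so each $U$ is an edge-disjoint union of whole $\M{D}$-trails even though those trails need not form contiguous segments of $U$).
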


To formulate this into a heuristic, suppose we have a minimum trail decomposition $\M{T} = \{T_1, \ldots, T_K\}$ of graph $G$ where $K = \frac{1}{2}|\Odd(G)|$ if $|\Odd(G)| > 0$ and $K = 1$ otherwise. Subdividing into $L$-trails produces

\[
\sum_{i = 1}^K \ceil{\frac{|T_i|}{L}}
\]

trails. This sum is minimised when the number of trails whose length is a multiple of $L$ is maximised.

We can summarize this finding by saying that \problem{MinBoundedTrailDecomposition} is equivalent to finding an unbounded minimum trail decomposition that maximizes the number of trails whose length is a multiple of $L$. Therefore, generating all minimum trail decompositions is NP-hard, since otherwise we could use it to solve \problem{MinBoundedTrailDecomposition}.

We can encode these results into a heuristic algorithm for approximating minimum bounded trail decompositions.

\begin{algorithm}\label{alg_ltrail_decomposition}
\ \\\textit{\textbf{Input:}} A graph $G$.\\
\textit{\textbf{Output:}} An $L$-trail decomposition of $G$.
\begin{enumerate}
\item If $|\Odd(G)| \le 2$, find an Eulerian path, subdivide into $L$-trails and return.
\item Otherwise, search for a trail whose length is a multiple of $L$ and satisfies the conditions for belonging to a minimum trail decomposition in Proposition~\ref{prop_mtd_conditions}.
\item Repeat Steps 1 and 2 until no such trails can be found.
\item Remove these trails from the graph and find a minimum unbounded trail decomposition on the remaining graph.
\item Subdivide the overall trail decomposition into $L$-trails and return.
\end{enumerate}
\end{algorithm}

Algorithm~\ref{alg_ltrail_decomposition} terminates with a solution in polynomial time if the search algorithm terminates in polynomial time. Since the search algorithm may fail to find a suitable trail despite one existing, Algorithm~\ref{alg_ltrail_decomposition} has the same approximation ratio as the original subdivision algorithm in Proposition~\ref{prop_ltrail_approx}.

The search algorithm may be implemented using a breadth first search that terminates after a predetermined time if no suitable trail is found. Although the search space consists of all possible paths in a graph and is therefore exponential in size, the search algorithm can exploit the special properties of trails in minimum trail decompositions specified in Proposition~\ref{prop_mtd_conditions} to accelerate the search.

\subsection{Maximal trail covers}

To find minimum trail covers, we introduce a special category of trail covers called \textit{maximal trail covers}, which possess a richer structure that we will use to develop more efficient heuristics.

We denote the subgraph of $G$ covered by the trail $T$ as $G(T)$, and extend this notation to denote the subgraph of $G$ covered by the trail cover $\M{C}$ as $G(\M{C})$.

\begin{definition}
A trail cover $\M{C}$ of graph $G$ is \textit{maximal} if $\M{C}$ is a minimum trail decomposition of $G(\M{C})$ and for any trail cover $\M{C}^\prime$ of $G$ where $G(\M{C}) \subsetneq G(\M{C}^\prime)$, we have $|\M{C}| < |\M{C}^\prime|$.
\end{definition}

Informally, this means any trail cover that covers more of the graph than a maximal trail cover must have strictly more trails. Hence the trails in $\M{C}$ cannot be simply extended and are \textit{maximal} in this sense. 

We can use this concept to characterize the structure of edges in the graph that are not covered by trails in maximal trail covers. We first introduce two lemmas whose proofs appear in the Appendix.

\begin{restatable}{lemma}{lemOddVertices}\label{lem_odd_verts}
Let $\M{C}$ be a maximal trail cover of a connected graph $G$. Then vertices that are odd in $G(\M{C})$ are also odd in $G$ and have the same degree.
\end{restatable}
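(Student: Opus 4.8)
The plan is to argue by contradiction. Suppose $v$ is odd in $G(\M{C})$ but $\deg_{G(\M{C})}(v) < \deg_G(v)$, so there is an edge $e = \{v,w\} \in E(G)$ that is traversed by no trail of $\M{C}$. I will extend one trail of $\M{C}$ along $e$ to obtain a trail cover $\M{C}'$ of $G$ with $G(\M{C}) \subsetneq G(\M{C}')$ and $|\M{C}'| = |\M{C}|$, contradicting the defining property of a maximal trail cover.

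First I would note that $\M{C}$ is a trail decomposition of $G(\M{C})$ (a minimum one, by maximality), so the edges of $G(\M{C})$ incident to $v$ are partitioned among the trails of $\M{C}$. Counting edge-ends at $v$: a trail contributes an even number of edge-ends at $v$ whenever $v$ is an internal vertex of it or both of its endpoints (a closed trail), and an odd number exactly when $v$ is precisely one of its two endpoints as an open trail. Since $\deg_{G(\M{C})}(v)$ is odd, an odd — hence nonzero — number of trails must have $v$ as a single open endpoint (this is the same observation used in the proof of Theorem~\ref{thm_min_trail_decomp}). Pick such a trail $T$ and orient it so that $v$ is its last vertex.

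Next, since $e \notin G(\M{C})$, the edge $e$ lies in no trail of $\M{C}$, so appending $e$ to $T$ at $v$ produces a walk with no repeated edge and consecutive vertices adjacent, i.e. a valid trail $T'$ with $G(T) \subsetneq G(T')$. Set $\M{C}' = (\M{C} \setminus \{T\}) \cup \{T'\}$. The trails of $\M{C}'$ are still pairwise edge-disjoint (the only new edge $e$ was in no trail), they still traverse every vertex of $G$, and they now also traverse $e$, so $\M{C}'$ is a trail cover of $G$ with $G(\M{C}) \subsetneq G(\M{C}')$ and $|\M{C}'| = |\M{C}|$. This contradicts maximality of $\M{C}$, which would require $|\M{C}| < |\M{C}'|$. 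Therefore no edge of $G$ incident to $v$ is uncovered, so $\deg_{G(\M{C})}(v) = \deg_G(v)$, and since $v$ is odd in $G(\M{C})$ it is odd in $G$.

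The only delicate point is the parity bookkeeping of the second step: one must handle trails that revisit $v$ and closed trails correctly so that "$\deg_{G(\M{C})}(v)$ odd" genuinely forces the existence of a trail with $v$ as a single endpoint; the extension argument itself is routine.
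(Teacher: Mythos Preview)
Your argument is correct and is precisely the natural extension argument: find a trail of $\M{C}$ ending at $v$, append the uncovered edge, and contradict the second clause of the definition of maximal trail cover. The paper defers the proof of this lemma to the appendix, and the version of the appendix in the source does not actually contain it, so there is no explicit proof to compare against; however, your approach is the standard one and matches how the lemma is used in the proof of Theorem~\ref{thm_mtc_subset_mtd}.

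One small simplification worth noting: the parity bookkeeping you flag as ``the only delicate point'' can be bypassed entirely. By the definition of a maximal trail cover, $\M{C}$ is a \emph{minimum} trail decomposition of $G(\M{C})$, and $v$ is odd in $G(\M{C})$; hence by Theorem~\ref{thm_min_trail_decomp} (or Proposition~\ref{prop_mtd_conditions}) the component of $v$ in $G(\M{C})$ has at least two odd vertices and every trail in that component is an open trail joining two of them. Since odd vertices must be trail endpoints, $v$ is an endpoint of some trail in $\M{C}$, with no need to track internal revisits or closed trails. The rest of your extension step then goes through verbatim.
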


\begin{restatable}{lemma}{lemMaximalTC}\label{lem_maximal_tc}
Let $\M{C}$ be a maximal trail cover of a connected graph $G$. Then every connected component of $G(\M{C})$ has non-zero odd vertices if $G$ has non-zero odd vertices.
\end{restatable}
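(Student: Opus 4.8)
The plan is to argue by contradiction using the maximality condition. Suppose $G$ has an odd vertex but some connected component $H$ of $G(\M{C})$ has no odd vertex. Since $\M{C}$ is a minimum trail decomposition of $G(\M{C})$ and distinct components of $G(\M{C})$ are edge-disjoint, the trails of $\M{C}$ that lie inside $H$ must themselves form a minimum trail decomposition of $H$ — otherwise we could replace that sub-collection by a smaller one and shrink $\M{C}$, contradicting minimality. Now $H$ is connected with $|\Odd(H)| = 0$, so by Theorem~\ref{thm_min_trail_decomp} this sub-collection consists of a single trail $T$; since $T$ uses every edge of $H$ and every vertex of $H$ is even, $T$ must be a closed trail (or, in the degenerate case where $H$ has no edges, a trivial one-vertex trail).

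Next I would locate an edge along which to extend, using connectivity of $G$. Because $G$ is connected and has an odd vertex while $H$ does not, $H$ cannot be all of $G$, so there is an edge $e = \{u, x\}$ of $G$ with $u \in V(H)$ and $e \notin E(G(\M{C}))$: if $V(H) \neq V(G)$, take any edge of $G$ leaving $V(H)$, which cannot belong to $G(\M{C})$ without merging two components; if $V(H) = V(G)$ then $G(\M{C}) = H \subsetneq G$, so some edge of $E(G) \setminus E(H)$ works (and, alternatively, Lemma~\ref{lem_odd_verts} shows an odd vertex of $G$ must be missing an incident edge in $G(\M{C})$).

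Finally I would perform the extension. Re-root the closed trail $T$ so that it begins and ends at $u$, then append the edge $e$ to obtain a trail $T'$ that traverses everything $T$ did plus the edge $e$ and the vertex $x$; this is a legitimate trail since $e \notin E(T)$ and repeating the vertex $u$ is permitted. Replacing $T$ by $T'$ in $\M{C}$ gives a trail cover $\M{C}'$ of $G$ with $|\M{C}'| = |\M{C}|$ and $G(\M{C}) \subsetneq G(\M{C}')$, contradicting the maximality of $\M{C}$; the degenerate case is identical with $T = (u)$ and $T' = (u, x)$. The main obstacle I anticipate is purely bookkeeping: handling the two sub-cases $V(H) = V(G)$ versus $V(H) \subsetneq V(G)$ and the isolated-vertex degeneracy uniformly, and spelling out precisely why restricting a minimum trail decomposition to a single component yields a minimum trail decomposition of that component (which is where Theorem~\ref{thm_min_trail_decomp} and Lemma~\ref{lem_mtd_num_cc} do the real work).
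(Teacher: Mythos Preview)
Your argument is correct: the key observation that an all-even component $H$ of $G(\M{C})$ must be covered by a single closed trail of $\M{C}$ (via Theorem~\ref{thm_min_trail_decomp} applied componentwise), which can then be opened and extended along an uncovered edge of $G$ to contradict maximality, is exactly the right mechanism.

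As for comparison: the paper states that the proof of this lemma appears in the Appendix, but the appendix as supplied does not in fact contain it (nor the proof of Lemma~\ref{lem_odd_verts}); only the bounded-trail-decomposition estimates, the compilation lemmas, and the triangle-complementation counterexample are present. So there is no paper proof to compare against.

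Two small clean-ups. First, your ``alternatively, Lemma~\ref{lem_odd_verts}'' aside in the case $V(H)=V(G)$ is misdirected: that lemma concerns vertices odd in $G(\M{C})$, and here $G(\M{C})=H$ has none, so the lemma is vacuous. Your direct argument ($H\subsetneq G$ because their odd sets differ, hence $E(G)\setminus E(H)\neq\emptyset$) already covers this case, so just drop the aside. Second, the isolated-vertex degeneracy you flag cannot occur: a length-$0$ trail in $\M{C}$ could be removed without losing any edge of $G(\M{C})$, contradicting that $\M{C}$ is a \emph{minimum} trail decomposition of $G(\M{C})$ (and $G$, being connected with an odd vertex, has at least one edge). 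So you may safely excise that case.
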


\begin{theorem}\label{thm_mtc_subset_mtd}
A trail cover $\M{C}$ of a graph $G$ is maximal if and only if it is a subset of a minimum trail decomposition of $G$, every the degree of every odd vertex in $G(\M{C})$ is the same as in $G$, and $G \backslash G(C)$ is acyclic.
\end{theorem}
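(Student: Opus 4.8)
The plan is to prove both implications of the equivalence, after reducing to the case where $G$ is connected with $|\Odd(G)| > 0$; the Eulerian case $|\Odd(G)| = 0$ (where a minimum trail decomposition is a single closed circuit) and the disconnected case are handled separately at the end, since no trail can cross between components and all three conditions, as well as maximality, decompose componentwise. Two bookkeeping facts will be used repeatedly. First, a parity computation: if the degree of every odd vertex of $G(\M{C})$ is unchanged in $G$, then every edge of $G\backslash G(\M{C})$ has both endpoints even in $G(\M{C})$, hence $\Odd(G(\M{C})) \subseteq \Odd(G)$ and $\Odd(G\backslash G(\M{C})) = \Odd(G)\setminus\Odd(G(\M{C}))$. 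Second, by Proposition~\ref{prop_mtd_conditions} every trail of a minimum trail decomposition of a connected graph with odd vertices runs between two \emph{distinct} odd vertices, so such a decomposition contains no closed trail.

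\emph{($\Rightarrow$)} Suppose $\M{C}$ is maximal. By definition $\M{C}$ is already a minimum trail decomposition of $G(\M{C})$, and the degree condition on odd vertices is exactly Lemma~\ref{lem_odd_verts}. For acyclicity, argue by contradiction: if $G\backslash G(\M{C})$ contained a cycle $Z$, pick a vertex $v$ of $Z$; since $\M{C}$ is a trail cover, some trail $T\in\M{C}$ passes through $v$, and because the edges of $Z$ are disjoint from those of $T$ we may splice $Z$ into $T$ at $v$ to obtain a trail $T'$ with edge set $E(T)\cup E(Z)$. Replacing $T$ by $T'$ yields a trail cover $\M{C}'$ of $G$ with $|\M{C}'| = |\M{C}|$ but $G(\M{C}) \subsetneq G(\M{C}')$, contradicting maximality. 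Finally, to place $\M{C}$ inside a minimum trail decomposition of $G$: by the previous step $G\backslash G(\M{C})$ is a forest, so any minimum trail decomposition $\M{D}_0$ of it has $|\M{D}_0| = \tfrac{1}{2}|\Odd(G\backslash G(\M{C}))|$ (every tree with an edge has two leaves, and edgeless components need no trail); by Lemmas~\ref{lem_maximal_tc} and~\ref{lem_mtd_num_cc}, $|\M{C}| = \tfrac{1}{2}|\Odd(G(\M{C}))|$; combining with the parity identity, $\M{C}\cup\M{D}_0$ is a trail decomposition of $G$ with $\tfrac{1}{2}|\Odd(G)|$ trails, hence minimum by Theorem~\ref{thm_min_trail_decomp}, and it contains $\M{C}$.

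\emph{($\Leftarrow$)} Suppose $\M{C}$ is a trail cover satisfying the three conditions, say $\M{C}\subseteq\M{D}$ with $\M{D}$ a minimum trail decomposition of $G$. First, $\M{C}$ is itself a minimum trail decomposition of $G(\M{C})$: otherwise a smaller trail decomposition of $G(\M{C})$ together with $\M{D}\setminus\M{C}$ (a trail decomposition of $G\backslash G(\M{C})$) would beat $\M{D}$. Next, since $\M{D}$ has no closed trail, neither does $\M{C}$; and if some component $K$ of $G(\M{C})$ were Eulerian, the unique size-$1$ trail decomposition of $K$ would force a closed trail into $\M{C}$, a contradiction, so every component of $G(\M{C})$ has an odd vertex and hence $|\M{C}| = \tfrac{1}{2}|\Odd(G(\M{C}))|$ by Lemma~\ref{lem_mtd_num_cc}. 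Now take any trail cover $\M{C}'$ with $G(\M{C})\subsetneq G(\M{C}')$. The added edges form a non-empty subforest $H$ of the acyclic graph $G\backslash G(\M{C})$, and every endpoint of an edge of $H$ is even in $G(\M{C})$ by condition~2, so $\Odd(G(\M{C}')) = \Odd(G(\M{C})) \sqcup \Odd(H)$ with $|\Odd(H)|\ge 2$. Since $\M{C}'$ is a trail decomposition of $G(\M{C}')$, $|\M{C}'| \ge \tfrac{1}{2}|\Odd(G(\M{C}'))| \ge \tfrac{1}{2}|\Odd(G(\M{C}))| + 1 = |\M{C}| + 1 > |\M{C}|$, so $\M{C}$ is maximal.

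The main obstacle is the converse step showing every component of $G(\M{C})$ carries an odd vertex: this is what pins $|\M{C}|$ to $\tfrac{1}{2}|\Odd(G(\M{C}))|$, and it genuinely relies on the ``no closed trail'' structure of minimum trail decompositions furnished by Proposition~\ref{prop_mtd_conditions}; the cycle-splicing construction for condition~3 is the only real new idea, with the rest being parity counting. The remaining care is the Eulerian case $|\Odd(G)| = 0$, where a minimum trail decomposition of connected $G$ is a single Eulerian circuit: condition~1 then forces $G(\M{C}) = G$ (and in the forward direction the splicing argument together with Lemma~\ref{lem_odd_verts} independently forces $G(\M{C})=G$), so maximality holds vacuously and the equivalence persists; the disconnected case follows by applying all of the above componentwise.
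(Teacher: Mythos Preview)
Your proof is correct and follows the same overall skeleton as the paper: reduction to the connected non-Eulerian case, the cycle-splicing argument for acyclicity, Lemma~\ref{lem_odd_verts} for the degree condition, and a parity count for the converse. Two points deserve comment as genuine departures.

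For condition~1 in the forward direction, the paper invokes Proposition~\ref{prop_mtd_conditions} on each trail of $\M{C}$ separately to conclude that $\M{C}$ sits inside some minimum trail decomposition of $G$; you instead construct such a decomposition explicitly as $\M{C}\cup\M{D}_0$ with $\M{D}_0$ a minimum trail decomposition of the forest $G\backslash G(\M{C})$, and verify its size via your parity identity $\Odd(G\backslash G(\M{C})) = \Odd(G)\setminus\Odd(G(\M{C}))$. Your route is cleaner and sidesteps the (non-obvious) passage from ``each $T\in\M{C}$ lies in \emph{some} minimum trail decomposition'' to ``all of $\M{C}$ lies in \emph{one} minimum trail decomposition''.

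In the converse, the paper asserts $|\M{C}| = \tfrac{1}{2}|\Odd(G(\M{C}))|$ without checking that every component of $G(\M{C})$ has an odd vertex; you supply this via the observation that a minimum trail decomposition of connected $G$ with $|\Odd(G)|>0$ contains no closed trail (Proposition~\ref{prop_mtd_conditions}), which rules out Eulerian components of $G(\M{C})$. You also compare against an arbitrary $\M{C}'$ using only the universal lower bound $|\M{C}'|\ge\tfrac{1}{2}|\Odd(G(\M{C}'))|$, whereas the paper silently restricts to $\M{C}'$ that are minimum on their own coverage. Both are harmless reductions, but your version is more self-contained.
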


\begin{proof}
If $G$ has no odd vertices, maximal trail covers are simply minimum trail decompositions consisting of a single Eulerian tour of $G$ and so the theorem holds. For the remainder of the proof, we consider the case where $G$ has non-zero odd vertices. We also assume $G$ is connected since a maximal trail cover on a disconnected graph is maximal on all connected components.

Let $\M{C}$ be a maximal trail cover of $G$ and let $T$ be a trail in $\M{C}$. 
From Lemma~\ref{lem_maximal_tc}, the connected component of $G(\M{C})$ containing $T$ has non-zero odd vertices. 
Since $\M{C}$ is a minimum trail decomposition on $G(\M{C})$, $T$ is an open trail ending at vertices that are odd in $G(\M{C})$ (Proposition~\ref{prop_mtd_conditions}), which by Lemma~\ref{lem_odd_verts} are also odd in $G$. Therefore by Proposition~\ref{prop_mtd_conditions}, this trail belongs to some minimum trail decomposition of $G$, and hence $\M{C}$ is a subset of some minimum trail decomposition of $G$.

Suppose $G \backslash G(\M{C})$ contains a cycle. Since every vertex is covered by some trail in the trail cover, we could modify a trail that intersects a vertex in the cycle to traverse the entire cycle before returning to the same vertex. This would create a new trail cover $\M{C}^\prime$ where $G(\M{C}) \subsetneq G(\M{C}^\prime)$ and $|\M{C}| = |\M{C}^\prime|$, contradicting the assumption that $\M{C}$ is maximal. Thus $G \backslash G(\M{C})$ must be acyclic.

Now suppose $\M{C}$ is a trail cover that is a subset of some minimum trail decomposition of $G$, the degree of all odd vertices in $G(\M{C})$ is the same as in $G$, and $G \backslash G(\M{C})$ is acyclic. Then naturally $\M{C}$ is a minimum trail decomposition of $G(\M{C})$. Assume there exists another trail cover $\M{C}^\prime$ of $G$ that is a minimum trail decomposition on $G(\M{C}^\prime)$ where $G(\M{C}) \subsetneq G(\M{C}^\prime)$. Then we must show than $|\M{C}^\prime| > |\M{C}|$. 

Since $G \backslash G(\M{C})$ is acyclic, any subgraph is acyclic and hence contains non-zero odd vertices. Since every odd vertex in $G(\M{C})$ is odd in $G$ and has the same degree, the odd vertices in $G(\M{C}^\prime) \backslash G(\M{C})$ are even in $G(\M{C})$. Therefore $G(\M{C}^\prime)$ contains strictly more odd vertices than $G(\M{C})$. Since $\M{C}$ and $\M{C}^\prime$ are both minimum trail decompositions on graphs with at least one odd vertex, we have $|\M{C}| = \frac{1}{2}|\Odd(G(\M{C}))| < \frac{1}{2}|\Odd(G(\M{C}^\prime))| = |\M{C}^\prime|$. Therefore $\M{C}$ is maximal.
\end{proof}

Finally, the following proposition ensures that searching in the space of maximal trail covers is sufficient to solve the minimum trail cover problem.

\begin{restatable}{proposition}{propMTCtoMaximal}
Given a trail cover $\M{C}$ of a graph $G$, we can find a maximal trail cover $\M{C}^\prime$ such that $|\M{C}^\prime| \le |\M{C}|$ in polynomial time.
\end{restatable}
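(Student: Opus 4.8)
The plan is to give a polynomial-time procedure that starts from an arbitrary trail cover $\M{C}$ and monotonically transforms it into a maximal one without ever increasing the number of trails. By Theorem~\ref{thm_mtc_subset_mtd}, a trail cover is maximal iff (a) it is a subset of a minimum trail decomposition of $G$, (b) every odd vertex of $G(\M{C})$ has the same degree in $G(\M{C})$ as in $G$, and (c) $G \backslash G(\M{C})$ is acyclic. So the procedure will enforce these three conditions in turn, each time using a local rewrite that does not add trails.

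First I would apply the reduction of Proposition~\ref{prop_convert_to_mtd} to the trails of $\M{C}$, viewed as a trail decomposition of the subgraph $G(\M{C})$: whenever two distinct trails share an endpoint we join them, and whenever a closed trail meets another trail at a common vertex we splice them. Each such join strictly decreases $|\M{C}|$, so after finitely many steps $\M{C}$ is a \emph{minimum} trail decomposition of $G(\M{C})$, establishing condition (a) relative to $G(\M{C})$ and ensuring all trails are open trails between distinct odd vertices of $G(\M{C})$ (or a single Eulerian trail per component). Next, to enforce conditions (b) and (c), I would look for an uncovered edge $e = \{u,v\}$ incident to a vertex $u$ that lies on some trail $T \in \M{C}$. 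If $u$ has a neighbour via an uncovered edge, I extend $T$ along $e$; this strictly enlarges $G(\M{C})$ without adding a trail. I repeat this greedily. When no such extension is possible, every uncovered edge has both endpoints of maximal covered degree, which is exactly where Theorem~\ref{thm_mtc_subset_mtd}'s conditions come into play: if $G \backslash G(\M{C})$ still contained a cycle, some vertex on that cycle is covered by a trail, and — exactly as in the proof of Theorem~\ref{thm_mtc_subset_mtd} — we can reroute that trail around the cycle, enlarging $G(\M{C})$ again at no cost in trail count; then re-run the join reduction. Since $|G(\M{C})|$ (as a count of covered edges) strictly increases at each non-join step and $|\M{C}|$ never increases and strictly decreases at each join step, the whole process terminates, and on termination all three conditions of Theorem~\ref{thm_mtc_subset_mtd} hold, so $\M{C}'$ is maximal with $|\M{C}'| \le |\M{C}|$.

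For the running time I would bound the number of iterations by $O(|E|)$: each edge can be newly covered at most once between join steps, and there are at most $O(|V|)$ joins total, so there are $O(|E| + |V|)$ outer iterations, each requiring a search for an extendable endpoint, an uncovered cycle (e.g. a DFS on $G \backslash G(\M{C})$), or a pair of joinable trails, all doable in polynomial time. One has to be a little careful that after condition (a) is re-established the odd-degree structure is the one demanded by Proposition~\ref{prop_mtd_conditions}; this is automatic because a minimum trail decomposition of a graph with odd vertices has all trails ending at distinct odd vertices, and joining along shared endpoints preserves that.

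The main obstacle I anticipate is interleaving the three enforcement steps correctly: extending a trail to cover a new edge can create a new shared endpoint or a new closed trail, which then needs another round of joining, and conversely a join can open up new extension opportunities. The clean way to handle this is to track the monotone quantities — covered-edge count strictly up on extensions/reroutes, trail count strictly down on joins, neither ever going the wrong way — and argue termination from the pair $(|E \setminus G(\M{C})|, |\M{C}|)$ under lexicographic order; correctness at termination is then immediate from Theorem~\ref{thm_mtc_subset_mtd}.
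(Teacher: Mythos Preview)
Your approach is essentially the paper's: reduce to a minimum trail decomposition of $G(\M{C})$ via the two join rules, extend trails along uncovered edges, absorb remaining uncovered cycles by splicing them into a passing trail, re-apply the join reduction, and appeal to Theorem~\ref{thm_mtc_subset_mtd}. Your explicit termination argument via the lexicographically ordered pair $(|E \setminus G(\M{C})|, |\M{C}|)$ is a nice addition that the paper leaves implicit.

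One point needs tightening. Your extension step is phrased for ``an uncovered edge $e=\{u,v\}$ incident to a vertex $u$ that lies on some trail $T$'', but extending $T$ along a single edge $e$ is only possible when $u$ is an \emph{endpoint} of $T$; at an interior vertex there is no way to absorb $e$ without either splitting $T$ (which would increase the trail count) or repeating an edge. Since $\M{C}$ is a trail cover, \emph{every} vertex lies on some trail, so as written your halting condition would never trigger while any uncovered edge remains. Consequently, the claim that on termination ``every uncovered edge has both endpoints of maximal covered degree'' is too strong: uncovered edges can and typically do persist at interior vertices. What the endpoint-only extension actually guarantees --- and what suffices --- is that every \emph{trail endpoint}, equivalently every odd vertex of $G(\M{C})$ after the join step, has no uncovered incident edge. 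That is exactly condition~(b) of Theorem~\ref{thm_mtc_subset_mtd}, so with this correction your argument goes through.
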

\begin{proof}
Suppose we are given a trail cover $\M{C}$ of $G$. First replace $\M{C}$ with a minimum trail decomposition on $G(\M{C})$, denoted $\M{C}^\prime$. Note that $|\M{C}^\prime| \le |\M{C}|$. Then for each trail in $\M{C}^\prime$, extend it at both ends to traverse an adjacent edge in $G \backslash G(\M{C}^\prime)$ whenever possible. Continue until no more extensions are possible and we obtain a new trail cover $\M{C}^{\prime\prime}$. For all trails in $\M{C}^{\prime\prime}$ that traverse a vertex belonging to a cycle in $G \backslash G(\M{C}^{\prime\prime})$, modify the trail to traverse this cycle while remaining unchanged elsewhere. Then replace the resulting trail cover with a minimum trail decomposition on $G(\M{C}^{\prime\prime})$ to obtain trail cover $\M{C}^{\prime\prime\prime}$. This trail cover has the property that all trails end at vertices with no adjacent trails in $G \backslash G(\M{C}^{\prime\prime\prime})$. Therefore odd vertices in $G(\M{C}^{\prime\prime\prime})$ are odd in $G$. Since trails in $\M{C}^{\prime\prime\prime}$ belong to some minimum trail decomposition, removing any trail from $G(\M{C}^{\prime\prime\prime})$ produces connected components with non-zero odd vertices, and the same holds when removing the trail from $G$. Therefore trails in $\M{C}^{\prime\prime\prime}$ are part of some minimum trail decomposition in $G$. Since we modified our trail cover to traverse any cycles in the complement graph, $G \backslash G(\M{C}^{\prime\prime\prime})$ is acyclic.

Thus by Theorem~\ref{thm_mtc_subset_mtd}, $\M{C}^{\prime\prime\prime}$ is a maximal trail cover. This constitutes a polynomial reduction since each operation can be performed in polynomial time.
\end{proof}

Therefore, given any minimum trail cover of $G$, we can convert it in polynomial time to a minimum trail cover that is also maximal.

\begin{restatable}{corollary}{propMTCtoMaximalVersion}\label{prop_mtc_mtcwmyf_equivalent}
    The minimum trail cover problem is polynomially equivalent to the problem of finding a minimum trail cover that is maximal.
\end{restatable}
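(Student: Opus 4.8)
The plan is to read off the corollary almost immediately from the preceding proposition, which already supplies all the non-trivial content: a polynomial-time procedure turning any trail cover $\M{C}$ of $G$ into a maximal trail cover $\M{C}^\prime$ with $|\M{C}^\prime| \le |\M{C}|$. The first thing I would establish is that the minimum number of trails over all \emph{maximal} trail covers equals the minimum over all trail covers. One inequality is immediate, since maximal trail covers are a subclass of trail covers. For the other, I would take a minimum trail cover $\M{C}$, apply the proposition to obtain a maximal trail cover $\M{C}^\prime$ with $|\M{C}^\prime| \le |\M{C}|$, and then invoke minimality of $\M{C}$ to force $|\M{C}^\prime| = |\M{C}|$; thus the global minimum is attained by a maximal trail cover.

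Next I would phrase this as a polynomial equivalence by exhibiting reductions in both directions. For the forward direction, an oracle for \problem{MinTrailCover} solves the maximal version: invoke the oracle to get a minimum trail cover, then run the procedure of the proposition in polynomial time to convert it into a maximal trail cover of the same size, which by the equality of minimum values is a minimum maximal trail cover. For the reverse direction, an oracle for the maximal version solves \problem{MinTrailCover} with no post-processing at all, since its output is already a trail cover and, by the same equality, already minimum. Each reduction is a single oracle call plus a polynomial amount of work, giving the claimed equivalence.

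I expect no genuine obstacle here: the only point needing care is the assertion that the conversion never increases the number of trails, and that is exactly the statement of the preceding proposition, so it requires no re-proof. The corollary is essentially a repackaging of that proposition together with the trivial observation that maximal trail covers sit inside the class of all trail covers.
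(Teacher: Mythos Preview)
Your proposal is correct and matches the paper's approach: the paper states the corollary immediately after the proposition and justifies it only with the one-line remark that any minimum trail cover can be converted in polynomial time to a minimum trail cover that is also maximal. You have simply spelled out both directions of the equivalence more carefully than the paper does, but the content is identical.
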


Since every maximal trail cover is a subset of some minimum trail decomposition, we can leverage results and heuristics for minimum trail decompositions when finding minimum trail covers.

\subsection{Approximating minimum trail covers}

From Proposition~\ref{prop_mtc_mtcwmyf_equivalent}, we know it suffices to search in the space of maximal trail covers, which have the structure of minimum trail decompositions.
We can formalize this into a greedy algorithm for finding maximal trail covers of a given graph.

\begin{algorithm}\label{alg_find_tc_min_y}
\ \\\textbf{\textit{Input:}} A graph $G$.\\
\textbf{\textit{Output:}} A maximal trail cover for $G$.
\begin{enumerate}[1.]
\item Search for paths belonging to some minimum trail decomposition using the criteria in Proposition~\ref{prop_mtd_conditions} that traverse only edges with degree at least two. This may be implemented with breadth first search and terminated when the first path is found or a timeout is reached.
\item Remove this path from the graph and continue until no more paths can be removed.
\item Find a minimum trail decomposition of the remaining graph.
\item Return the trail decomposition, which is a maximal trail cover by Theorem~\ref{thm_mtc_subset_mtd}.
\end{enumerate}
\end{algorithm}

This algorithm easily adapts to find $L$-trail covers by subdivision.

\begin{algorithm}\label{alg_find_ltc_min_y}
\ \\\textbf{\textit{Input:}} A graph $G$ and an integer $L$.\\
\textbf{\textit{Output:}} An $L$-trail cover for $G$.
\begin{enumerate}[1.]
\item Execute Steps 1 and 2 of Algorithm~\ref{alg_find_tc_min_y} to remove paths from the graph.
\item Modify Algorithm~\ref{alg_ltrail_decomposition} to attempt to find a trail decomposition of the remaining graph where the length of the trails are of the form $L + k(L + 1)$ for some non-negative integer $k$.
\item Subdivide the trails into $L$-trails and return.
\end{enumerate}
\end{algorithm}

We search for trails with length of the form $L + k(L + 1)$ because when subdividing, we can break the trail into $k$ trails of length $L$ by omitting edges between successive trails and still have a trail cover, as illustrated in the example below where a trail of length five is subdivided into three 1-trails.

\begin{figure}[H]
\centering
\begin{tikzpicture}
	\begin{pgfonlayer}{nodelayer}
		\node [style=vertex] (0) at (1.25, 0) {};
		\node [style=vertex] (1) at (2, 0) {};
		\node [style=vertex] (2) at (2.75, 0) {};
		\node [style=vertex] (3) at (3.5, 0) {};
		\node [style=vertex] (4) at (4.25, 0) {};
		\node [style=vertex] (5) at (5, 0) {};
		\node [style=none] (6) at (3.1, -0.5) {\rotatebox[origin=c]{270}{$\rightsquigarrow$}};
		\node [style=vertex] (7) at (1.25, -1) {};
		\node [style=vertex] (8) at (2, -1) {};
		\node [style=vertex] (9) at (2.75, -1) {};
		\node [style=vertex] (10) at (3.5, -1) {};
		\node [style=vertex] (11) at (4.25, -1) {};
		\node [style=vertex] (12) at (5, -1) {};
	\end{pgfonlayer}
	\begin{pgfonlayer}{edgelayer}
		\draw (0) to (1);
		\draw (1) to (2);
		\draw (2) to (3);
		\draw (3) to (4);
		\draw (4) to (5);
		\draw (7) to (8);
		\draw (9) to (10);
		\draw (11) to (12);
	\end{pgfonlayer}
\end{tikzpicture}
\end{figure}

Algorithm~\ref{alg_find_tc_min_y} may be executed in polynomial time depending on the search algorithm used, and Algorithm~\ref{alg_find_ltc_min_y} may be executed in polynomial time if the search algorithm used in Algorithm~\ref{alg_ltrail_decomposition} runs in polynomial time. In the worst case, we find no suitable trails with Algorithm~\ref{alg_ltrail_decomposition} and simply return a minimum trail decomposition of size $\frac{1}{2}|\Odd(G)|$.

\begin{remark}
Algorithm~\ref{alg_find_ltc_min_y} is based on subdivision and is therefore easily adapted to the problem of minimising fusions in fusion networks with resource states with bounded photons in the same way as previously outlined in the case of the heuristic algorithm for minimum $L$-trail decompositions.
\end{remark}

\subsection{Reduction to the Travelling Salesman Problem}

The Travelling Salesman Problem (TSP) is one of the most well known and widely studied problems in computer science. The problem is to find a Hamiltonian cycle in a weighted graph that minimises the sum of the edges traversed. Despite being NP-hard, there exist many advanced heuristics and optimisations which allow the TSP to be solved efficiently for graphs containing thousands of vertices~\cite{cite_tsp_benchmarks}.

We will show how the minimum trail cover problem can be reduced to an instance of the graph-theoretic TSP and therefore may leverage existing solvers. We define the \textit{multi-visit TSP} to be a variant of the TSP where we relax the requirement of finding a Hamiltonian cycle to that of finding a trail that visits every vertex in the graph. We then show how the multi-visit TSP can be solved with the original TSP.

\begin{proposition}\label{prop_tsp_alg}
Let $G = (V, E)$ be a graph with non-zero odd vertices. Define $G^\prime = (V^\prime, E^\prime)$ to be the weighted undirected graph obtained by starting with $G$ and adding an addition vertex $v$, $V^\prime = V \cup \{v\}$, additional edges between $v$ and odd vertices of $G$, $E^\prime = E \cup \{(v, w)\ |\ w \in \Odd(G)\}$, and setting the weight of edges in the original graph $G$ to be zero, and the weight of the new edges adjacent to $v$ to be one. Then solutions to the multi-visit TSP on $G^\prime$ correspond to minimum trail covers on $G$ that are maximal.
\end{proposition}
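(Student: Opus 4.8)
The plan is to exhibit an explicit bijection between multi-visit TSP solutions on $G'$ and maximal trail covers of $G$, and then argue that this bijection is weight-preserving in the sense that minimizing the TSP weight corresponds to minimizing the number of trails. The key insight is that a multi-visit tour on $G'$ is a closed trail through every vertex; since $v$ is only adjacent to $\Odd(G)$, each time the tour visits $v$ it enters along one edge $(v,a)$ and leaves along another $(v,b)$ with $a,b \in \Odd(G)$, and deleting all the $v$-edges from the tour breaks it into a collection of open trails in $G$ whose endpoints are exactly the odd vertices of $G$ that were used as neighbours of $v$. Conversely, given a trail cover of $G$ consisting of open trails ending at odd vertices, we can string the trails together through $v$ into a single closed trail on $G'$. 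The weight of the resulting tour is exactly the number of $v$-edges used, which is twice the number of trails (each trail contributes its two endpoints).

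First I would make the correspondence precise in the forward direction. Given a multi-visit TSP tour $W$ on $G'$ of weight $2k$ (the weight is always even since $v$ has even degree in the tour), removing the $k$ occurrences of $v$ splits $W$ into $k$ edge-disjoint trails in $G$; these are edge-disjoint because $W$ is a trail, and together they traverse every vertex of $G$ since $W$ visits every vertex. So we obtain a trail cover $\M{C}_W$ of $G$ with $|\M{C}_W| = k = \tfrac12 w(W)$. I would then observe that this trail cover can be taken to be maximal: since the TSP minimizes weight, an optimal $W$ cannot be ``improved'' by the reductions of Proposition~\ref{prop_convert_to_mtd} or by absorbing uncovered cycles, because any such improvement would reduce the number of trails and hence the number of $v$-visits, contradicting optimality. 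More carefully, I would show that an optimal $W$ yields a trail cover satisfying the three conditions of Theorem~\ref{thm_mtc_subset_mtd}: the endpoints are odd vertices of $G$ with full degree (since any odd vertex not at full degree could have a pendant edge absorbed, lowering trail count is not quite it — rather, the correct statement is that the trails belong to a minimum trail decomposition of $G(\M{C}_W)$, enforced by Proposition~\ref{prop_mtd_conditions}), and $G \backslash G(\M{C}_W)$ is acyclic (any leftover cycle could be spliced into a trail at a shared vertex without adding a $v$-visit, contradicting that we cannot do better — actually this does not change weight, so I must phrase it as: among optimal tours there is one whose induced cover is maximal).

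For the reverse direction, given a maximal trail cover $\M{C} = \{T_1, \ldots, T_m\}$ of $G$, by Theorem~\ref{thm_mtc_subset_mtd} each $T_i$ is an open trail whose endpoints are odd in $G$, and since $\M{C}$ is a subset of a minimum trail decomposition the trails are edge-disjoint. I would connect them in $G'$ by routing from an endpoint of $T_1$ to $v$, from $v$ to an endpoint of $T_2$, and so on cyclically, using the weight-one edges of $G'$; this requires that each endpoint used is in $\Odd(G)$, which holds. The result is a closed trail on $G'$ visiting every vertex, with weight $2m$. Then I would argue that a \emph{minimum} trail cover, which by Corollary~\ref{prop_mtc_mtcwmyf_equivalent} may be taken maximal, maps to a tour of minimum weight, and vice versa, so a multi-visit TSP solver returns (after stripping $v$) a minimum maximal trail cover. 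The main obstacle I anticipate is the bookkeeping around maximality and optimality: one must be careful that ``minimum weight tour'' and ``minimum number of trails'' line up exactly, handling the fact that the trail cover induced by an arbitrary optimal tour might not literally be maximal but can be massaged into one of the same size via Proposition~\ref{prop_convert_to_mtd} without disturbing optimality, and conversely that the absorb-a-cycle operation preserves weight so maximality can always be arranged among optima. A secondary subtlety is ensuring the tour on $G'$ is a valid \emph{trail} (no repeated edges) even when a trail $T_i$ is closed or when two trails share endpoints — here using the weight-one $v$-edges, each of which is used at most twice in total and at most once ``into'' and once ``out of'' $v$ per visit, needs a brief check against the degree of odd vertices.
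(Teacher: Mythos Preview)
Your approach is essentially identical to the paper's: remove the $v$-edges from an optimal multi-visit tour to obtain a trail cover of size half the weight, and conversely thread a maximal minimum trail cover through $v$ (using Theorem~\ref{thm_mtc_subset_mtd} and Proposition~\ref{prop_mtd_conditions} to guarantee odd endpoints) to obtain a tour of weight twice the trail count. The paper's proof is in fact briefer than your proposal and does not explicitly address the maximality subtlety or the edge-repetition concern you flag at the end---your instinct to post-process via Corollary~\ref{prop_mtc_mtcwmyf_equivalent} is the right patch and fills a small gap the paper leaves implicit.
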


\begin{proof}
Let $T$ be a solution to the multi-visit TSP on $G^\prime$ with total weight $W$. By removing edges from $T$ that are adjacent to the added vertex $v$, we obtain a trail cover on $G \subset G^\prime$ of size $W/2$.

Similarly, given any minimum trail cover that is maximal, we know from Theorem~\ref{thm_mtc_subset_mtd} that trails in $\M{C}$ belong to some minimum trail decomposition and hence each trail ends at distinct odd vertices (Proposition~\ref{prop_mtd_conditions}). Thus we can construct a solution to the multi-visit TSP problem by connecting trails through the added vertex $v$. Since there are $|\M{C}|$ trails and each edge adjacent to $v$ has weight one, the resulting trail has a total weight of $2|\M{C}|$. This is the same weight as was obtained by the solution to the multi-visit TSP and therefore the trail cover obtained from the solution is a minimum trail cover that is maximal.
\end{proof}

This procedure is illustrated in the example below

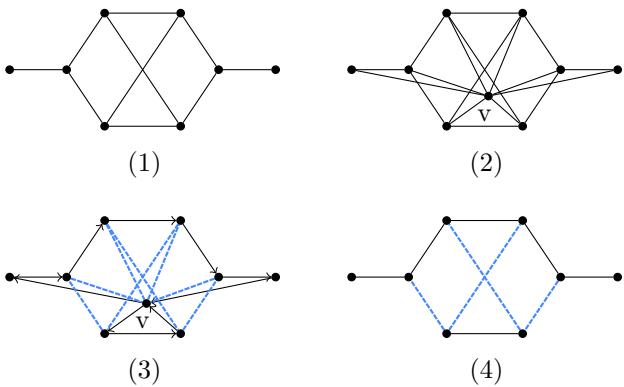
\begin{figure}[H]
\centering
\begin{tikzpicture}
	\begin{pgfonlayer}{nodelayer}
		\node [style=vertex] (38) at (12.75, -4.25) {};
		\node [style=vertex] (39) at (13.75, -4.25) {};
		\node [style=vertex] (40) at (14.25, -5) {};
		\node [style=vertex] (41) at (13.75, -5.75) {};
		\node [style=vertex] (42) at (12.75, -5.75) {};
		\node [style=vertex] (43) at (12.25, -5) {};
		\node [style=vertex] (44) at (11.5, -5) {};
		\node [style=vertex] (45) at (15, -5) {};
		\node [style=vertex] (46) at (13.3, -5.35) {};
		\node [style=vertex] (47) at (8.25, -7) {};
		\node [style=vertex] (48) at (9.25, -7) {};
		\node [style=vertex] (49) at (9.75, -7.75) {};
		\node [style=vertex] (50) at (9.25, -8.5) {};
		\node [style=vertex] (51) at (8.25, -8.5) {};
		\node [style=vertex] (52) at (7.75, -7.75) {};
		\node [style=vertex] (53) at (7, -7.75) {};
		\node [style=vertex] (54) at (10.5, -7.75) {};
		\node [style=vertex] (55) at (8.8, -8.1) {};
		\node [style=none] (56) at (13.25, -5.6) {v};
		\node [style=none] (57) at (8.75, -8.35) {v};
		\node [style=vertex] (58) at (12.75, -7) {};
		\node [style=vertex] (59) at (13.75, -7) {};
		\node [style=vertex] (60) at (14.25, -7.75) {};
		\node [style=vertex] (61) at (13.75, -8.5) {};
		\node [style=vertex] (62) at (12.75, -8.5) {};
		\node [style=vertex] (63) at (12.25, -7.75) {};
		\node [style=vertex] (64) at (11.5, -7.75) {};
		\node [style=vertex] (65) at (15, -7.75) {};
		\node [style=vertex] (68) at (8.25, -4.25) {};
		\node [style=vertex] (69) at (9.25, -4.25) {};
		\node [style=vertex] (70) at (9.75, -5) {};
		\node [style=vertex] (71) at (9.25, -5.75) {};
		\node [style=vertex] (72) at (8.25, -5.75) {};
		\node [style=vertex] (73) at (7.75, -5) {};
		\node [style=vertex] (74) at (7, -5) {};
		\node [style=vertex] (75) at (10.5, -5) {};
		\node [style=none] (77) at (8.775, -6.25) {(1)};
		\node [style=none] (78) at (13.275, -6.25) {(2)};
		\node [style=none] (79) at (8.775, -9) {(3)};
		\node [style=none] (80) at (13.275, -9) {(4)};
	\end{pgfonlayer}
	\begin{pgfonlayer}{edgelayer}
		\draw (42) to (41);
		\draw (40) to (39);
		\draw (39) to (38);
		\draw (38) to (43);
		\draw (45) to (40);
		\draw (44) to (43);
		\draw (46) to (41);
		\draw (46) to (45);
		\draw (46) to (39);
		\draw (46) to (38);
		\draw (46) to (43);
		\draw (46) to (40);
		\draw (46) to (44);
		\draw (46) to (42);
		\draw (40) to (41);
		\draw (42) to (39);
		\draw (41) to (38);
		\draw (43) to (42);
		\draw (62) to (61);
		\draw (60) to (59);
		\draw (59) to (58);
		\draw (58) to (63);
		\draw (65) to (60);
		\draw (64) to (63);
		\draw [style=diredge] (55) to (51);
		\draw [style=diredge] (51) to (50);
		\draw [style=diredge] (50) to (55);
		\draw [style=diredge] (55) to (53);
		\draw [style=diredge] (53) to (52);
		\draw [style=diredge] (52) to (47);
		\draw [style=diredge] (47) to (48);
		\draw [style=diredge] (48) to (49);
		\draw [style=diredge] (49) to (54);
		\draw [style=diredge] (54) to (55);
		\draw (72) to (71);
		\draw (70) to (69);
		\draw (69) to (68);
		\draw (68) to (73);
		\draw (75) to (70);
		\draw (74) to (73);
		\draw (70) to (71);
		\draw (72) to (69);
		\draw (71) to (68);
		\draw (73) to (72);
		\draw [style=Y fusion] (61) to (58);
		\draw [style=Y fusion] (60) to (61);
		\draw [style=Y fusion] (62) to (63);
		\draw [style=Y fusion] (62) to (59);
		\draw [style=Y fusion] (50) to (49);
		\draw [style=Y fusion] (51) to (52);
		\draw [style=Y fusion] (52) to (55);
		\draw [style=Y fusion] (47) to (55);
		\draw [style=Y fusion] (55) to (48);
		\draw [style=Y fusion] (48) to (51);
		\draw [style=Y fusion] (47) to (50);
		\draw [style=hadamard edge] (49) to (55);
	\end{pgfonlayer}
\end{tikzpicture}
\caption{Finding a minimum trail cover by solving the multi-TSP problem. The original graph (1) is transformed into a new graph (2) by adding a vertex $v$ and edges between $v$ and odd vertices of the original graph. A solution to the multi-visit TSP (3) is transformed into a solution to the TSP on the original graph (4) by removing edges adjacent to $v$. The solution to the TSP on the original graph (4) is a minimum trail cover that is maximal.}
\end{figure}

Note that we can adapt this result to the normal TSP by replacing each node in $G^\prime$ with degree $d > 2$ with a complete subgraph of size $d$. Under this transformation, all trails in the original graph become paths and so solutions to the multi-visit TSP become solutions to the TSP on the new graph.

An example is drawn below where the original graph on the left has a trail that begins and ends at the vertex $s$ and covers every vertex in the graph, however it traverses one vertex twice and so is not a solution to the TSP. The graph on the right has undergone the transformation outlined above and so the solution to the multi-visit TSP now corresponds to a solution to the normal TSP and hence to a minimum trail cover.

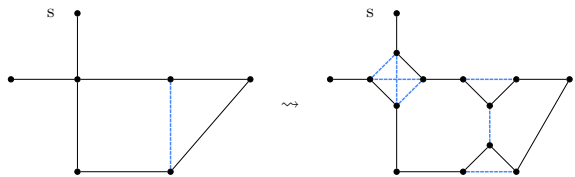
\begin{figure}[H]
\centering
\scalebox{0.7}{\begin{tikzpicture}
	\begin{pgfonlayer}{nodelayer}
		\node [style=none] (39) at (22.75, 0.5) {$\rightsquigarrow$};
		\node [style=vertex] (64) at (17.5, 1) {};
		\node [style=vertex] (65) at (18.75, 2.25) {};
		\node [style=vertex] (66) at (18.75, 1) {};
		\node [style=vertex] (67) at (18.75, -0.75) {};
		\node [style=vertex] (68) at (20.5, -0.75) {};
		\node [style=vertex] (69) at (20.5, 1) {};
		\node [style=vertex] (70) at (22, 1) {};
		\node [style=vertex] (71) at (23.5, 1) {};
		\node [style=vertex] (72) at (24.75, 2.25) {};
		\node [style=vertex] (74) at (24.75, -0.75) {};
		\node [style=vertex] (75) at (26, -0.75) {};
		\node [style=vertex] (77) at (28, 1) {};
		\node [style=vertex] (78) at (24.75, 1.5) {};
		\node [style=vertex] (79) at (24.75, 0.5) {};
		\node [style=vertex] (80) at (25.25, 1) {};
		\node [style=vertex] (81) at (24.25, 1) {};
		\node [style=vertex] (82) at (26, 1) {};
		\node [style=vertex] (83) at (26.5, 0.5) {};
		\node [style=vertex] (84) at (27, 1) {};
		\node [style=vertex] (85) at (26.5, -0.25) {};
		\node [style=vertex] (86) at (27, -0.75) {};
		\node [style=none] (87) at (18.25, 2.25) {s};
		\node [style=none] (88) at (24.25, 2.25) {s};
	\end{pgfonlayer}
	\begin{pgfonlayer}{edgelayer}
		\draw (70) to (69);
		\draw (68) to (70);
		\draw (69) to (66);
		\draw (66) to (65);
		\draw (64) to (66);
		\draw (66) to (67);
		\draw (67) to (68);
		\draw (74) to (79);
		\draw (79) to (81);
		\draw (81) to (71);
		\draw (72) to (78);
		\draw (78) to (80);
		\draw (83) to (84);
		\draw (82) to (83);
		\draw (82) to (80);
		\draw (84) to (77);
		\draw (86) to (77);
		\draw (86) to (85);
		\draw (85) to (75);
		\draw (75) to (74);
		\draw [style=Y fusion] (68) to (69);
		\draw [style=Y fusion] (79) to (80);
		\draw [style=Y fusion] (80) to (81);
		\draw [style=Y fusion] (81) to (78);
		\draw [style=Y fusion] (78) to (79);
		\draw [style=Y fusion] (82) to (84);
		\draw [style=Y fusion] (75) to (86);
		\draw [style=Y fusion] (85) to (83);
	\end{pgfonlayer}
\end{tikzpicture}}
\caption{Example of the reduction from the multi-visit TSP to the TSP.}
\end{figure}

The main limitation of this reduction from multi-visit TSP to standard TSP is that the number of edges in the new graph increases with complexity $O(|V|^2 + |E|)$.
Further work could aim at establishing a more efficient reduction that avoids such a large graph expansion and determining whether current TSP solvers can handle the increase efficiently.

%
%
%
%
%
%
%

\section{Heuristic graph rewrites}\label{sec:graphrewrites}
In this section, we use results from previous sections to develop graph rewrite strategies that transform graph states to reduce the required number of fusions and enhance the performance of approximation algorithms developed in Section~\ref{sec:approximation}. The algorithms presented here typically reduce the number of edges in the graphs thus provide utility for fusion networks created with star resource states as well, though we note that the algorithms presented in~\cite{cite_cz_complexity} would be more effective in this case.

From Section~\ref{sec:complexity}, we know that without allowing graph rewrites, finding linear XY-fusion networks with the minimum number of fusions is NP-hard in all cases except for X fusions with unbounded resource states. Whether these problems remain NP-hard when allowing rewrites remains unknown.

One might ask whether we can transform any graph into one with bounded tree-width to solve trail cover problems in polynomial time. However, this is not possible since tree-width is bounded below by rank-width~\cite{cite_cz_complexity}, and because rank-width is invariant under local complementation and performing inverse Z-deletions never decreases rank width~\cite{cite_rank_width}, we cannot reduce tree-width arbitrarily.

In order for the open graph to be deterministically implementable, we require our rewrites preserve gflow in the graph. Backens and McElvanney\cite{pauli_preserving_rewrites} showed that local complementation and Z-deletion (and their inverses) not only preserve gflow, but are sufficient to transform any two labeled open graphs with gflow and the same target linear map into each other, and so our rewrites will exclusively use these two rewrites.

Our main optimization exploits the fact that the number of fusions required to implement a graph state corresponding to graph $G = (V, E)$ with trail cover $\M{C}$ is given by
\begin{equation}\label{eq_num_fusions}
|E| - |V| + |\M{C}|.
\end{equation}

From~\eqref{eq_num_fusions}, we can reduce the number of fusions in a linear XY-fusion network by either decreasing $|E|$ and $|\M{C}|$ or increasing $|V|$.

\subsection{Rewrite 1: Reducing edges}

Our first rewrite strategy is a known heuristic for simplifying graph states: apply local complementation whenever it reduces the number of edges in the graph.

\textit{Rewrite 1: Locally complement a vertex whenever it reduces the number of edges in the graph.}

An example of such a situation is illustrated in Figure~\ref{fig_optimal_lc}.

\begin{figure}[h]
    \centering
    \begin{tikzpicture}
	\begin{pgfonlayer}{nodelayer}
		\node [style=vertex] (0) at (-3, 0) {};
		\node [style=vertex] (1) at (-2, 0) {};
		\node [style=vertex] (2) at (-2, 1) {};
		\node [style=vertex] (3) at (-3, 1) {};
		\node [style=none] (4) at (-1.35, 0.5) {$\rightsquigarrow$};
		\node [style=vertex] (5) at (-0.7, 0) {};
		\node [style=vertex] (6) at (0.3, 0) {};
		\node [style=vertex] (7) at (0.3, 1) {};
		\node [style=vertex] (8) at (-0.7, 1) {};
		\node [style=none] (9) at (0.95, 0.5) {$\rightsquigarrow$};
		\node [style=vertex] (10) at (1.6, 0) {};
		\node [style=vertex] (11) at (2.6, 0) {};
		\node [style=vertex] (12) at (2.6, 1) {};
		\node [style=vertex] (13) at (1.6, 1) {};
		\node [style=none] (14) at (3.25, 0.5) {$\rightsquigarrow$};
		\node [style=vertex] (15) at (3.9, 0) {};
		\node [style=vertex] (16) at (4.9, 0) {};
		\node [style=vertex] (17) at (4.9, 1) {};
		\node [style=vertex] (18) at (3.9, 1) {};
	\end{pgfonlayer}
	\begin{pgfonlayer}{edgelayer}
		\draw (2) to (1);
		\draw (1) to (0);
		\draw (0) to (3);
		\draw (3) to (2);
		\draw (7) to (6);
		\draw (6) to (5);
		\draw (5) to (8);
		\draw (8) to (7);
		\draw (8) to (6);
		\draw (12) to (11);
		\draw (11) to (10);
		\draw (13) to (11);
		\draw (10) to (12);
		\draw (16) to (15);
		\draw (18) to (16);
		\draw (15) to (17);
	\end{pgfonlayer}
\end{tikzpicture}
    \caption{The cycle graph state is transformed into a linear graph state requiring fewer fusions through a series of local complementation. However, any local complementation applied to the cycle will increase the number of edges in the graph state and so Rewrite 1 would not follow the optimal rewrites steps above, and therefore does not always find the optimal graph state.}
    \label{fig_optimal_lc}
\end{figure}
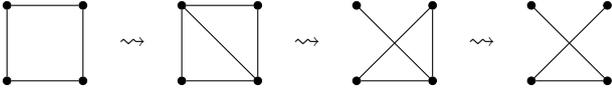

%
%

We experimentally verified that greedily locally complementing vertices whenever it decreases the total number of edges never increases the minimum path cover size for graphs with fewer than 9 vertices. However, a counterexample exists for graphs with 9 vertices, suggesting this may cease to be an effective heuristic beyond a certain graph size.

\subsection{Rewrite 2: Complementing cliques}

In general, Z-deletions increase the number of edges in the graph and hence the number of fusions. 
However, there is a special case which we can show will never increase the number of Y fusions required in a linear Y-fusion network.

\begin{definition}
Given a graph with a clique, \textit{complementing the clique} refers to using an inverse Z-deletion to add a vertex connected to the clique vertices, then performing local complementation on that vertex.
\end{definition}

\textit{Rewrite 2: Complement every clique in the graph.}

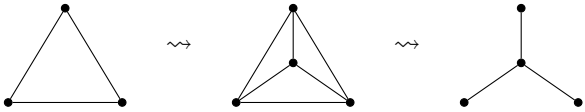
\begin{figure}[H]
\centering
\begin{tikzpicture}
	\begin{pgfonlayer}{nodelayer}
		\node [style=vertex] (0) at (-2.5, 0) {};
		\node [style=vertex] (1) at (-1.75, 1.25) {};
		\node [style=vertex] (2) at (-1, 0) {};
		\node [style=vertex] (9) at (0.5, 0) {};
		\node [style=vertex] (10) at (1.25, 1.25) {};
		\node [style=vertex] (11) at (2, 0) {};
		\node [style=vertex] (18) at (1.25, 0.525) {};
		\node [style=vertex] (19) at (3.5, 0) {};
		\node [style=vertex] (20) at (4.25, 1.25) {};
		\node [style=vertex] (21) at (5, 0) {};
		\node [style=vertex] (28) at (4.25, 0.525) {};
		\node [style=none] (29) at (-0.25, 0.75) {$\rightsquigarrow$};
		\node [style=none] (30) at (2.75, 0.75) {$\rightsquigarrow$};
	\end{pgfonlayer}
	\begin{pgfonlayer}{edgelayer}
		\draw (2) to (0);
		\draw (0) to (1);
		\draw (1) to (2);
		\draw (11) to (9);
		\draw (9) to (10);
		\draw (10) to (11);
		\draw (18) to (9);
		\draw (18) to (11);
		\draw (18) to (10);
		\draw (28) to (19);
		\draw (28) to (21);
		\draw (28) to (20);
	\end{pgfonlayer}
\end{tikzpicture}
\caption{Complementing a triangle by first adding a new node connected to the triangle vertices and locally complementing it.}
\label{fig:comp-triangle}
\end{figure}

Complementing the triangle keeps the number of edges unchanged but increases the number of vertices by one. Therefore by~\eqref{eq_num_fusions}, it decreases the number of fusions if the path cover stays the same size or increases by one. We now show this is always the case.

\begin{proposition}
Complementing a triangle in $G$ never increases the size of a minimum path cover of $G$.
\end{proposition}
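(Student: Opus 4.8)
The plan is to reformulate the statement in terms of spanning linear forests, which makes the accounting transparent. A path cover of a graph $H$ is exactly the data of a spanning subgraph in which every vertex has degree at most two and no cycle appears (a \emph{linear forest}), and the number of paths equals $|V(H)|$ minus the number of edges used. Hence a minimum path cover corresponds to a maximum linear forest, and writing $\mu(H)$ for the largest number of edges in a linear forest contained in $H$, we have $\mathrm{mpc}(H)=|V(H)|-\mu(H)$. Let $G'$ be the graph obtained by complementing the triangle on $\{a,b,c\}$: it has one extra vertex $w$ and the same number of edges as $G$, since the three triangle edges $ab,bc,ca$ are deleted while the three edges $wa,wb,wc$ are added. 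Consequently the claim $\mathrm{mpc}(G')\le \mathrm{mpc}(G)$ is equivalent to the clean inequality
\[
\mu(G')\ \ge\ \mu(G)+1 ,
\]
and this is the form I would prove. It also exposes the link to~\eqref{eq_num_fusions}: recovering one extra forest-edge is precisely what offsets the added vertex $w$.

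First I would fix a maximum linear forest $F$ of $G$ and discard the triangle edges it uses. Because the triangle is a cycle, $F$ uses at most two of $ab,bc,ca$; call this number $t\in\{0,1,2\}$ and set $F_0=F\setminus\{ab,bc,ca\}$, a linear forest of $G'$ with $\mu(G)-t$ edges. It then remains to re-insert edges incident to $w$: adding $k$ of $wa,wb,wc$ gives a forest with $\mu(G)-t+k$ edges, so I must legally add at least $t+1$ of them while keeping $\deg(w)\le 2$ and introducing no cycle. For $t\le 1$ this is the easy regime. Deleting the (at most one) triangle edge frees a path-endpoint at some of $a,b,c$, and I would attach $w$ to two free endpoints lying in distinct components of $F_0$ — either extending a single path by $w$ or merging two paths through $w$ — which adds $k=t+1$ edges and produces the required $\mu(G)+1$.

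The genuinely hard point, and where I expect the argument to concentrate, is the case $t=2$, in which $F$ contains a subpath $a\!-\!b\!-\!c$ so that $b$ carries no spare endpoint; a parallel difficulty arises when $t=0$ but all of $a,b,c$ are interior to $F$. In these configurations the degree cap at $w$ permits adding only two $w$-edges, so the naive reconnection $a\!-\!w\!-\!b$ (plus the external edges of $a,b,c$) reaches only $\mu(G)$ edges and leaves the third triangle vertex to be charged against a compensating merge elsewhere. The crux is certifying that such a merge always exists, and this is exactly where maximality of $F$ (equivalently, minimality of the path cover) must be invoked, via a careful count of the endpoints of the affected components. I would flag honestly that this bound becomes tight in the endpoint-starved extremal instance — the isolated triangle, where $\mu(G')=\mu(G)=2$ and the gain drops to zero — so that the robust conclusion one can always guarantee is the increase-by-at-most-one bound of the preceding discussion, which is precisely what~\eqref{eq_num_fusions} tolerates for leaving the fusion count non-increasing.
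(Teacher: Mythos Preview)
Your diagnosis is right, and the proposition as literally stated is false: the isolated triangle already gives $\mathrm{mpc}(G)=1$ but $\mathrm{mpc}(G')=2$ for the resulting claw $K_{1,3}$. The paper's own proof does not establish the stated claim either. It argues case-by-case on how many triangle edges lie in the cover ($0$, $1$, or $2$) and checks in each case that the number of $Y$ fusions does not increase; via $Y(G,\M{C})=|E|-|V|+|\M{C}|$ with $|E|$ unchanged and $|V|$ increased by one, this is equivalent only to $\mathrm{mpc}(G')\le\mathrm{mpc}(G)+1$. Indeed, in the paper's ``two paths'' case the middle triangle vertex becomes isolated in the modified cover, so the cover size genuinely goes up by one there. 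The sentence immediately preceding the proposition (``it decreases the number of fusions if the path cover stays the same size or increases by one. We now show this is always the case'') is the claim actually being proved, and the proposition heading appears to be a slip.

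Your linear-forest reformulation exposes this cleanly: $\mathrm{mpc}(G')\le\mathrm{mpc}(G)$ becomes $\mu(G')\ge\mu(G)+1$, and you correctly locate the obstruction in the configurations where every triangle vertex is already degree-$2$ in the maximum forest. Once one instead targets the correct weaker inequality $\mu(G')\ge\mu(G)$, your case split becomes routine in every branch: for $t=0$ take $F_0=F$ with $w$ isolated; for $t=1$ your insertion $a\!-\!w\!-\!b$ works since removing the forest edge $ab$ separates $a$ from $b$; for $t=2$ your ``naive'' reconnection already yields $\mu(G)$ edges. The part of your plan where you invoke maximality of $F$ to hunt for a compensating extra merge is chasing the false stronger statement and can simply be dropped.
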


\begin{proof}
Assume we have a triangle and a minimum path cover $P$. We show that we can always update $P$ after complementing the triangle such that the number of fusions never increases.

Consider the cases where the triangle contains either 0, 1, or 2 edges from the cover. It cannot contain 3 edges as this would imply the paths are not vertex disjoint. These cases are illustrated in Figure~\ref{fig_comp-tri} alongside a modified path flowing through the triangle which does not increase the number of fusions.
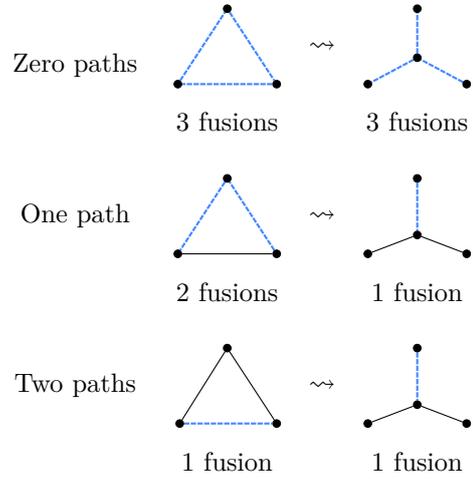
\begin{figure}[H]
\centering
\begin{tikzpicture}
	\begin{pgfonlayer}{nodelayer}
		\node [style=none] (7) at (0, 3.75) {$\rightsquigarrow$};
		\node [style=none] (15) at (0, 1.5) {$\rightsquigarrow$};
		\node [style=vertex] (24) at (-1.25, 4.25) {};
		\node [style=vertex] (25) at (-1.9, 3.25) {};
		\node [style=vertex] (26) at (-0.6, 3.25) {};
		\node [style=vertex] (27) at (1.25, 4.25) {};
		\node [style=vertex] (28) at (0.6, 3.25) {};
		\node [style=vertex] (29) at (1.9, 3.25) {};
		\node [style=vertex] (30) at (-1.25, 2) {};
		\node [style=vertex] (31) at (-1.9, 1) {};
		\node [style=vertex] (32) at (-0.6, 1) {};
		\node [style=vertex] (33) at (1.25, 2) {};
		\node [style=vertex] (34) at (0.6, 1) {};
		\node [style=vertex] (35) at (1.9, 1) {};
		\node [style=vertex] (36) at (-1.25, -0.25) {};
		\node [style=vertex] (37) at (-1.875, -1.25) {};
		\node [style=vertex] (38) at (-0.6, -1.25) {};
		\node [style=vertex] (39) at (1.25, 3.6) {};
		\node [style=vertex] (40) at (1.25, 1.25) {};
		\node [style=vertex] (41) at (1.25, -1) {};
		\node [style=vertex] (42) at (1.25, -0.25) {};
		\node [style=vertex] (43) at (0.6, -1.25) {};
		\node [style=vertex] (44) at (1.9, -1.25) {};
		\node [style=none] (45) at (0, -0.75) {$\rightsquigarrow$};
		\node [style=none] (46) at (-3.25, 3.5) {\textrm{Zero paths}};
		\node [style=none] (47) at (-3.25, 1.5) {\textrm{One path}};
		\node [style=none] (48) at (-3.25, -0.75) {\textrm{Two paths}};
		\node [style=none] (49) at (-1.25, 2.75) {\tx{3 fusions}};
		\node [style=none] (50) at (1.25, 2.75) {\tx{3 fusions}};
		\node [style=none] (52) at (-1.25, 0.5) {\tx{2 fusions}};
		\node [style=none] (53) at (1.25, 0.5) {\tx{1 fusion}};
		\node [style=none] (55) at (-1.25, -1.75) {\tx{1 fusion}};
		\node [style=none] (56) at (1.25, -1.75) {\tx{1 fusion}};
	\end{pgfonlayer}
	\begin{pgfonlayer}{edgelayer}
		\draw [style=none] (31) to (32);
		\draw [style=none] (34) to (40);
		\draw [style=none] (40) to (35);
		\draw [style=none] (37) to (36);
		\draw [style=none] (36) to (38);
		\draw [style=none] (41) to (43);
		\draw [style=none] (41) to (44);
		\draw [style=Y fusion] (29) to (39);
		\draw [style=Y fusion] (39) to (28);
		\draw [style=Y fusion] (39) to (27);
		\draw [style=Y fusion] (26) to (25);
		\draw [style=Y fusion] (25) to (24);
		\draw [style=Y fusion] (24) to (26);
		\draw [style=Y fusion] (32) to (30);
		\draw [style=Y fusion] (30) to (31);
		\draw [style=Y fusion] (33) to (40);
		\draw [style=Y fusion] (37) to (38);
		\draw [style=Y fusion] (42) to (41);
	\end{pgfonlayer}
\end{tikzpicture}
\caption{Complementing the triangle never increases the number of fusions. Solid black lines represent edges belonging to the path cover and blue dashed lines are $Y$ fusions.}
\label{fig_comp-tri}
\end{figure}
\vspace{-5mm}
\end{proof}

This technique easily generalizes to cliques of arbitrary size. 
For cliques of size four or greater, it is evident that the number of fusions does not increase since when $n = 4$, the number of edges reduces by 2 and the number of vertices increases by 1. 
However, note that edges in a 4-clique could be traversed by at most two paths, so complementing the clique will at most split one path into two and increase the path cover size by one. Therefore from~\eqref{eq_num_fusions}, we always strictly decrease the number of fusions by complementing cliques of size four or greater.

\begin{proposition}
    Complementing a clique in $G$ never increases the number of fusions required to implement $G$.
\end{proposition}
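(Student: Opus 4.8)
The plan is to reduce everything to the fusion-count formula \eqref{eq_num_fusions} and track how each term changes. First I would record the effect of complementing a clique $C$ of size $n$: the inverse $Z$-deletion adds one vertex $v$ joined to all of $C$, and the subsequent local complementation on $v$ toggles every edge among $C$, erasing the $\binom{n}{2}$ clique edges and leaving the $n$ spokes $vc_i$. Hence $|V|$ grows by $1$ and $|E|$ changes by $n - \binom{n}{2} = -\tfrac{n(n-3)}{2}$, so $|E| - |V|$ drops by exactly $\tfrac{n(n-3)}{2} + 1$. For $n = 3$ this drop is $1$ and the claim is precisely the preceding proposition on triangles, so I would assume $n \ge 4$ from here on; the remaining task is to bound how much a minimum path cover can grow.

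Next I would show that complementing $C$ increases the minimum path-cover size by at most $n - 2$. Take a minimum path cover $P$ of $G$. The clique edges used by $P$ form the edge set of the restriction of $P$ to $C$, which is a disjoint union of paths on the $n$ vertices of $C$ and hence has at most $n - 1$ edges. Deleting these edges from $P$ breaks it into at most $|P| + (n-1)$ pieces, all still present in $G'$. If $P$ uses at least one clique edge $c_i c_j$, then after the deletions $c_i$ and $c_j$ have each lost all their clique edges, so each has degree $\le 1$ and is an endpoint of its (possibly trivial) piece, and these are two \emph{distinct} pieces because $c_ic_j$ was exactly the edge joining them. Splicing $v$ in as $\dots - c_i - v - c_j - \dots$ therefore merges two pieces and simultaneously covers $v$, yielding a path cover of $G'$ of size at most $|P| + n - 2$. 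If $P$ uses no clique edge, it already covers $G' \setminus \{v\}$ and adding $v$ as its own path gives size $|P| + 1 \le |P| + n - 2$ for $n \ge 3$. Either way, the minimum path cover of $G'$ exceeds that of $G$ by at most $n-2$.

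Finally I would combine the two estimates: the number of required fusions changes by at most $-\bigl(\tfrac{n(n-3)}{2} + 1\bigr) + (n-2) = -\tfrac{(n-2)(n-3)}{2}$, which is $\le 0$ for every $n \ge 2$ and strictly negative for $n \ge 4$, recovering also the claimed strict improvement for cliques of size four or more. Together with the triangle case this gives the proposition.

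I expect the main obstacle to be the path-cover bound — specifically, arguing rigorously that after deleting the clique edges the new vertex $v$ can always be spliced \emph{between} two endpoints in different pieces rather than merely appended to one (which would only give $|P| + n - 1$, enough for $n \ge 5$ but not to match the $n=4$ improvement). The key observation making this work without a case analysis is that any single used clique edge $c_ic_j$ automatically leaves $c_i$ and $c_j$ as endpoints of distinct pieces. A secondary subtlety is whether the statement is meant for $Y$-fusion/path covers or general $XY$ trail covers; I would phrase the argument for path covers in line with the surrounding subsection, noting that for trail covers it only becomes easier since repeated vertices give extra freedom when routing through $v$.
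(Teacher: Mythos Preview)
Your proof is correct and follows the same overall strategy as the paper: track the change in $|E|-|V|$ under clique complementation and bound the possible growth of the minimum path cover, then combine via \eqref{eq_num_fusions}. The paper's own argument is informal and only treats $n=4$ explicitly, asserting that the path cover grows by at most one because ``edges in a 4-clique could be traversed by at most two paths, so complementing the clique will at most split one path into two.'' That particular bound is actually not always achievable: take $G=K_4$, whose minimum path cover is a single Hamiltonian path, while the resulting star $K_{1,4}$ requires three paths, an increase of two. Your bound of $n-2$ is the correct (and tight) one, and your splicing argument---using any single deleted clique edge $c_ic_j$ to insert $v$ between the two pieces it separates---is exactly what is needed to establish it uniformly. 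The final arithmetic $-\bigl(\tfrac{n(n-3)}{2}+1\bigr)+(n-2)=-\tfrac{(n-2)(n-3)}{2}\le 0$ then yields the proposition for all $n\ge 3$, with strict inequality for $n\ge 4$ as the paper also claims. In short: same approach, but your execution is more careful and general than the paper's sketch.
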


By complementing $k$-cliques when $k > 3$, we also reduce the number of edges, which may reduce the time required to find the minimum path cover, 
though this comes at the cost of increasing the number of nodes.
We note moreover that this result may not hold generally for bounded path covers.

\subsection{Rewrite 3: Reducing odd vertices}

For $X$ fusions, we have the advantage of knowing the size of the minimum trail decomposition will be half the odd number of vertices. Therefore, in addition to reducing the number of edges, we also want to reduce the number of odd vertices. For bounded trail decompositions, reducing the number of odd vertices also improves the approximation ratio of Algorithm~\ref{alg_ltrail_decomposition}. 

\textit{Rewrite 3: Perform local complementation whenever it reduces $|E| + \frac{1}{2}|\Odd(G)|$.}

Figure~\ref{fig_full_reduce_vs_lc} illustrates a graph where local complementation of the vertex $v$ leaves the number of edges unchanged but reduces the number of odd edges by four. The new graph can therefore be implemented with two X fusions, whereas the original graph requires at least four $XY$ fusions.

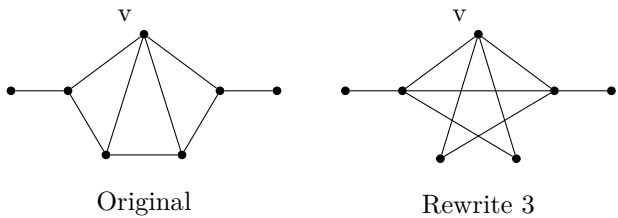
\begin{figure}[H]
\centering
\begin{tikzpicture}
	\begin{pgfonlayer}{nodelayer}
		\node [style=vertex] (0) at (-1, 2) {};
		\node [style=vertex] (1) at (0, 2.75) {};
		\node [style=vertex] (2) at (1, 2) {};
		\node [style=vertex] (3) at (1.75, 2) {};
		\node [style=vertex] (4) at (-1.75, 2) {};
		\node [style=vertex] (5) at (-0.5, 1.15) {};
		\node [style=vertex] (6) at (0.5, 1.15) {};
		\node [style=none] (7) at (0, 0.5) {Original};
		\node [style=vertex] (8) at (3.4, 2) {};
		\node [style=vertex] (9) at (4.4, 2.75) {};
		\node [style=vertex] (10) at (5.4, 2) {};
		\node [style=vertex] (11) at (6.15, 2) {};
		\node [style=vertex] (12) at (2.65, 2) {};
		\node [style=vertex] (13) at (3.9, 1.1) {};
		\node [style=vertex] (14) at (4.9, 1.1) {};
		\node [style=none] (15) at (4.4, 0.5) {Rewrite 3};
		\node [style=none] (16) at (4.15, 3) {v};
		\node [style=none] (17) at (-0.25, 3) {v};
	\end{pgfonlayer}
	\begin{pgfonlayer}{edgelayer}
		\draw (0) to (1);
		\draw (1) to (2);
		\draw (2) to (6);
		\draw (6) to (5);
		\draw (5) to (0);
		\draw (0) to (4);
		\draw (2) to (3);
		\draw (1) to (5);
		\draw (1) to (6);
		\draw (8) to (9);
		\draw (9) to (10);
		\draw (8) to (12);
		\draw (10) to (11);
		\draw (9) to (13);
		\draw (9) to (14);
		\draw (14) to (8);
		\draw (13) to (10);
		\draw (8) to (10);
	\end{pgfonlayer}
\end{tikzpicture}
\caption{The diagram on the left remains unchanged by PyZX's full reduce algorithm and the minimum fusion network requires six $X$ fusions. The diagram on the right is the output of Rewrite 3 and requires only two $X$ fusions.} 
\label{fig_full_reduce_vs_lc}
\end{figure}

\begin{proposition}
Complementing a triangle reduces $X$ fusions if and only if two or more vertices in the triangle are odd.
\end{proposition}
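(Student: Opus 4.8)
The plan is to compute directly the change in the minimum number of $X$ fusions caused by complementing the triangle. By Theorem~\ref{thm_num_fusions} with $Y(G,\M{C}) = 0$, any trail decomposition $\M{C}$ of $G$ uses $|E| - |V| + |\M{C}|$ fusions, so the minimum number of $X$ fusions needed to implement $G$ is $|E| - |V| + \tau(G)$, where $\tau(G)$ denotes the size of a minimum trail decomposition; by Lemma~\ref{lem_mtd_num_cc}, $\tau(G) = \frac{1}{2}|\Odd(G)| + c(G)$ with $c(G)$ the number of Eulerian connected components of $G$. The strategy is therefore to track how $|E|$, $|V|$, $|\Odd(G)|$ and $c(G)$ change under the rewrite and to read off when the total strictly decreases.

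First I would spell out the rewrite. Write the triangle as $\{a,b,c\}$ and let $v$ be the vertex added by the inverse Z-deletion, so $N(v) = \{a,b,c\}$; local complementation at $v$ complements the subgraph induced on $N(v)$, which is a complete triangle, hence deletes the edges $ab$, $bc$, $ca$. So the rewritten graph $G'$ satisfies $|V'| = |V| + 1$ and $|E'| = |E|$ (three edges removed, the three edges $va,vb,vc$ added). Tracking degrees: $v$ has degree $3$; each of $a,b,c$ loses two triangle edges and gains one edge to $v$, so its degree drops by one and its parity flips; every other degree is unchanged. Writing $k \in \{0,1,2,3\}$ for the number of odd vertices of $G$ among $\{a,b,c\}$, the odd set of $G'$ consists of $v$, the $3-k$ previously-even triangle vertices, and the odd vertices of $G$ outside the triangle, so $|\Odd(G')| = |\Odd(G)| + 4 - 2k$.

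Next I would handle the component bookkeeping. The triangle lies in one component $C$ of $G$, which after the rewrite becomes $C' = C \cup \{v\}$; this is still connected (a path using a deleted triangle edge reroutes through $v$), and the other components are untouched, so the only possible change in $c(G)$ comes from $C$. Since $v$ has odd degree, $C'$ is never Eulerian, and whenever $k \ge 1$ the component $C$ already contains an odd vertex and so was not Eulerian either; hence $c(G') - c(G) = -\epsilon$ where $\epsilon = 1$ if $C$ is Eulerian (which forces $k = 0$) and $\epsilon = 0$ otherwise. Thus $\tau(G') - \tau(G) = \frac{1}{2}(4 - 2k) - \epsilon = 2 - k - \epsilon$, and the change in the minimum number of $X$ fusions is
\[
\Delta = \left(|E'| - |V'|\right) - \left(|E| - |V|\right) + \left(\tau(G') - \tau(G)\right) = -1 + (2 - k) - \epsilon = 1 - k - \epsilon .
\]
A short case check then finishes the proof: $\Delta < 0$ iff $k + \epsilon \ge 2$, and since $\epsilon = 1$ occurs only when $k = 0$, this is equivalent to $k \ge 2$; when $k \le 1$ one gets $\Delta \in \{0,1\}$, so the rewrite never reduces the $X$-fusion count. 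Hence complementing the triangle reduces $X$ fusions precisely when at least two of its vertices are odd.

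The main obstacle is not the arithmetic but the corner cases of $\tau$: confirming that $C'$ stays connected and cannot be Eulerian, and checking that the single awkward case ($k = 0$ with $C$ Eulerian) yields $\Delta = 0$ rather than a decrease, so the ``if and only if'' is not broken. If instead one simply assumes $G$ is connected with at least one odd vertex --- which is the effective setting of the surrounding discussion --- then $c \equiv 0$, the computation collapses to $\Delta = 1 - k$, and the claim is immediate.
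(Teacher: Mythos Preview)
Your argument is correct and follows the same route as the paper: compute the change in $|E|-|V|+|\M{T}|$ using Theorem~\ref{thm_num_fusions} and the parity shift $|\Odd(G')|=|\Odd(G)|+4-2k$, arriving at $\Delta=1-k$. You are in fact more careful than the paper, which tacitly assumes the component containing the triangle already has an odd vertex; your $\epsilon$ term handles the Eulerian case explicitly and confirms it does not break the equivalence.
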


\begin{proof}
Let $G = (V, E)$ be a graph and $S \subset G$ be a triangle in $G$ and let $F$ be the number of $X$ fusions required to implement $G$. After complementing the triangle, we obtain the graph $G^\prime = (V^\prime, E^\prime)$ where $|E^\prime| = |E|$, $|V^\prime| = |V| + 1$, and the odd number of vertices increases by $4 - 2|\Odd(S)|$. If $\M{T}$ is a minimum trail decomposition of $G$, Theorem~\ref{thm_num_fusions} tells us the required number of fusions $F^\prime$ to implement $G^\prime$ increases by
\begin{align*}
|F^\prime| - |F| &= |E^\prime| - |V^\prime| + |\M{T}^\prime| - (|E| - |V| + |\M{T}|) \\
&= |\M{T}^\prime| - |\M{T}| - 1\\
&= \frac{1}{2}\left[ |\Odd{G}| + 4 - 2|\Odd{S}|\right] - \frac{1}{2}|\Odd{G}| - 1 \\
&= 1 - |\Odd{S}|
\end{align*}
Therefore the required number of fusions decreases if and only if the triangle contains two or more odd vertices.
\end{proof}
Note that this does not hold in general for bounded trail decompositions; a counterexample appears in Appendix~\ref{appendix_sec_tri_complementation_counterexample}.

\section{Compilation benchmarks}\label{sec:benchmarks}
We now present a complete compilation algorithm that integrates the approximation algorithms from Section~\ref{sec:approximation} with the graph rewrites from Section~\ref{sec:graphrewrites} to compile MBQC patterns into optimized fusion networks for different fusion types. 
We evaluate the performance of our compilation algorithms on two distinct benchmark sets. 
The first set, denoted B1, is comprised of common quantum error-correcting codes: star graphs~\cite{cite_ft_quantum_computing_with_nondeterminism, cite_photonic_architecture_with_ghz} used in the original FBQC formulation~\cite{cite_orig_fbqc}, cycles~\cite{cite_orig_fbqc}, lattices~\cite{cite_lattice_mbqc}, repeaters~\cite{cite_repeaters}, and trees~\cite{cite_trees_mbqc, cite_loss_tolerance_trees}.
The second benchmark set, B2, is a subset of the circuits from the QASMBench~\cite{cite_qasm_bench} suite, which contains real-world quantum algorithms including Grover's algorithm, quantum teleportation, and the Quantum Fourier Transform. 
We converted the OpenQASM circuits to MBQC patterns using PyZX~\cite{cite_pyzx}, transforming each circuit into an open graph from which we extracted the graph structure. This process includes applying the ``full reduce'' graph state reduction procedure that partially simplifies the graph.

To evaluate the quality of the compiled fusion networks, we analyse the number of fusions, number of photons and the number of resource states required to implement a given graph, which are linearly related by Theorem~\ref{thm_num_fusions}.
We compare the number of fusions against a (loose) lower bound which assumes that the graph is generated from a single resource state.
Since deterministic resource state generation is impractical for all but the smallest states, we address this limitation by bounding the allowed number of photons in each resource state.
We conclude by analysing the probability of successful graph state generation in a photonic architecture comprising caterpillar state sources~\cite{cite_deterministic_catapillar_states, cite_linear_cluster_state_generation}, 
unrestricted routers and repeat-until-success fusion modules \cite{cite_rus_original,lee_nearly_2015, hilaire_enhanced_2024, felice_fusion_2024}.

\begin{figure}[h]
    \centering
    \begin{tikzpicture}
	\begin{pgfonlayer}{nodelayer}
		\node [style=vertex] (0) at (-4.75, 0) {};
		\node [style=vertex] (1) at (-4.25, 0) {};
		\node [style=vertex] (2) at (-3.75, 0) {};
		\node [style=vertex] (4) at (-2.25, 0) {};
		\node [style=vertex] (5) at (-2, 0.5) {};
		\node [style=vertex] (6) at (-1.5, 0.5) {};
		\node [style=vertex] (7) at (-1.25, 0) {};
		\node [style=vertex] (8) at (-1.5, -0.5) {};
		\node [style=vertex] (9) at (-2, -0.5) {};
		\node [style=vertex] (10) at (0.25, 0.5) {};
		\node [style=vertex] (11) at (0.25, 0) {};
		\node [style=vertex] (12) at (0.25, -0.5) {};
		\node [style=vertex] (13) at (0.75, -0.5) {};
		\node [style=vertex] (14) at (1.25, -0.5) {};
		\node [style=vertex] (15) at (1.25, 0) {};
		\node [style=vertex] (16) at (0.75, 0) {};
		\node [style=vertex] (17) at (0.75, 0.5) {};
		\node [style=vertex] (18) at (1.25, 0.5) {};
		\node [style=vertex] (20) at (-4, -2.75) {};
		\node [style=vertex] (21) at (-3.75, -2.25) {};
		\node [style=vertex] (22) at (-3.25, -2.25) {};
		\node [style=vertex] (23) at (-3, -2.75) {};
		\node [style=vertex] (24) at (-3.25, -3.25) {};
		\node [style=vertex] (25) at (-3.75, -3.25) {};
		\node [style=vertex] (26) at (-2.5, -2.75) {};
		\node [style=vertex] (27) at (-3, -1.75) {};
		\node [style=vertex] (28) at (-4, -1.75) {};
		\node [style=vertex] (29) at (-4.5, -2.75) {};
		\node [style=vertex] (30) at (-4, -3.75) {};
		\node [style=vertex] (31) at (-3, -3.75) {};
		\node [style=none] (32) at (-4.25, -1) {4-star};
		\node [style=vertex] (33) at (-4.25, 0.5) {};
		\node [style=vertex] (34) at (-4.25, -0.5) {};
		\node [style=none] (35) at (-3.5, -4.25) {3-repeater};
		\node [style=none] (36) at (-1.75, -1) {6-cycle};
		\node [style=none] (37) at (0.75, -1) {(3,3)-lattice};
		\node [style=vertex] (38) at (-0.25, -3.25) {};
		\node [style=vertex] (39) at (-0.25, -2.75) {};
		\node [style=vertex] (40) at (0.25, -3.25) {};
		\node [style=vertex] (41) at (-0.75, -3.25) {};
		\node [style=vertex] (42) at (0.75, -3) {};
		\node [style=vertex] (43) at (0.75, -3.5) {};
		\node [style=vertex] (44) at (0, -2.25) {};
		\node [style=vertex] (45) at (-0.5, -2.25) {};
		\node [style=vertex] (46) at (-1.25, -3) {};
		\node [style=vertex] (47) at (-1.25, -3.5) {};
		\node [style=vertex] (48) at (1, -2.5) {};
		\node [style=vertex] (49) at (1.25, -3) {};
		\node [style=vertex] (50) at (1.25, -3.5) {};
		\node [style=vertex] (51) at (1, -4) {};
		\node [style=vertex] (52) at (-1.5, -4) {};
		\node [style=vertex] (53) at (-1.75, -3.5) {};
		\node [style=vertex] (54) at (-1.75, -3) {};
		\node [style=vertex] (55) at (-1.5, -2.5) {};
		\node [style=vertex] (56) at (-1, -2) {};
		\node [style=vertex] (57) at (-0.5, -1.75) {};
		\node [style=vertex] (58) at (0, -1.75) {};
		\node [style=vertex] (59) at (0.5, -2) {};
		\node [style=none] (60) at (-0.25, -4.25) {(3,2,2)-tree};
	\end{pgfonlayer}
	\begin{pgfonlayer}{edgelayer}
		\draw (2) to (1);
		\draw (1) to (0);
		\draw (9) to (4);
		\draw (4) to (5);
		\draw (5) to (6);
		\draw (6) to (7);
		\draw (7) to (8);
		\draw (8) to (9);
		\draw (18) to (17);
		\draw (17) to (10);
		\draw (10) to (11);
		\draw (11) to (12);
		\draw (12) to (13);
		\draw (13) to (14);
		\draw (14) to (15);
		\draw (15) to (16);
		\draw (16) to (13);
		\draw (11) to (16);
		\draw (16) to (17);
		\draw (18) to (15);
		\draw (25) to (20);
		\draw (20) to (21);
		\draw (21) to (22);
		\draw (22) to (23);
		\draw (23) to (24);
		\draw (24) to (25);
		\draw (21) to (24);
		\draw (25) to (22);
		\draw (23) to (20);
		\draw (31) to (24);
		\draw (25) to (30);
		\draw (20) to (29);
		\draw (28) to (21);
		\draw (22) to (27);
		\draw (26) to (23);
		\draw (34) to (1);
		\draw (1) to (33);
		\draw (38) to (40);
		\draw (40) to (42);
		\draw (42) to (48);
		\draw (42) to (49);
		\draw (50) to (43);
		\draw (43) to (40);
		\draw (51) to (43);
		\draw (38) to (41);
		\draw (41) to (47);
		\draw (47) to (52);
		\draw (53) to (47);
		\draw (41) to (46);
		\draw (46) to (54);
		\draw (46) to (55);
		\draw (38) to (39);
		\draw (39) to (44);
		\draw (44) to (59);
		\draw (44) to (58);
		\draw (39) to (45);
		\draw (45) to (57);
		\draw (45) to (56);
	\end{pgfonlayer}
\end{tikzpicture}
\end{figure}

\subsection{Performance of approximation algorithms}

We compute fusion networks using the following approaches:
\textbf{X fusion networks:} We follow the constructive proof of Theorem~\ref{thm_min_trail_decomp}, utilizing Hierholzer's algorithm~\cite{cite_hierholzer_alg} to find the Eulerian circuit. This yields a minimum trail decomposition in $O(|E|)$ time.
\textbf{Y fusion networks:} We use a heuristic algorithm that searches for the longest path within a time limit, removes the path from the graph, and repeats until we have a path cover.
\textbf{XY fusion networks:} We use Algorithm~\ref{alg_find_tc_min_y}.

For our comparative analysis, we employ a simple lower bound for the minimum number of fusions required to implement a graph state. This bound represents the number of fusions required if the fusion network consisted of a single long resource state subdivided as efficiently as possible.

\begin{restatable}{lemma}{lemLowerBoundPhotonRestricted}\label{lem_lower_bound_photon_restricted}
Let $G = (E, V)$ be a graph and let $F_{\tx{min}}$ be the minimum number of $XY$ fusions required to implement $G$ with resource states each having at most $L$ photons. Then
\[
F_{\tx{min}} \ge |E| - |V| + \ceil{\frac{2|E| - |V|}{L - 2}}.
\]
\end{restatable}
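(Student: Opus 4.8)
The plan is to bound the number of resource states from below by a photon count and then invoke Theorem~\ref{thm_num_fusions}. Since the resource states are linear, any fusion network implementing $G$ corresponds to a trail cover $\M{C}$ of $G$, and Theorem~\ref{thm_num_fusions} gives that the number of $XY$ fusions of such a network is $F = |E| - |V| + |\M{C}|$, where $|\M{C}|$ is the number of resource states. So it suffices to prove $L|\M{C}| \ge |V| + 2F$: substituting $|\M{C}| = F - |E| + |V|$ and dividing by $L - 2 > 0$ (the statement is only meaningful for $L \ge 3$) yields $F \ge \frac{L|E| - (L-1)|V|}{L-2}$, and since $\frac{L|E| - (L-1)|V|}{L-2} = |E| - |V| + \frac{2|E| - |V|}{L-2}$ while $F - (|E| - |V|) = |\M{C}|$ is an integer, the claimed bound with the ceiling follows; applying this to a minimal network gives the bound on $F_{\tx{min}}$.

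To get $L|\M{C}| \ge |V| + 2F$ I would count photons. A resource state implementing a trail $T$ uses $w(T) = n_T + f_T$ photons, where $n_T$ is its number of nodes (one photon per node, for the terminal measurement or output) and $f_T$ is the number of fusion endpoints lying on $T$ (one photon per such endpoint). Summing over $\M{C}$: the counting argument from the proof of Theorem~\ref{thm_num_fusions} already shows $\sum_{T \in \M{C}} n_T = |V| + X(G, \M{C})$ (a vertex traversed $k$ times in total contributes $k$ copies), and since every fusion, whether an $X$ fusion or a $Y$ fusion, has exactly two endpoints, $\sum_{T \in \M{C}} f_T = 2F$. Hence $\sum_{T \in \M{C}} w(T) = |V| + X(G, \M{C}) + 2F$. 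Each resource state satisfies $w(T) \le L$, so $|V| + X(G, \M{C}) + 2F \le L|\M{C}|$, and dropping the nonnegative term $X(G, \M{C})$ gives the inequality we need.

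The rest is routine algebra: expanding $(|E| - |V|)(L - 2) + 2|E| - |V| = L|E| - (L-1)|V|$ confirms that the two forms of the bound coincide, and one only has to keep track of the fact that dividing through by $2 - L < 0$ reverses the inequality. The one step that needs genuine attention is the photon accounting --- in particular the claim that the total fusion-photon contribution is exactly $2F$ no matter how the $X$ fusions are arranged to merge a multiply-covered vertex (merging $k$ copies of a vertex costs $k - 1$ fusions, hence $2(k-1)$ photons, which matches the $k - 1$ that vertex contributes to $X(G, \M{C})$). This is also where the $L - 2$ in the denominator originates: beyond one measurement photon per node, a resource state spends two photons per fusion it takes part in, so covering many edges with few resource states is photon-expensive, which forces $|\M{C}|$ up.
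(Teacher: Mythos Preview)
Your argument is correct. Both your proof and the paper's rest on the same photon-counting idea --- total photons $\ge |V| + 2F$, while total photons $\le L|\M{C}|$ --- combined with Theorem~\ref{thm_num_fusions}. Where they differ is packaging: the paper routes the count through Lemma~\ref{lem_subdivision_eq} by considering a single hypothetical resource state with $P = 2|E| - |V| + 2$ photons and then subdividing, which reads more like computing the cost of a particular construction than establishing an inequality for an arbitrary network. Your version is more transparently a lower-bound argument: you sum $w(T)$ over an \emph{arbitrary} trail cover, keep the $X(G,\M{C})$ term explicitly before discarding it, and then do straight algebra. In particular, you handle the ceiling cleanly via the observation that $F - (|E| - |V|) = |\M{C}| \in \Z$, whereas the paper inherits the ceiling from Lemma~\ref{lem_subdivision_eq}. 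The two routes give the same bound; yours is shorter and does not rely on the implicit assumption that the optimal network arises by subdividing a single line.
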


Figure~\ref{fig_fusions_vs_edges} shows the performance of our algorithms for the different fusion types alongside the lower bound from Lemma~\ref{lem_lower_bound_photon_restricted}, and illustrates two key insights.

First, we are almost always able to find $X$ fusion networks that require fewer fusions than $Y$ fusion networks, with the difference tending to increase with the size of the graph. This is most likely because we are able to compute exact minimum trail decompositions, whereas we can only approximate minimum path covers.
Second, our algorithm for finding $XY$ fusion networks achieves results remarkably close to the lower bound, requiring no more than 8 additional fusions across the QASMBench benchmark set.

The superiority of XY fusions is most evident in the 7 qubit HHL circuit which requires 1466 photons for Y fusions, 1238 photons for X fusions, and only 1142 photons for XY fusions.

\begin{figure}[h]
\centering
\begin{subfigure}[h]{0.8\linewidth}
\includegraphics[width=\linewidth]{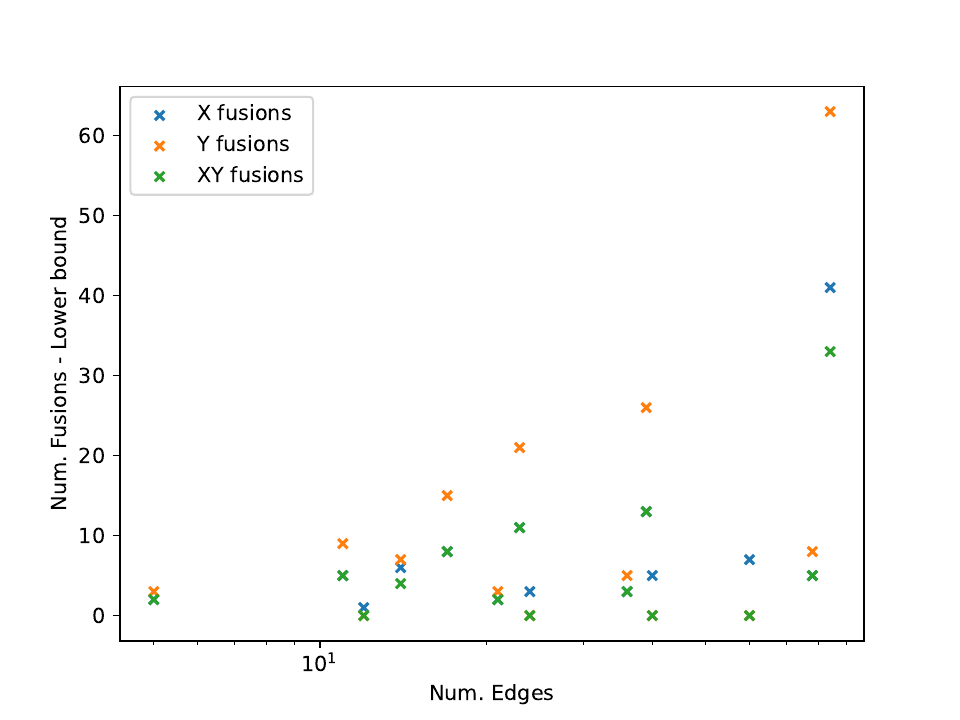}
\caption{B1: QEC}
\end{subfigure}
\begin{subfigure}[h]{0.8\linewidth}
\includegraphics[width=\linewidth]{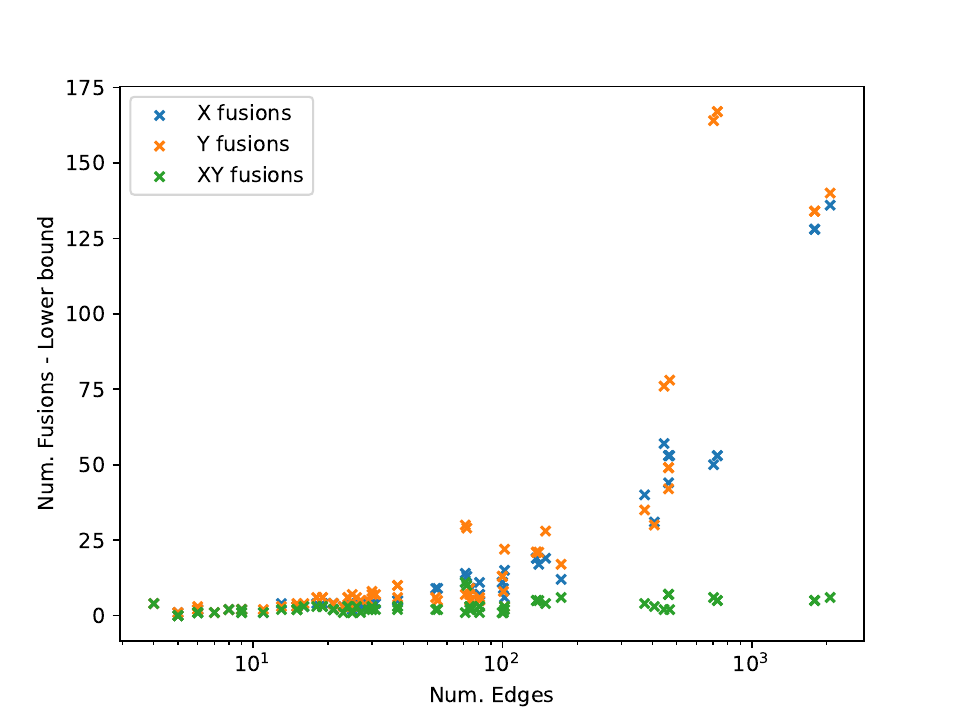}
\caption{B2: QASMBench}
\end{subfigure}
\caption{Difference in number of fusions between fusion networks produced by the heuristic algorithms and the lower bound. As the number of edges increases, the difference grows for $X$ and $Y$ fusion networks, while remaining relatively constant for $XY$ fusion networks.}
\label{fig_fusions_vs_edges}
\end{figure}

Practical implementations often impose restrictions on the number of photons in each linear resource state. Figure~\ref{fig_resource_requirements_shor_encoded_six_ring} illustrates how increasing the number of photons in resource states affects the number of fusions and resource states required. $XY$ fusions demonstrate a clear advantage over either $X$ or $Y$ fusions. However, in this specific case, $Y$ fusions perform significantly better than $X$ fusions because the Shor(2,2) encoded 6-ring graph is cubic, and cubic graphs never require more $Y$ fusions than $X$ fusions when no restrictions are placed on resource state length. In general, the higher the node degree in the graph, the more $Y$ fusions are required relative to $X$ fusions.

\begin{figure}[h]
\centering
\includegraphics[width=0.95\linewidth]{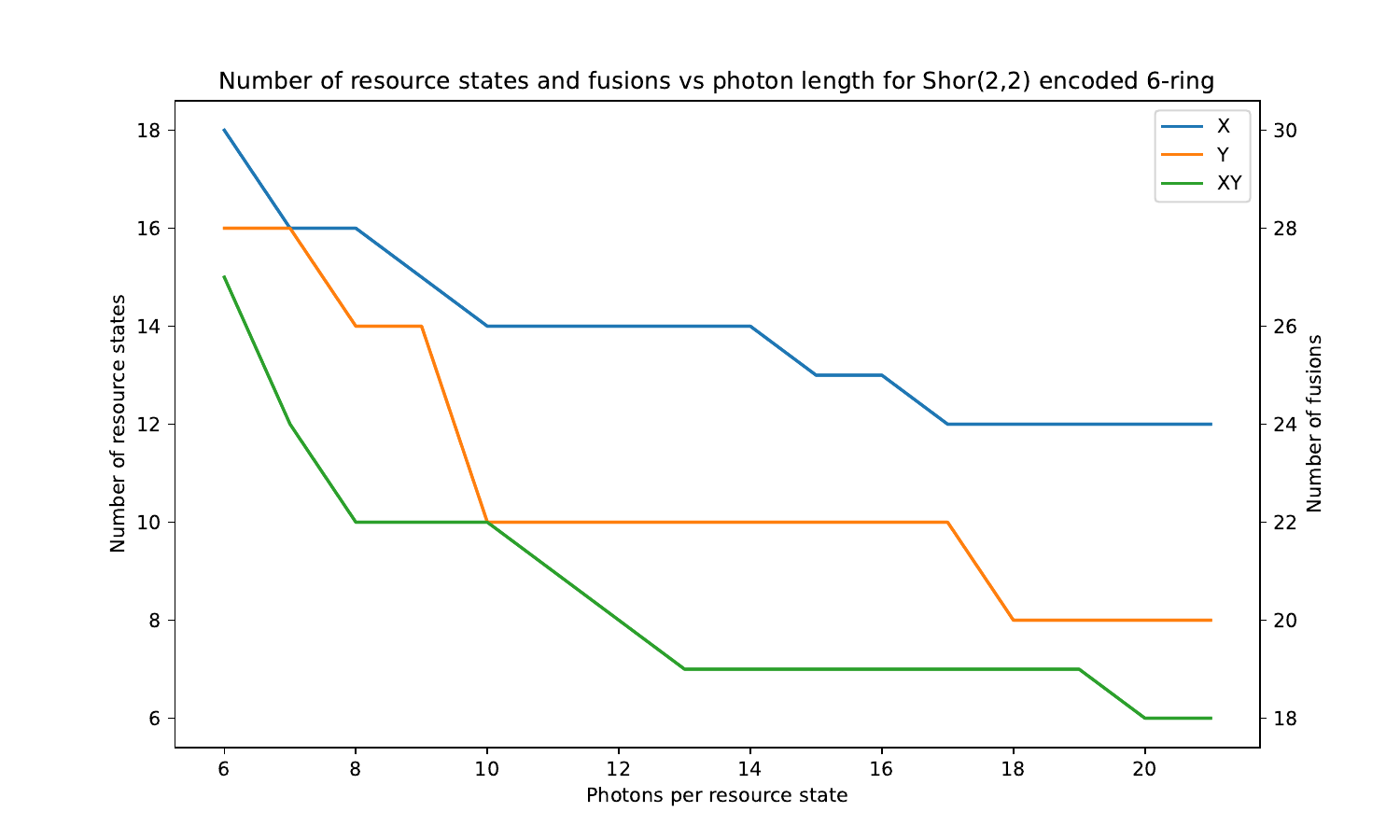}
\caption{Number of fusions and resource states required to implement the Shor(2,2) encoded 6-ring graph state using the algorithms from Section~\ref{sec:approximation} for different resource state lengths. The number of fusions and resource states are linearly dependent from Theorem~\ref{thm_num_fusions}.}
\label{fig_resource_requirements_shor_encoded_six_ring}
\end{figure}

\subsection{Impact of graph rewrites}

\begin{table*}[t]
    \centering
    \begin{tabular}{|c|ccc|ccc|}
    \hline
    Fusion Type & Fusions & Photons & Resources \\
    \hline
    Y & -10.74\% & -7.16\% & 1.47\% \\
    X & -10.02\% & -6.61\% & -2.32\% \\
    XY & -10.02\% & -6.39\% & 0.75\% \\
    \hline
    \end{tabular}
    \caption{Average percentage change in fusions, photons, and resources required to implement a graph state after applying graph rewrite rules 1, 2, and 3 in the B2 QASMBench benchmark set.}
    \label{tab_avg_graph_rewrites}
\end{table*}

\begin{table*}[t]
\centering
\begin{tabular}{|c|c|c|c|c|}
\hline
Vertices & Before Optim & Greedy & Sim. Annealing & Optimum \\
\hline
3 & 0.50 & 0.00 & 0.00 & 0.00 \\
4 & 1.50 & 0.50 & 0.33 & 0.33 \\
5 & 2.52 & 0.71 & 0.71 & 0.52 \\
6 & 4.14 & 1.67 & 1.50 & 1.05 \\
7 & 5.98 & 2.80 & 2.46 & 1.65 \\
\hline
\end{tabular}
\caption{Average number of X fusions required to implement small graph states of up to $7$ nodes, before and after applying Rewrites 2 and 3. Simulated annealing was performed over 50 iterations.}
\label{tab_benchmarks_l=2}
\end{table*}

Applying the graph rewrite rules from Section~\ref{sec:graphrewrites} further improves performance metrics, as demonstrated in Table~\ref{tab_avg_graph_rewrites}. Our graph rewrite heuristics significantly reduce fusion and photon requirements for implementing fusion networks with unbounded resource states, achieving approximately 11\% reduction in fusions and 9\% reduction in photons. 

The graph rewrites introduced in this work build upon and extend earlier techniques presented in~\cite{cite_qec_benchmarks}, which identified patterns in source graphs where local complementations could reduce the number of resources required in fusion networks consisting of 3-star cluster states (referred to as two-trails in our discussion). Our approach provides a more general framework for applying local complementations that is well-suited to longer resource states.

The number of required resource states remained relatively unchanged in most graphs. A notable exception being the $X$ fusion networks for repeater error correcting codes which increas by more than 50\%. This is because each repeater code contains a fully connected subgraph where every vertex is even. When one of these vertices is locally complemented, it removes almost all edges in the subgraph but makes every vertex odd. Since more edges are removed than odd vertices created, the total number of fusions decreases and the action in accept by our algorithm. However, since the number of resource states equals half the number of odd vertices (Theorem~\ref{thm_min_trail_decomp}), the size of the fusion network is greatly increased.

Repeater graphs could benefit from $n$-ary $Y$ fusions —-- generalized $Y$ fusions applied simultaneously to $n$ nodes with success probability $2^{-(n-1)}$. This approach would improve upon performing $\frac{1}{2}n(n-1)$ pairwise $Y$ fusions or complementing an $n$-clique (success probability $2^{-n}$). Analysis of $n$-ary fusions remains for future work.

Table~\ref{tab_benchmarks_l=2} below reports the number of $X$ fusions required after applying the graph rewrites on all graphs with seven vertices or less, and a variation where we apply the rewrites using a simulated annealing strategy to overcome getting stuck in local minima. Alongside these results we have included the global minimum obtained by exhaustively searching across all possible rewrites. The results highlight the power of graph rewriting, however we note that large improvements can be easily obtained from dense graphs. 

\subsection{Probability of success in RUS architecture}

We now analyze a specific class of architectures comprising caterpillar state generators, routers, repeat-until-success fusion modules~\cite{cite_rus_original}, and measurement modules.

Consider an emitter that deterministically produces caterpillar states with photon emission rate $t_p$ and coherence time $T$. The maximum number of entangled photons it can emit is $L = \floor{T/t_p}$. We assume that any caterpillar state achievable with $L$ photons can be generated.

In addition, we also assume arbitrary photon routing capabilities across different time bins, and fusion measurements with success probability $p_F$ that can be performed through repeat-until-success up to $R$ times. In addition to the fusion failure probability, we incorporate photon loss, assuming every photon has the same probability $\epsilon$ of being lost.

Every time a fusion is attempted, we have the following possible outcomes:
\begin{enumerate}
    \item (success) Both photons are measured and the fusion succeeds, with probability $(1 - p_F)(1 - \epsilon)^2$,
    \item (failure) Both photons are measured and the fusion fails, with probability $p_F(1 - \epsilon)^2$,
    \item (loss) One or zero photons are measured, with probability $\epsilon (2 - \epsilon)$. In this case, the fusion induces Z phase-flip errors on the target qubits with probability $\frac{1}{2}$. 
\end{enumerate}

In a repeat-until-success protocol, we must choose in which of these cases we repeat the computation.

In \cite{wein_minimizing_2024}, the fusion operation is repeated only in the heralded failure case, 
and we terminate the computation every time a photon is lost, or the maximum number of trials is reached.
The RUS fusion success probability then becomes:
\begin{equation}
    p_{RUS}(p_F, \epsilon, R) = (1 - p_F) (1 - \epsilon)^2 \sum_{n = 0}^R (p_F(1 - \epsilon)^2)^n
\end{equation}
This probability post-selects away the loss outcomes of the fusion and thus ensures the successful implementation of the state without the need for error correction.
We call $p_{RUS}$ the \emph{post-selected} probability of success.

In \cite{lee_nearly_2015}, the fusion operation is repeated when either a heralded failure or a loss outcome occurs. 
The computation is only terminated when we reach the maximum number of trials $R$. The probability of success of the RUS fusion is then: 
\begin{equation}\label{eq:prob-upper-bound}
    p^{EC}_{RUS}(p_F, \epsilon, R) = 1 - (1 - p_F(1 - \epsilon)^2)^R
\end{equation}
Note however, that the graph state constructed with this probability will potentially have Z phase-flip errors on its nodes.
The above probability assumes that these phase-flip errors are detected and corrected in an ambient error-correcting code. 
We call $p^{EC}$ the \emph{error corrected} probability of success, it may be seen as an upper bound 
to the success probability of a RUS fusion operation after error correction.
We always have $p \leq p^{EC}$ and $p = p^{EC}$ when $\epsilon = 0$.

The overall success probability for graph state preparation depends on all fusions succeeding. 
With $F$ total fusions, each attempted up to $R$ times, the total success probability becomes:
\[
p_{success} = p_{RUS}(p_F, \epsilon, R)^F 
\]

This objective allows us to formulate the compilation problem as follows:

\begin{definition}[Compiling RUS fusion networks]
Given an MBQC pattern expressed as an open graph $(G, I, O, \lambda, \alpha)$, a maximum number of photons per resource state $L$, fusion success probability $p_F$, and available fusion types ($X$, $Y$, or both), return a fusion network where each resource state contains at most $L$ photons that implements the MBQC pattern with maximum success probability.
\end{definition}

Our overall compilation algorithm proceeds as follows:

\begin{enumerate}
\item Apply graph rewrite rules from Section~\ref{sec:graphrewrites} to reduce the input graph $G$.
\item For each possible number of RUS attempts $R$:
\begin{enumerate}
\item Approximate the minimum photon-restricted trail cover for the reduced graph with at most $L$ photons per resource state.
\item Calculate the success probability $p_{success}$ 
\end{enumerate}

\item Select the value of $R$ that maximizes the overall success probability and return the corresponding fusion network.
\end{enumerate}

\begin{figure}[h]
\centering
\begin{subfigure}[h]{\linewidth}
\centering
\includegraphics[width=0.8\linewidth]{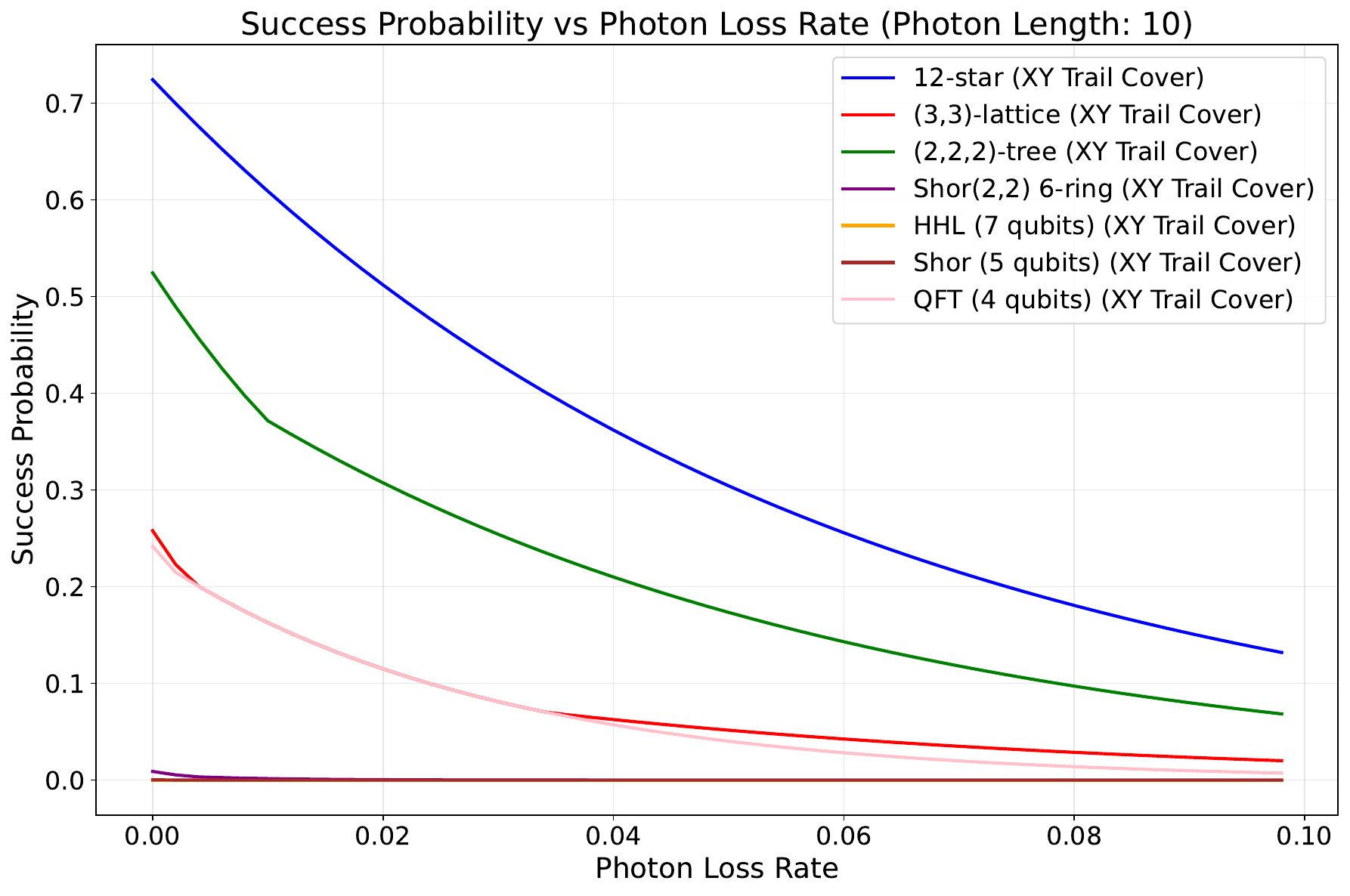}
\end{subfigure}
\begin{subfigure}[h]{\linewidth}
\centering
\includegraphics[width=0.8\linewidth]{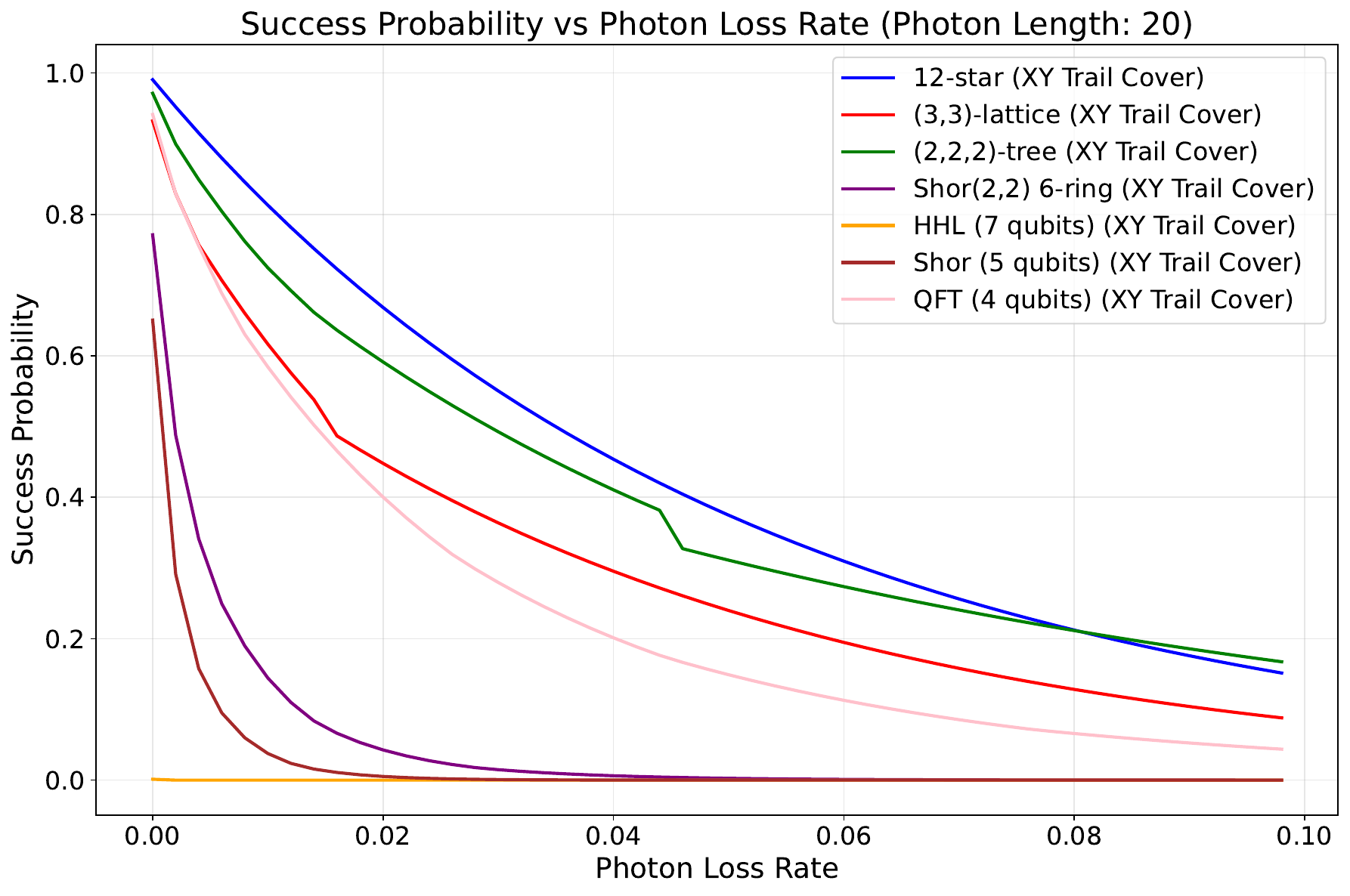}
\end{subfigure}
\caption{Post-selected probability of successfully implementing selected graphs against the photon loss probability for resource states with 10 and 20 photons. The plot for resource states with at most 10 photons appears sparse as most graphs have a near-zero probability of success even without photon loss.}
\label{fig_spoqc_probabilities_vs_photon_loss}
\end{figure}

\begin{figure}[h]
\centering
\includegraphics[width=0.8\linewidth]{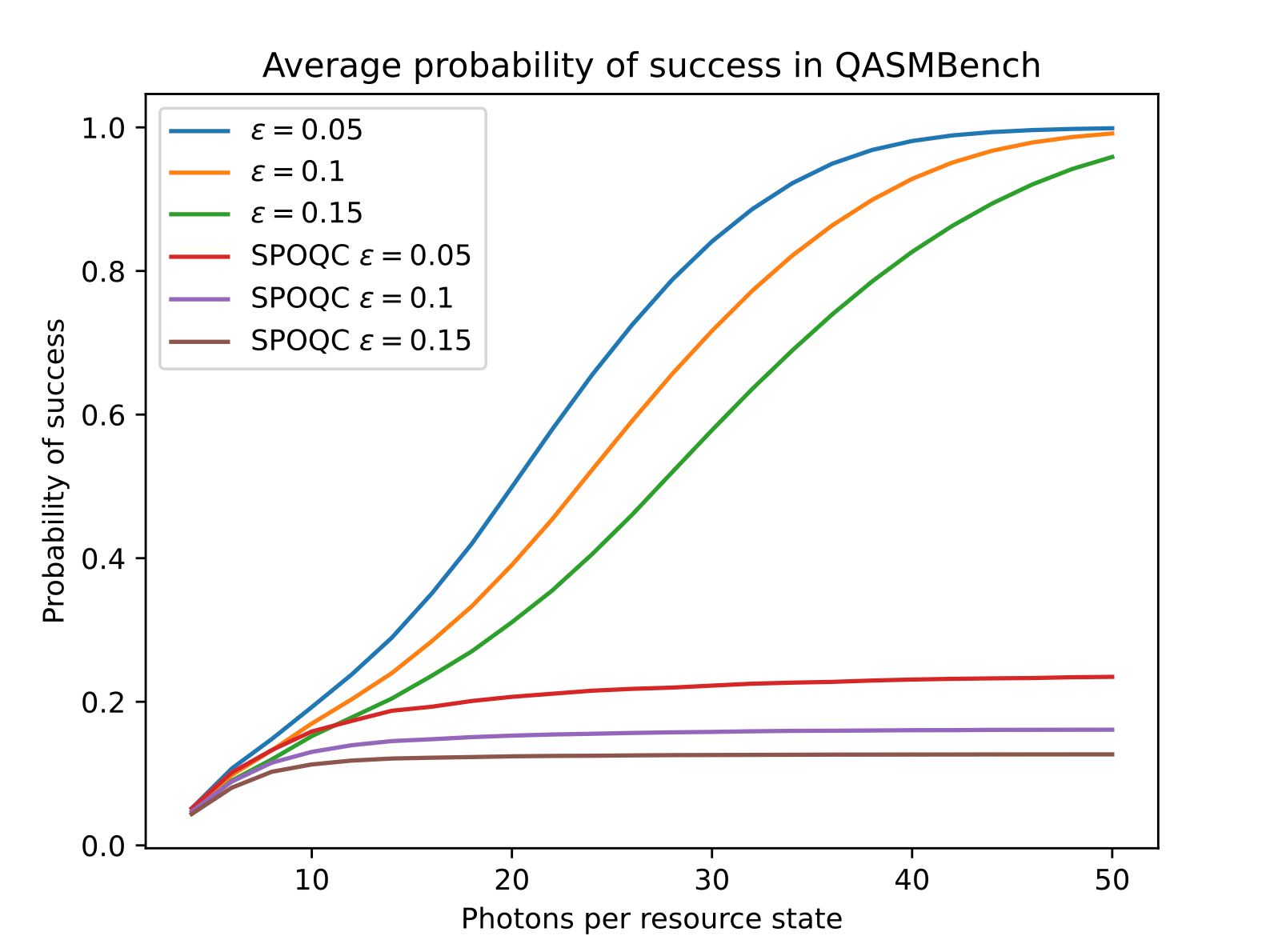}
\caption{Average probability of success for the QASMBench benchmark set comparing post-selected and error-corrected probabilities of success for different photon lengths.}
\label{fig_probabilities_of_success_comparision}
\end{figure}

Figure~\ref{fig_spoqc_probabilities_vs_photon_loss} illustrates the decrease in post-selected probability of success for selected graphs as the probability of photon loss increases.
When resource states are limited to 10 photons most graphs have near-zero post-selected success probability, even without photon loss. 
This changes significantly when we increase the limit to 20 photons per resource state. 

Further increasing the size of resource states beyond 20 photons only marginally iproves the post-selected probability.
This observation is supported by Figure~\ref{fig_probabilities_of_success_comparision}, which compares the average post-selected probability of success with the average error-corrected probability of success. 
The gap between the two plots represents the potential for improvement if errors induced by failed fusions in the repeat-until-success protocol could be corrected.
We observe that the success probability increase steadily up to a critical resource state length after which the benefit of increasing RUS fusions is mitigated by the presence of photon loss.
The post-selected probability of success begins to plateau at around 15 photons per resource state in the QASMBench benchmark set. 
This is unlike the error-corrected model, where increasing repeat-until-success trials plateaus much later, 
showing the potential benefits of larger resource states in a fault-tolerant implementation.

\section{Conclusion}

Graph state generation from resource states and fusion measurements is a key component of photonic implementations of measurement-based quantum computing \cite{cite_orig_fbqc}.
Because of the probabilistic nature of linear optical entanglement, it is essential to minimise the number of fusion measurements required to build a given graph state.
In this work, we proposed a formalisation of this compilation problem in terms of trail covers where the objective minimizes both the number of fusions and the number of linear resources required to build the graph, see Theorem~\ref{thm_num_fusions}.
We then gave complexity-theoretic reductions showing that this problem is NP-hard in most cases, Section~\ref{sec:complexity}. 
In practice, however, it is possible to find approximate solutions to these problems for which we developed efficient algorithms and heuristic strategies in Section~\ref{sec:approximation}.
Leveraging the rich theory of graph states \cite{duncan_graph-theoretic_2019,backens_there_2021}, in Section~\ref{sec:graphrewrites}, we further proposed graph rewriting strategies that reduce the number of fusions required to build a given quantum state.
Our benchmarks, in Section~\ref{sec:benchmarks}, analyse the number of fusions, photons and resource states required to build common error correcting codes from \cite{cite_qec_benchmarks,og_fbqc} and circuits from the QASM benchmark \cite{cite_qasm_bench}. 
The results show that (i) using different fusion types offers substantial improvements in all metrics,
(ii) the XY trail covers produced by our algorithm closely approach a theoretical lower bound (especially in large sparse graphs),
and (iii) our rewriting heuristics allow for further reductions in the number of fusions and photons, at the cost of a slight increase in the number of resource states.

Moreover, we establish error bounds on the algorithms for finding trail covers and trail decompositions. For any graph state with corresponding graph $G$, we are able to construct an $X$ fusion network with resource states of length $L$ with no more than $\frac{1}{2}|\Odd(G)|(1 - \frac{3}{L}) + 1$ fusions that the minimum (Proposition~\ref{prop_photon_trail_approx}).
For $XY$ fusion networks our algorithm first constructs an X fusion network and uses a heuristic to remove resource states and replace them with $Y$ fusions to reduce the total number of fusions. Our heuristic is not guaranteed to find such trail and so it may produce up to $\frac{1}{2}|\Odd(G)| - 1$ more fusions than the minimum. However, the benchmarks have shown the heuristic to be highly effective on real-world algorithms and closely approximating the minimum

Our framework for finding trail covers unifies previous compilation work in fusion-based architectures \cite{cite_qec_benchmarks, cite_vienna_team,wein_minimizing_2024}, 
offering more flexibility in the length of resource states, the fusion type, and the rewriting strategy (using Z-insertions/deletions in addition to local complementation).
Note that our approach separates the fusion and local complementation stages in the protocol.
As shown for small graphs in \cite{lobl_generating_2025}, intermediary local complementations can further minimise the number of fusions required.
A generalisation of our notion of trail cover would be needed to capture the setting with local complementations and optimise 
the full compilation problem on larger graphs, although we expect that the asymptotic complexity of the problem will remain $NP$-hard.

We analysed the probability of successful graph state generation in a repeat-until-success fusion-based architecture \cite{lee_nearly_2015, gliniasty_spin-optical_2024,hilaire_enhanced_2024}, 
indicating resource cost lower bounds for the implementation of graph states in the presence of photon loss.
We only analysed the probability of all fusions being successful, and its boosting via the RUS protocol. 
In practice however, some fusion failures and photon losses could be tolerated if the logical quantum state is redundantly encoded in a larger graph state \cite{bell_optimizing_2023, lobl_loss-tolerant_2024}.
We leave the analysis of loss-tolerance to future work, noting that different trail covers or graph rewrites may lead to different loss-tolerant properties on the encoded graph states.

\bibliographystyle{quantum}
\bibliography{references}

\appendix
\section{Bounded Trail cover estimations}

\begin{restatable}{lemma}{lemNumberTheory}\label{lem_number_theory}
Let $(t_i)_{i=1}^N$ and $L$ be positive integers. Then
\begin{equation}\label{eq_mod_in_ceiling}
\sum_{i = 1}^N \ceil{\frac{t_i}{L}} - \ceil{\sum_{i = 1}^N \frac{t_i}{L}} = \sum_{i = 1}^N \ceil{\frac{t_i \Mod L}{L}} - \ceil{\sum_{i = 1}^N \frac{t_i \Mod L}{L}}.
\end{equation}
\end{restatable}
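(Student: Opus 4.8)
The plan is to normalize both ceilings by pulling out their integer parts. For each $i$, write $t_i = q_i L + r_i$ where $q_i = \floor{t_i/L} \in \Z$ and $r_i = t_i \Mod L \in \{0, 1, \ldots, L-1\}$. The only elementary fact I need is that $\ceil{n + x} = n + \ceil{x}$ for every integer $n$ and real $x$.

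First I would rewrite the left-hand side term by term. Since $t_i/L = q_i + r_i/L$ with $q_i$ an integer, we get $\ceil{t_i/L} = q_i + \ceil{r_i/L}$, and hence $\sum_{i=1}^N \ceil{t_i/L} = \sum_{i=1}^N q_i + \sum_{i=1}^N \ceil{r_i/L}$. Likewise $\sum_{i=1}^N t_i/L = \bigl(\sum_{i=1}^N q_i\bigr) + \sum_{i=1}^N r_i/L$, and since $\sum_{i=1}^N q_i \in \Z$ this gives $\ceil{\sum_{i=1}^N t_i/L} = \sum_{i=1}^N q_i + \ceil{\sum_{i=1}^N r_i/L}$.

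Subtracting these two identities, the $\sum_{i=1}^N q_i$ terms cancel and we are left with
\[
\sum_{i = 1}^N \ceil{\frac{t_i}{L}} - \ceil{\sum_{i = 1}^N \frac{t_i}{L}} = \sum_{i = 1}^N \ceil{\frac{r_i}{L}} - \ceil{\sum_{i = 1}^N \frac{r_i}{L}},
\]
which is exactly the right-hand side of~\eqref{eq_mod_in_ceiling} once we substitute back $r_i = t_i \Mod L$. That finishes the proof.

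There is no genuine obstacle here; the only point requiring a word of care is the edge case $L \mid t_i$, where $r_i = 0$ and $\ceil{r_i/L} = 0$, so the decomposition $\ceil{t_i/L} = q_i + \ceil{r_i/L}$ still holds with no change. I would flag this explicitly. The lemma is a purely arithmetic normalization step whose purpose is to justify reducing each $t_i$ modulo $L$ in the subsequent bound (Lemma~\ref{lem_number_theory_bounds}), so the write-up can stay short.
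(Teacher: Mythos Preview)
Your proof is correct and essentially identical to the paper's: both write $t_i/L = \floor{t_i/L} + (t_i \Mod L)/L$, pull the integer part $\floor{t_i/L}$ out of each ceiling via $\ceil{n+x}=n+\ceil{x}$, and then subtract so that the $\sum_i \floor{t_i/L}$ terms cancel. The paper's write-up is slightly terser but follows the same line, including the same two intermediate identities you derived.
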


\begin{proof}
For positive integers $a$ and $b$, we have $\frac{a}{b} = \floor{\frac{a}{b}} + \frac{a \Mod b}{b}$. Therefore
\begin{equation}\label{eq_lem_number_theory_1}
\sum_{i = 1}^N \ceil{\frac{t_i}{L}} = \sum_{i = 1}^N \floor{\frac{t_i}{L}} + \sum_{i = 1}^N \ceil{\frac{t_i \Mod L}{L}}.
\end{equation}
and
\begin{equation}\label{eq_lem_number_theory_2}
\ceil{\sum_{i = 1}^N \frac{t_i}{L}} = \sum_{i = 1}^N \floor{\frac{t_i}{L}} + \ceil{\sum_{i = 1}^N \frac{t_i\Mod L}{L}}
\end{equation}
Subtracting~\eqref{eq_lem_number_theory_2} from~\eqref{eq_lem_number_theory_1} gives
\begin{align*}
\sum_{i = 1}^N \ceil{\frac{t_i}{L}} - \ceil{\sum_{i = 1}^N \frac{t_i}{L}} &= \sum_{i = 1}^N \ceil{\frac{t_i \Mod L}{L}} - \ceil{\sum_{i = 1}^N \frac{t_i \Mod L}{L}}.
\end{align*}
\end{proof}

\lemNumberTheoryBounds*

\begin{proof}
By Lemma~\ref{lem_number_theory}, it is sufficient to show that
\begin{equation}\label{eq_optimise_ceilings}
\sum_{i = 1}^N \ceil{\frac{t_i \Mod L}{L}} - \ceil{\sum_{i = 1}^N \frac{t_i \Mod L}{L}}
\end{equation}
maximised when $t_i \Mod L = 1$ for all $1 \le i \le N$ and is equal to $N - \ceil{\frac{N}{L}}$.

Suppose that for some $1 \le i \le N$ we have $t_i \Mod L = 0$. Then if instead we had $t_i \Mod L = 1$, the first sum of~\eqref{eq_optimise_ceilings} increases by one and the second sum increase by at most one. Hence~\eqref{eq_optimise_ceilings} does not decrease.
Suppose now that $t_i \Mod L > 1$. Then if $t_i \Mod L = 1$, first sum in~\eqref{eq_optimise_ceilings} will not change and the second sum may decrease by one. Hence~\eqref{eq_optimise_ceilings} will not decrease in this case either.

Therefore the configuration where $t_1 \Mod L = \cdots = t_N \Mod L = 1$ is no less than any other assignment of $(t_i)_{i=1}^N$ and is therefore the maximum. Substituting these values into~\eqref{eq_optimise_ceilings} gives us the desired equality.
\end{proof}

\propAverageLTrailApprox*

\begin{proof}
Let $\M{T}$ be a minimum trail decomposition, and suppose that suppose that $N$ of the trails in $\M{T}$ are a multiple of $L$. Then on average the remaining trails $T_i$ have length $T_i \mod L = \frac{1}{2}L$. Therefore substituting into~\eqref{eq_optimise_ceilings} we have at most
\[
K -N - \frac{(K - N)(\frac{L}{2})}{L} = \frac{K-N}{2}
\]
more trails than the minimum. This reaches a maximum of $\frac{K}{2}$ when $N = 0$. Therefore the expected number of trails is at most $\frac{K}{2}$ more than the minimum. Substituting $K = \frac{1}{2}|\Odd(G)|$ gives the result.
\end{proof}

\propConvertToMTD*
\begin{proof}
Let $\M{T}$ be a trail decomposition of $G$ obtained by applying the two rules on some trail decomposition. We will show that all trails in the decomposition satisfy the definition of belonging to a minimum trail decomposition.

Let $T \in \M{T}$. If $T$ is the only trail in its connected component, then it belongs to a minimum trail decomposition and we are done. Now suppose that $T$ is not the only trail in its connected component. Then no other trail in $\M{T}$ can end at the same vertices that $T$ ends at since otherwise we could have joined them. This also implies that $T$ ends at distinct odd vertices or both endpoints are at an even vertex and $T$ is closed. However, if $T$ is closed we would have been able to apply the second rule to join it with another trail in the connected component. Thus $T$ ends at distinct odd vertices. Since $T$ was arbitrary, this implies that every trail in the connected component of $T$ also ends at distinct odd vertices and therefore $\M{T}$ is a minimum trail decomposition on the connected component of $T$. We can therefore conclude that every $\M{C}T$ is minimum trail decomposition on any connected component of $G$ and thus $\M{T}$ is a minimum trail decomposition on $G$.
\end{proof}

\propLTrailsAreSubdivisions*

\begin{proof}
Suppose we have a minimum $L$-trail decomposition $\M{T}$ of $G$. Then by using the reduction in Proposition~\ref{prop_convert_to_mtd}, we are able to convert it into a minimum trail decomposition $\M{T}^\prime$ of $G$.
Since for each trail $T$ in $\M{T}$, every edge in $T$ is included in exactly one trail in $\M{T}^\prime$, subdividing each trail of $\M{T}^\prime$ will produce a minimum $L$-trail decomposition.
\end{proof}

\propLTrailApprox*

\begin{proof}
Suppose we have a minimum trail decomposition $\M{T} = \{T_1, \ldots, T_K\}$ for some integer $K$. Then subdividing each trail into $L$-trails produces an $L$-trail decomposition of size $\sum^K_{i=1} \ceil{\frac{|T_i|}{L}}$ where $|T_i|$ is the number of edges in the trail $T_i$.

A lower bound for the minimum number of trails in an $L$-trail decomposition is $\ceil{\frac{|E|}{L}}$, or equivalently $\ceil{\frac{1}{L}\sum_{i = 1}^K |T_i|}$. Hence the difference between the number of trails in $\M{T}$ and in the minimum $L$-trail decomposition is at most
\[
\sum_{i = 1}^K \ceil{\frac{|T_i|}{L}} - \ceil{\sum_{i = 1}^K \frac{|T_i|}{L}} \le K - \ceil{\frac{K}{L}} = \floor{K\left(1 - \frac{1}{L}\right)}.
\]

where the inequality follows from an application of Lemma~\ref{lem_number_theory_bounds}. 

From Theorem~\ref{thm_min_trail_decomp} we know that $K = \frac{1}{2}|\Odd(G)|$ if $|\Odd(G)| > 0$ and $K = 1$ otherwise. However, in the case where $|\Odd(G)| = 0$, this subdivision gives a minimum $L$-trail decomposition anyway, so the proposition still holds.
\end{proof}

\propPhotonTrailApprox*

\begin{proof}
Follow a similar argument to the proof of Proposition~\ref{prop_ltrail_approx}.

Suppose we have a minimum trail decomposition $\M{T} = \{T_1, \ldots, T_K\}$ that implements the open graph $G$, and suppose the resource state corresponding to the trail $T_i$ contain $P_i$ photons.
Then if we subdivide each resource state so that each has $L$ photons, we will have
\[
\sum^K_{i=1} \ceil{\frac{P_i - 2}{L - 2}}
\]
resource states in total where $K = \frac{1}{2}|\Odd(G)|$ if $|\Odd(G)| > 0$ and $K = 1$ otherwise. A lower bound for the minimum number of resource states is $\ceil{\frac{(\sum^K_{i=1} P_i) - 2}{L-2}}$. Therefore the difference between the number of resource states in $\M{T}$ and in the minimum trail decomposition is at most

\begin{align*}
\sum^K_{i=1} \ceil{\frac{P_i - 2}{L - 2}} &- \ceil{\frac{(\sum^K_{i=1} P_i) - 2}{L - 2}} \\
&= \sum^K_{i=1} \ceil{\frac{P_i - 2}{L - 2}} - \ceil{\frac{\sum^K_{i=1} P_i - 2}{L - 2} + \frac{2K - 2}{L - 2}} \\
&\le \sum^K_{i=1} \ceil{\frac{P_i - 2}{L - 2}} - \ceil{\frac{\sum^K_{i=1} P_i - 2}{L - 2}} - \ceil{\frac{2K - 2}{L - 2}} \\
&\le K - \ceil{\frac{K}{L-2}} - \ceil{\frac{2K - 2}{L - 2}} \\
&\le K - \ceil{\frac{3K - 2}{L-2}} + 1 \\
&\le K - \frac{3K - 2}{L-2} + 1 \\
&\le K - \frac{3K}{L-2} + 1
\end{align*}
Substituting $K = \frac{1}{2}|\Odd(G)|$ completes the proof.
\end{proof}

\section{Compilation}

\begin{lemma}\label{lem_subdivision_eq}
Subdividing a linear resource state with $P \ge 3$ photons into resource states each with at most $L \ge 3$ photons using either $X$ or $Y$ fusions will result in
\begin{equation}\label{eq_photon_subdivision}
\ceil{\frac{P - 2}{L-2}}
\end{equation}
resource states.
\end{lemma}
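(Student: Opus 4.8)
The plan is to reduce the claim to a single photon–counting identity: if the original resource state is cut into $k$ pieces and the pieces are re‑glued by $k-1$ fusions, then the total number of photons, summed over all resulting resource states, equals exactly $P + 2(k-1)$, and this is independent of whether the gluing fusions are of type $X$ or of type $Y$. Granting this invariant, both the lower and the upper bound on the number of pieces follow by elementary arithmetic with ceilings.

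First I would set up the bookkeeping using the weight convention already in the text, namely that a resource state realising a trail $T$ carries one photon per node of $T$ together with one photon for every fusion incident to a node of $T$. Cutting the line carried by the original resource state at an edge and re‑gluing with a $Y$ fusion leaves the node set unchanged and creates exactly two new fusion incidences, one at each endpoint produced by the cut. Cutting at an internal vertex $v$ and re‑gluing with an $X$ fusion instead introduces a duplicate copy of $v$ in the neighbouring piece, but this copy carries only its fusion photon — the merged vertex of the target graph still needs just a single measurement photon, supplied by the other copy — so once more the net change is exactly $+2$ photons. Iterating over the $k-1$ cuts gives total photon count $P + 2(k-1)$ for any subdivision into $k$ pieces, regardless of fusion type, which is the invariant.

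For the lower bound I would observe that if every piece has at most $L$ photons then $P + 2(k-1) \le kL$, i.e.\ $P - 2 \le k(L-2)$; since $L \ge 3$ we have $L-2 \ge 1$, so this forces $k \ge \ceil{\tfrac{P-2}{L-2}}$. For the matching upper bound I would exhibit an explicit subdivision: cut the line greedily from one end, making each successive piece as long as the $L$‑photon budget allows, so that each interior piece holds $L-2$ original nodes plus its two fusion photons and each of the two end pieces holds up to $L-1$ original nodes plus one fusion photon. The total node capacity of $k$ such pieces is $k(L-2)+2$, which is at least $P$ exactly when $k \ge \tfrac{P-2}{L-2}$, so $k = \ceil{\tfrac{P-2}{L-2}}$ pieces suffice; combined with the lower bound this yields the stated count.

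The main obstacle is the bookkeeping step: arguing cleanly, and uniformly for $X$ and $Y$ fusions, that a cut costs precisely two photons rather than three, the delicate point being that an $X$ fusion duplicates a graph vertex but not its measurement photon. Once that invariant is pinned down the remainder is just the arithmetic of ceilings together with the one‑line greedy construction, and the hypotheses $P \ge 3$, $L \ge 3$ are exactly what is needed to keep $L-2 \ge 1$ (so the ceiling is well defined) and to make the degenerate case $k=1$ — no subdivision needed when $P \le L$ — come out correctly.
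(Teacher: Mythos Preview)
Your proof is correct, and it takes a somewhat different route from the paper's. The paper argues purely constructively: it describes the greedy subdivision (each end piece carries $L-1$ of the original photons plus one new fusion photon; each interior piece carries $L-2$ original photons plus two new fusion photons), and then computes directly that the middle segment of $P - 2(L-1)$ photons splits into $\ceil{\frac{P-2(L-1)}{L-2}} = \ceil{\frac{P-2}{L-2}} - 2$ pieces, so adding the two end pieces gives the claimed total. No lower bound is proved; the lemma is read as a statement about what the natural subdivision procedure yields.

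You instead isolate the additive invariant ``each cut, $X$ or $Y$, costs exactly two extra photons'' and use it twice: once to get the inequality $P + 2(k-1) \le kL$, hence the lower bound $k \ge \ceil{\frac{P-2}{L-2}}$, and once to check that the greedy construction attains it. This is a genuinely more informative argument, since it shows the greedy subdivision is \emph{optimal}, not merely that it achieves the stated count. Your treatment of the $X$-fusion bookkeeping --- that the duplicated endpoint contributes only its fusion photon, the single measurement photon for the merged target vertex being supplied by the other copy --- is exactly the ``one additional photon per side'' accounting the paper uses implicitly in its own proof, so the invariant is on solid ground. One small slip: in your greedy construction you write ``$L-2$ original nodes'' where you mean ``$L-2$ original photons''; the arithmetic is unaffected.
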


\begin{proof}
If $P \le L$ then \eqref{eq_photon_subdivision} holds. Now assume $P > L$, then after subdivision the two ends of the resource state will belong to distinct resource states, each with $L-1$ photons from the original resource state and one additional photon for the fusion arising from the subdivision. The remaining part of the original resource state without the two ends will have $P - 2(L - 1)$ photons and be subdivided into pieces containing $L - 2$ photons from the original resource state and two additional photons, one on either end for fusions arising from the subdivision. This will therefore be split into
\[
\ceil{\frac{P - 2(L - 1)}{L-2}} = \ceil{\frac{P - 2}{L-2}} - 2
\]
resource states. Adding the two resource states on either end gives us \eqref{eq_photon_subdivision}.
\end{proof}

\lemLowerBoundPhotonRestricted*

\begin{proof}
In the resources state, there is one photon for each node in the final graph for measurement or output and two photons for every fusion. Therefore there are
\[
P = 2F + |V| = 2|E| - |V| + 2.
\]
in the original resource state.
After subdivision, we may apply Lemma~\ref{lem_subdivision_eq} to the fusion equation to compute the number of fusions as
\[
F = |E| - |V| + \ceil{\frac{P - 2}{L - 2}} = |E| - |V| + \ceil{\frac{2|E| - |V|}{L - 2}}
\]
fusions.
\end{proof}

\section{Counterexample for triangle complementation for bounded trail decompositions}\label{appendix_sec_tri_complementation_counterexample}

Complementing the triangle does not always produce a graph that admits a smaller bounded trail decomposition. The counterexample below illustrates a minimum $4$-trail decomposition on a graph.

\begin{figure}[H]
\centering
\begin{tikzpicture}
	\begin{pgfonlayer}{nodelayer}
		\node [style=vertex] (0) at (0, 2) {};
		\node [style=vertex] (1) at (0.75, 2) {};
		\node [style=vertex] (2) at (-0.25, 2.5) {};
		\node [style=vertex] (3) at (-0.5, 3) {};
		\node [style=vertex] (4) at (0.25, 2.5) {};
		\node [style=vertex] (5) at (0.5, 3) {};
		\node [style=vertex] (6) at (0.75, 2.5) {};
		\node [style=vertex] (7) at (-0.5, 1) {};
		\node [style=vertex] (8) at (0.5, 1) {};
		\node [style=vertex] (9) at (1, 1) {};
		\node [style=vertex] (10) at (1.5, 1) {};
		\node [style=vertex] (11) at (1, 0.5) {};
		\node [style=vertex] (12) at (1, 0) {};
		\node [style=vertex] (13) at (1.5, 0.25) {};
		\node [style=vertex] (14) at (0.5, 0.5) {};
		\node [style=vertex] (15) at (-0.5, 0.5) {};
		\node [style=vertex] (17) at (-0.5, 0) {};
		\node [style=vertex] (18) at (-1, 1) {};
		\node [style=vertex] (19) at (-1, 0) {};
		\node [style=vertex] (20) at (-1.5, 0.25) {};
		\node [style=vertex] (21) at (-1, 0.5) {};
		\node [style=none] (76) at (2.25, 1.5) {$\rightsquigarrow$};
		\node [style=vertex] (81) at (3.25, 1.25) {};
		\node [style=vertex] (82) at (3.75, 1.25) {};
		\node [style=vertex] (83) at (4.25, 2) {};
		\node [style=vertex] (84) at (4, 2.5) {};
		\node [style=vertex] (85) at (3.75, 3) {};
		\node [style=vertex] (86) at (4.75, 2.25) {};
		\node [style=vertex] (87) at (4.75, 3.25) {};
		\node [style=vertex] (88) at (4.75, 2.75) {};
		\node [style=vertex] (89) at (5.25, 2) {};
		\node [style=vertex] (90) at (5.75, 2) {};
		\node [style=vertex] (91) at (5.75, 1.25) {};
		\node [style=vertex] (92) at (6.25, 1.25) {};
		\node [style=vertex] (93) at (6.75, 1.25) {};
		\node [style=vertex] (94) at (5.75, 0.75) {};
		\node [style=vertex] (95) at (6.25, 0.5) {};
		\node [style=vertex] (96) at (6.25, 0) {};
		\node [style=vertex] (97) at (5.25, 0) {};
		\node [style=vertex] (98) at (5.25, 0.5) {};
		\node [style=vertex] (99) at (4.25, 0.5) {};
		\node [style=vertex] (100) at (4.25, 0) {};
		\node [style=vertex] (101) at (4.25, -0.5) {};
		\node [style=vertex] (102) at (3.75, 0.75) {};
		\node [style=vertex] (103) at (3.25, 0.5) {};
		\node [style=vertex] (104) at (3.25, 0) {};
		\node [style=vertex] (105) at (5.25, 3.25) {};
		\node [style=vertex] (106) at (5.25, 2.75) {};
		\node [style=vertex] (107) at (2.75, 0) {};
		\node [style=vertex] (108) at (2.75, 0.5) {};
		\node [style=vertex] (109) at (6.75, 0) {};
		\node [style=vertex] (110) at (6.75, 0.5) {};
		\node [style=none] (111) at (0, -1.25) {Original};
		\node [style=none] (112) at (5, -1.25) {Minimum 4-trail decomposition};
	\end{pgfonlayer}
	\begin{pgfonlayer}{edgelayer}
		\draw (8) to (9);
		\draw (9) to (10);
		\draw (8) to (14);
		\draw (8) to (11);
		\draw (11) to (13);
		\draw (13) to (12);
		\draw (12) to (11);
		\draw (7) to (8);
		\draw (8) to (0);
		\draw (0) to (2);
		\draw (2) to (3);
		\draw (0) to (1);
		\draw (5) to (6);
		\draw (6) to (4);
		\draw (4) to (5);
		\draw (4) to (0);
		\draw (0) to (7);
		\draw (19) to (20);
		\draw (21) to (19);
		\draw (21) to (20);
		\draw (21) to (7);
		\draw (7) to (15);
		\draw (15) to (17);
		\draw (7) to (18);
		\draw (88) to (87);
		\draw (88) to (86);
		\draw (85) to (84);
		\draw (84) to (83);
		\draw (83) to (82);
		\draw (82) to (81);
		\draw (90) to (89);
		\draw (89) to (91);
		\draw (91) to (92);
		\draw (92) to (93);
		\draw (96) to (95);
		\draw (95) to (94);
		\draw (101) to (100);
		\draw (100) to (99);
		\draw (99) to (98);
		\draw (98) to (97);
		\draw (104) to (103);
		\draw (103) to (102);
		\draw [style=X fusion] (83) to (86);
		\draw [style=X fusion] (86) to (89);
		\draw [style=X fusion] (91) to (94);
		\draw [style=X fusion] (94) to (98);
		\draw [style=X fusion] (99) to (102);
		\draw [style=X fusion] (102) to (82);
		\draw [style=X fusion] (106) to (88);
		\draw (105) to (106);
		\draw (105) to (87);
		\draw (108) to (107);
		\draw (107) to (104);
		\draw [style=X fusion] (108) to (103);
		\draw [style=X fusion] (110) to (95);
		\draw (110) to (109);
		\draw (109) to (96);
	\end{pgfonlayer}
\end{tikzpicture}
\end{figure}

If we complement the central triangle, we get the following graph with its minimum 4-trail decomposition.

\begin{figure}[H]
\centering
\begin{tikzpicture}
	\begin{pgfonlayer}{nodelayer}
		\node [style=vertex] (0) at (2, 2) {};
		\node [style=vertex] (1) at (2.75, 2) {};
		\node [style=vertex] (2) at (1.75, 2.5) {};
		\node [style=vertex] (3) at (1.5, 3) {};
		\node [style=vertex] (4) at (2.25, 2.5) {};
		\node [style=vertex] (5) at (2.5, 3) {};
		\node [style=vertex] (6) at (2.75, 2.5) {};
		\node [style=vertex] (7) at (1.5, 1) {};
		\node [style=vertex] (8) at (2.5, 1) {};
		\node [style=vertex] (9) at (3, 1) {};
		\node [style=vertex] (10) at (3.5, 1) {};
		\node [style=vertex] (11) at (3, 0.5) {};
		\node [style=vertex] (12) at (3, 0) {};
		\node [style=vertex] (13) at (3.5, 0.25) {};
		\node [style=vertex] (14) at (2.5, 0.5) {};
		\node [style=vertex] (15) at (1.5, 0.5) {};
		\node [style=vertex] (16) at (1.5, 0) {};
		\node [style=vertex] (17) at (1, 1) {};
		\node [style=vertex] (18) at (1, 0) {};
		\node [style=vertex] (19) at (0.5, 0.25) {};
		\node [style=vertex] (20) at (1, 0.5) {};
		\node [style=none] (51) at (4.25, 1.5) {$\rightsquigarrow$};
		\node [style=vertex] (52) at (2, 1.5) {};
		\node [style=vertex] (53) at (5.5, 1.25) {};
		\node [style=vertex] (54) at (6, 1.25) {};
		\node [style=vertex] (55) at (6.5, 2) {};
		\node [style=vertex] (56) at (6.5, 2.5) {};
		\node [style=vertex] (57) at (7, 2) {};
		\node [style=vertex] (58) at (7, 3) {};
		\node [style=vertex] (59) at (7, 2.5) {};
		\node [style=vertex] (60) at (7.5, 2) {};
		\node [style=vertex] (61) at (8, 2) {};
		\node [style=vertex] (62) at (8.5, 0.75) {};
		\node [style=vertex] (63) at (9, 0.75) {};
		\node [style=vertex] (64) at (9.5, 0.75) {};
		\node [style=vertex] (65) at (8, 0.75) {};
		\node [style=vertex] (66) at (8, 0.25) {};
		\node [style=vertex] (67) at (7.5, 0.25) {};
		\node [style=vertex] (68) at (7.5, 1) {};
		\node [style=vertex] (69) at (6.5, 0.5) {};
		\node [style=vertex] (70) at (6.5, 0) {};
		\node [style=vertex] (71) at (6.5, -0.5) {};
		\node [style=vertex] (72) at (6, 0.75) {};
		\node [style=vertex] (73) at (5.5, 0.5) {};
		\node [style=vertex] (74) at (5.5, 0) {};
		\node [style=vertex] (75) at (7.5, 3) {};
		\node [style=vertex] (76) at (7.5, 2.5) {};
		\node [style=vertex] (77) at (5, 0) {};
		\node [style=vertex] (78) at (5, 0.5) {};
		\node [style=vertex] (79) at (7.5, -0.25) {};
		\node [style=vertex] (80) at (8, -0.25) {};
		\node [style=vertex] (81) at (8.5, 2) {};
		\node [style=vertex] (82) at (6.5, 1.5) {};
		\node [style=vertex] (83) at (7, 1.25) {};
		\node [style=vertex] (84) at (8.5, 0.25) {};
		\node [style=none] (85) at (2, -1.25) {Original};
		\node [style=none] (86) at (7, -1.25) {Minimum 4-trail decomposition};
	\end{pgfonlayer}
	\begin{pgfonlayer}{edgelayer}
		\draw (8) to (9);
		\draw (9) to (10);
		\draw (8) to (14);
		\draw (8) to (11);
		\draw (11) to (13);
		\draw (13) to (12);
		\draw (12) to (11);
		\draw (0) to (2);
		\draw (2) to (3);
		\draw (0) to (1);
		\draw (5) to (6);
		\draw (6) to (4);
		\draw (4) to (5);
		\draw (4) to (0);
		\draw (18) to (19);
		\draw (20) to (18);
		\draw (20) to (19);
		\draw (20) to (7);
		\draw (7) to (15);
		\draw (15) to (16);
		\draw (7) to (17);
		\draw (52) to (8);
		\draw (52) to (7);
		\draw (52) to (0);
		\draw (59) to (58);
		\draw (59) to (57);
		\draw (56) to (55);
		\draw (54) to (53);
		\draw (61) to (60);
		\draw (62) to (63);
		\draw (63) to (64);
		\draw (67) to (66);
		\draw (66) to (65);
		\draw (71) to (70);
		\draw (70) to (69);
		\draw (74) to (73);
		\draw (73) to (72);
		\draw [style=X fusion] (55) to (57);
		\draw [style=X fusion] (57) to (60);
		\draw [style=X fusion] (62) to (65);
		\draw [style=X fusion] (65) to (68);
		\draw [style=X fusion] (69) to (72);
		\draw [style=X fusion] (72) to (54);
		\draw [style=X fusion] (76) to (59);
		\draw (75) to (76);
		\draw (75) to (58);
		\draw (78) to (77);
		\draw (77) to (74);
		\draw [style=X fusion] (78) to (73);
		\draw [style=X fusion] (80) to (66);
		\draw (80) to (79);
		\draw (79) to (67);
		\draw (81) to (61);
		\draw (82) to (55);
		\draw (82) to (54);
		\draw (83) to (68);
		\draw [style=X fusion] (82) to (83);
		\draw (84) to (62);
	\end{pgfonlayer}
\end{tikzpicture}
\end{figure}

The original graph required six 4-trails whereas after complementing the triangle, the new graph required eight.

\end{document}